\documentclass[12pt,reqno]{amsart}
\usepackage{mathtools} 
\usepackage{graphicx}
\usepackage{amssymb,amsmath}
\usepackage{amsthm}
\usepackage{color}
\usepackage[pdf]{pstricks}
\usepackage{hyperref}
\usepackage{empheq}
\usepackage{pgfplots}
\usepackage{amsfonts,amscd,cleveref}

\numberwithin{equation}{section}

\usepackage{epsfig}
\usepackage{extarrows}
\usepackage{caption}
\usepackage{subcaption}
\usepackage{tikz}

\newtheorem{remark}{Remark}

\newtheorem{assumption}{Assumption}

\newtheorem{lemma}{Lemma}
\newtheorem{theorem}{Theorem}

\newcommand{\cO}{\mathcal{O}}

\newcommand{\R}{\mathbb{R}}             
\newcommand{\ve}{\varepsilon}            
\newcommand{\Span}[1]{\mathrm{span}\left\{#1\right\}} 
\newcommand{\tr}[1]{\mathrm{Tr}\left(#1\right)} 
\newcommand{\hot}{h.o.t.}

\title[Bifurcations of stability spectra for steep Stokes waves]{\bf Recurrent bifurcations of stability spectra \\for steep Stokes waves in a deep fluid}

\author[S. Dyachenko]{Sergey Dyachenko}
\address[S. Dyachenko]{Department of Mathematics, State University of New York at Buffalo, Buffalo, New York, USA, 14260}
\email{sergeydy@buffalo.edu}
		
\author[R. Marangell]{Robert Marangell}
\address[R. Marangell]{School of Mathematics and Statistics F07, University of Sydney, NSW 2006, Australia}
\email{robert.marangell@sydney.edu.au}

\author[D. E. Pelinovsky]{Dmitry E. Pelinovsky}
\address[D. E. Pelinovsky]{Department of Mathematics and Statistics, McMaster University, Hamilton, Ontario, Canada, L8S 4K1}
\email{pelinod@mcmaster.ca}

\begin{document}
\maketitle

\begin{abstract} 
We study the modulational stability problem for the traveling periodic waves (called Stokes waves) 
in an infinitely deep fluid by using pseudo-differential operators in conformal variables. We derive 
the criteria and the normal forms for four bifurcations which are repeated recurrently when the steepness of the Stokes wave 
is increased towards the highest wave with the peaked profile. The four bifurcations are observed in the following order: 
(a) new figure-8 bands appearing at each extremal point of speed, 
(b) degeneration of figure-8 bands resulting in vertical slopes, 
(c) new circular bands around the origin appearing at each period-doubling bifurcation, and 
(d) reconnection of figure-$\infty$ bands at each extremal point of energy. 
Our work uses the analytic theory of Stokes waves developed previously for Babenko's equation. 
The novelty of our work is the analytic extension of the modulational stability problem for singular pseudo-differential operators 
in terms of the Floquet parameter.
The derivation of the normal form uses some structural assumptions which are known to be true for the Stokes waves. 
For the first and second bifurcation cycles, we compute numerically with a higher-order accuracy the actual values of wave steepness 
for which the structural assumptions are satisfied and the numerical coefficients of the normal forms
to show the excellent agreement between the normal form theory and the numerical approximations of the spectral bands. 
\end{abstract}

\section{Introduction}

Traveling periodic waves on the surface of an irrotational fluid under the gravity force are called Stokes waves ~\cite{stokes1847theory,stokes1880theory}. The mathematical theory of their existence was elaborated in ~\cite{levi1925determination,nekrasov1921waves,Struik1926}. The family of Stokes waves with the smooth profiles is continued towards a limit, in which the wave profile becomes peaked with the specific 120$^{o}$ angle at the crest. 
The existence of the limiting wave with the peaked profile was proven in~\cite{amick1982stokes,plotnikov2002proof,toland1978existence}. 

The potential motion of a two-dimensional fluid is described by the classical Euler's equations. Although many equivalent formulations are known for 
Euler's equations, we will particularly address the formulation involving pseudo-differential operators in conformal variables. 
When the fluid is considered to be infinitely deep, the conformal transformation maps the vertical semi-strip to 
the fluid domain and does not depend on the mean value of the fluid surface in conformal variable. This formulation leads to Babenko's pseudo-differential equation~\cite{babenko1987some}, which was used in the studies 
of traveling waves with both smooth and peaked profiles \cite{Balk1996,DyachenkoEtAl1996,dyachenko2016branch,locke2024peaked}. 

Based on global bifurcation theory and analysis of the pseudo-differential operators, several important 
results on the existence of Stokes waves were rigorously justified in \cite{BDT2000a,BDT2000b,Plotnikov1992}. 
It was shown in \cite{BDT2000b} that there exist infinitely many bifurcation points for the family of traveling periodic waves 
when it is extended towards the highest peaked wave. These points are identified as zero eigenvalues 
of the associated self-adjoint (Hessian) operator related to the variational formulation. They correspond to the extremal points 
of the wave speed versus wave steepness. The Morse index, which is the number of negative eigenvalues of the Hessian operator, becomes arbitrarily large 
in the limit of the highest peaked wave \cite{Plotnikov1992}. 
Based on the band-gap structure of the spectrum of the linearized operator with periodic coefficients, it 
was also shown in \cite{BDT2000b} that there are infinitely many subharmonic bifurcation points. These bifurcation points 
are identified as zero eigenvalues of the Hessian operators with respect to perturbations of multiple periods, in particular, 
with respect to double-periodic (antiperiodic) perturbations. Construction of traveling waves past the double--period bifurcation via continuation was recently shown in \cite{semenova2025}.

The global bifurcation theory is also related to the stability theory for Stokes waves. Stability of periodic waves with respect to co-periodic perturbations was studied numerically in~\cite{longuet1997crest,longuet1978instabilities,longuet1978theory,saffman1985,tanaka1983stability,tanaka1985}. It was realized in \cite{saffman1985,tanaka1985}
that the new unstable eigenvalues in the linearized stability problem appear at the extremal points of the wave energy, 
rather than at the extremal points of the wave speed. In \cite{DP2025}, we proved this conjecture based on analysis of the pseudo-differential operators in conformal variables and also showed that the extremal points of the wave energy coincide with the extremal points of the horizontal momentum. 

Finally, the modulational stability problem is defined by the spectrum of the linearized operators with respect to perturbations of 
longer periods. The modulational instability of small-amplitude Stokes waves, also known as the Benjamin-Feir instability~\cite{benjamin1967disintegration,zakharov1968stability}, was studied in detail~\cite{berti2023,berti2022full,bridgesmielke,creedon2022ahigh,hur2023,nguyen2020proof}. Additionally, the high-frequency instabilities 
of small-amplitude Stokes waves were studied for the same spectral problem in ~\cite{berti2022full,creedon2021high2,deconinck2011instability}.

Numerical solutions of the modulational stability problem for Stokes waves of increasing steepness 
have been recently developed with high accuracy in~\cite{dyachenko_semenova2022,dyachenko2023quasiperiodic,dyachenko2022almost,korotkevich2022superharmonic,murashige2020stability}. 
The recent comprehensive studies in~\cite{DDS2024,deconinck2022instability} showed that there exist four fundamental 
bifurcations of the spectral bands near the origin, which are repeated recurrently
when the family of Stokes waves is increased towards the highest peaked wave. The four bifurcations are observed in the following 
order: 
(a) new figure-8 bands appearing at each extremal point of speed, 
(b) degeneration of figure-8 bands resulting in their vertical slope, 
(c) new circular bands around the origin appearing at each period-doubling bifurcation, and 
(d) reconnection of figure-$\infty$ bands at each extremal point of energy. 
The necessity of infinitely many bifurcation points for (a) and (c) is proven in \cite{BDT2000b}. The bifurcation points for (d) were studied in \cite{DP2025}. The bifurcation points for (b) have not been seen in the previous 
work on the modulational stability problem in toy models for fluids \cite{BHM24,CuiP2025}.

It is the purpose of this work to study rigorously the modulational stability problem associated with the pseudo-differential operators in conformal variables and to derive the criteria and the normal forms for the four bifurcations 
near the origin. We also compute coefficients of the normal forms numerically for the first and second bifurcation cycles to show an excellent agreement between the normal forms and the numerical approximations of spectral bands near the origin.

\subsection{Equations of motion in conformal variables}
\label{sec-1-1}

Let $y = \eta(x,t)$ be the profile for the free surface of an incompressible and irrotational deep fluid in the $2\pi$-periodic domain $\mathbb{T}$ that depends on time $t \in \mathbb{R}$. For a proper definition of the free surface, we add the zero-mean constraint $\oint \eta(x,t) dx = 0$, which is invariant in the time evolution of Euler's equations. Here and in what follows, the symbol $\oint$ denotes an integral of a periodic function over the period, which is independent of the limits of integration. We use notations $L^2_{\rm per}$ for $2\pi$-periodic squared-integrable functions and similarly, $H^1_{\rm per}$ for $2\pi$-periodic squared-integrable functions with squared-integrable first derivatives.

Let $z(u,t) = \xi(u,t) + i\eta(u,t)$ be a holomorphic function, which maps the vertical strip in the lower complex half-plane $\mathbb{T} \times i(-\infty,0]$ to the fluid domain beneath the free surface $y = \eta(x,t)$. The top boundary $\mbox{Im}\,u = 0$ gives the surface profile in the parametric form: $x = \xi(u,t)$ and $y = \eta(u,t)$, where $u \in \mathbb{T}$ and $t \in \mathbb{R}$. The conformal mapping yields the relation 
\begin{equation}
\label{relation-xi-eta}
\xi = u - \mathcal{H} \eta,
\end{equation} 
where $\mathcal{H}$ is the periodic Hilbert transform, a skew-adjoint bounded operator in $L^2_{\rm per}$, normalized by the Fourier symbol 
\begin{equation}
\label{Hilbert-operator}
\hat{\mathcal{H}}_n = \left\{ \begin{array}{ll} 
i \; {\rm sgn}(n), \quad & n \in \mathbb{Z} \backslash \{0\}, \\
0, \quad & n = 0. \end{array} \right.
\end{equation}
It follows from (\ref{relation-xi-eta}) that 
$$
\xi_u = 1 + \mathcal{K} \eta \quad \mbox{\rm and} \quad \xi_t = -\mathcal{H}\eta_t,
$$
where $\mathcal{K} = - \mathcal{H} \partial_u = |\partial_u|$ is a positive self-adjoint operator in $L^2_{\rm per}$ with 
the domain $H^1_{\rm per}$ and the Fourier symbol 
\begin{equation}
\label{operator-K}
\hat{\mathcal{K}}_n = |n|, \quad n \in \mathbb{Z}.
\end{equation}

Derivation of equations of motion in conformal variables can be found 
in Appendices of \cite{dyachenko2016branch,locke2024smooth}.
These equations can be written as the system of pseudo-differential equations for 
the velocity potential $\psi$ and the free surface $\eta$ defined at the top boundary $\mbox{Im}\,u = 0$:
\begin{equation}
\label{time-Bab-eq}
\left\{ \begin{array}{l}
\eta_t (1 + \mathcal{K} \eta)  + \eta_u \mathcal{H} \eta_t + \mathcal{H} \psi_u = 0, \\
\psi_t \eta_u - \psi_u \eta_t + \eta \eta_u + \mathcal{H} \left( (1 + \mathcal{K} \eta) \psi_t + 
\psi_u \mathcal{H} \eta_t + \eta  (1 + \mathcal{K}  \eta) \right) = 0.
\end{array}
\right. 
\end{equation}
The system (\ref{time-Bab-eq}) admits the following conserved quantities, e.g., see \cite{benjamin1982hamiltonian} and \cite{DP2025}:
\begin{align}
\label{conserved-mass}
M(\eta) &= \oint \eta (1+ \mathcal{K} \eta) du, \\
\label{conserved-momentum}
P(\psi,\eta) &= -\oint \psi \eta_u du, \\
\label{conserved-mass-2}
Q(\psi,\eta) &= \oint \psi (1 + \mathcal{K} \eta) du, \\
\label{conserved-Ham}
H(\psi,\eta) &= \frac{1}{2}\oint \left( \psi \mathcal{K} \psi + \eta^2 (1 + \mathcal{K} \eta) \right) du,
\end{align}
which are referred to as {\em mass}, {\em horizontal momentum}, {\em vertical momentum}, and {\em energy}, respectively. Due to the zero-mean constraint $\oint \eta(x,t) dx = 0$ on the surface elevation $\eta$ and the chain rule $dx = (1 + \mathcal{K} \eta) du$, we get the constraint $M(\eta) = 0$, or explicitly, from (\ref{conserved-mass}),
\begin{equation}
\label{zero-mean}
\oint \eta (1+ \mathcal{K} \eta) du = 0.
\end{equation}

In the reference frame moving with the wave speed $c$, the system (\ref{time-Bab-eq}) with the constraint (\ref{zero-mean}) can be rewritten as 
\begin{align}
\left\{ \begin{array}{l}
\eta_t (1 + \mathcal{K} \eta) + \eta_u \mathcal{H} \eta_t = \mathcal{K} \zeta, \\
\zeta_t (1 + \mathcal{K} \eta) + \zeta_u \mathcal{H} \eta_t - \mathcal{H} (\zeta_t \eta_u - \zeta_u \eta_t ) - 2 c \mathcal{H} \eta_t  = \left(c^2 \mathcal{K} - 1\right)\eta  - \eta \mathcal{K} \eta - \frac{1}{2} \mathcal{K} \eta^2,
\end{array}
\right. 
\label{full-single-eq}
\end{align}
where the potential $\psi$ has been transformed to $\zeta$ by using $\psi = -c \mathcal{H} \eta + \zeta$ and the variable $u$ stands now for 
$u - ct$. 

{\em Traveling waves} correspond to the reduction $\zeta = 0$ and $\eta_t = 0$ in the system (\ref{full-single-eq}). This gives the scalar pseudo-differential Babenko's equation for the $2\pi$-periodic profile $\eta = \eta(u)$:
\begin{equation}
\label{trav-Bab-eq}
(c^2 \mathcal{K} - 1) \eta  = \frac{1}{2} \mathcal{K} \eta^2 + \eta \mathcal{K} \eta.
\end{equation}
This equation can be obtained as the Euler--Lagrange equation for the action functional
\begin{equation}
\label{aug-energy}
\Lambda_c(\eta) := \frac{1}{2} \langle (c^2 \mathcal{K} - 1) \eta, \eta \rangle - 
\frac{1}{2} \langle \mathcal{K} \eta^2, \eta \rangle, 
\end{equation}
where $\langle f, g \rangle := \frac{1}{2\pi} \oint \bar{f}(u) g(u) du$ is the normalized inner product in $L^2_{\rm per}$. 

It was shown in \cite{BDT2000a,BDT2000b} that {\em there exists a family of smooth traveling waves with the even profile $\eta \in C^{\infty}_{\rm per}$ satisfying the Babenko equation (\ref{trav-Bab-eq}) for $c \in (1,c_*)$ with some $c_* > 1$.} The left end $c = 1$ of the interval $(1,c_*)$ is a local bifurcation point of the $2\pi$-periodic solutions from the zero solution. The even single-lobe profile $\eta \in C^{\infty}_{\rm per}$  is approximated near $c = 1$ by the following small-amplitude expansion \cite{creedon2022ahigh,DyachenkoEtAl1996}:
	\begin{align}
	\label{eta-small}
	\eta = a \cos(u) + a^2 \left[ \cos(2u) - \frac{1}{2} \right] + \frac{3}{2} a^3 \cos(3u) + \mathcal{O}(a^4)
	\end{align}
	and 
	\begin{align}
	\label{c-small}
	c^2 = 1 + a^2 + \mathcal{O}(a^4),	
	\end{align}
where $a \in \mathbb{R}$ is the amplitude parameter.

\subsection{Spectral and modulational stability of traveling waves}
\label{sec-1-2}

Expanding system (\ref{full-single-eq}) near the traveling wave with the profile $(\eta,0)$ and truncating the system at the linear terms with respect to the co-periodic perturbation $(v,w)$, we obtain the following system of linearized equations:
\begin{equation}
\left\{ \begin{array}{ll}
\mathcal{M} v_t \qquad \qquad & = \mathcal{K} w, \\
\mathcal{M}^* w_t - 2 c \mathcal{H} v_t & = \mathcal{L} v,
\end{array}
\right. 
\label{lin-Bab-eq}
\end{equation}
where 
\begin{align}
\label{operator-M}
\mathcal{M} := 1 + \mathcal{K} \eta + \eta' \mathcal{H} \quad\mbox{and}\quad
\mathcal{M}^* := 1 + \mathcal{K} \eta - \mathcal{H} (\eta' \; \cdot ), 
\end{align}
and
\begin{align}
\mathcal{L} := c^2 \mathcal{K} - (1 + \mathcal{K} \eta) - \eta \mathcal{K} - \mathcal{K} (\eta \; \cdot ). \label{linBop}
\end{align}
The linear operators $\mathcal{K}$ and $\mathcal{L}$ are unbounded and self-adjoint in $L^2_{\rm per}$ with the domains 
$$
{\rm Dom}(\mathcal{K}) = {\rm Dom}(\mathcal{L}) = H^1_{\rm per}.
$$ 
Since $\mathcal{H}$ is a bounded skew-adjoint operator in $L^2_{\rm per}$,  the linear operators $\mathcal{M}$ and $\mathcal{M}^*$ are bounded and adjoint to each other in $L^2_{\rm per}$. 

Separating variables in the linearized system (\ref{lin-Bab-eq}) yields the spectral stability problem with respect to co-periodic perturbations, 
\begin{equation}
\left\{ \begin{array}{ll}
\mathcal{K} w &= \lambda \mathcal{M} v, \\
\mathcal{L} v &= \lambda (\mathcal{M}^* w - 2 c \mathcal{H} v),
\end{array}
\right. 
\label{spec-Bab-eq}
\end{equation}
where $(v,w) \in H^1_{\rm per} \times H^1_{\rm per}$ is an eigenfunction and $\lambda \in \mathbb{C}$ is an eigenvalue. 
The following lemma describes the Hamiltonian symmetry of eigenvalues. 

\begin{lemma}
    \label{lem-spectrum}
    The spectrum of the spectral problem (\ref{spec-Bab-eq}) in $L^2_{\rm per} \times L^2_{\rm per}$ 
    consists of eigenvalues $\lambda$ of finite algebraic multiplicity. If $\lambda \in \mathbb{C}$ is an eigenvalue, 
    so are $\bar{\lambda}$, $-\lambda$, and $-\bar{\lambda}$. 
\end{lemma}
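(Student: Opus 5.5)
The plan is to handle the two assertions separately: first discreteness of the spectrum with finite multiplicities, then the fourfold symmetry of eigenvalues.

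\emph{Discreteness.} The key is that $\mathcal{M}$ is invertible. For the smooth Stokes wave, $\mathcal{M}$ is the boundary trace of multiplication by the nonvanishing holomorphic factor $z_u = 1 + i(\eta_u + i\mathcal{K}\eta)$, suitably interpreted. Concretely, for $f$ real, $\mathcal{M} f = \mathrm{Re}\,[\,z_u (f + i\mathcal{H} f)\,]$ up to sign conventions. Since $|z_u|>0$ on $\mathbb{T}$ for smooth waves, $\mathcal{M}$ and $\mathcal{M}^*$ are bounded with bounded inverses on $L^2_{\rm per}$, at least on the subspace where the mean condition fixes the kernel of $\mathcal{K}$.

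With this in hand, I rewrite (\ref{spec-Bab-eq}) as $\lambda B U = A U$, where $U=(v,w)^T$,
$$
A = \begin{pmatrix} 0 & \mathcal{K} \\ \mathcal{L} & 0 \end{pmatrix}, \qquad B = \begin{pmatrix} \mathcal{M} & 0 \\ 2c\mathcal{H} & \mathcal{M}^* \end{pmatrix}.
$$
Here $B$ is bounded and invertible, being block lower-triangular with invertible diagonal blocks. The problem becomes $B^{-1}A U = \lambda U$. Since $\mathcal{L} = c^2\mathcal{K} + \text{bounded}$ and $\mathcal{K}+1$ has compact resolvent on $L^2_{\rm per}$, the operator $B^{-1}A - \mu$ has compact inverse for some $\mu$. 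Analytic Fredholm theory then gives a discrete spectrum consisting of eigenvalues of finite algebraic multiplicity.

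\emph{Symmetry.} Because $\eta$ is real, all operators are real, so complex conjugation of an eigenpair yields the eigenvalue $\bar\lambda$. For $-\lambda$, I use the evenness of $\eta$ together with the reflection $(Rf)(u)=f(-u)$. The operators $\mathcal{K}$, $\mathcal{L}$ and multiplication by $1+\mathcal{K}\eta$ commute with $R$, while $\mathcal{H}$ anticommutes with $R$. Since $\eta'$ is odd, $\eta'\mathcal{H}$ commutes with $R$, and likewise $\mathcal{H}(\eta'\cdot)$ commutes with $R$. Hence $R\mathcal{M}=\mathcal{M}R$ and $R\mathcal{M}^*=\mathcal{M}^*R$, but $R\mathcal{H}=-\mathcal{H}R$.

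Set $\tilde v = Rv$ and $\tilde w = -Rw$. The first equation gives $\mathcal{K}\tilde w = -\lambda\mathcal{M}\tilde v$. The second gives $\mathcal{L}\tilde v = \lambda(-\mathcal{M}^*\tilde w + 2c\mathcal{H}\tilde v) = (-\lambda)(\mathcal{M}^*\tilde w - 2c\mathcal{H}\tilde v)$. So $-\lambda$ is an eigenvalue, and combining with conjugation gives $-\bar\lambda$.

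\emph{Main obstacle.} I expect the hardest step to be the invertibility of $\mathcal{M}$ and the handling of the zero mode of $\mathcal{K}$, that is, verifying the compact-resolvent property rigorously. I would first try to prove invertibility through the factorization $\mathcal{M}f = \mathrm{Re}[z_u(1+i\mathcal{H})f]$ and the fact that $1/z_u$ is holomorphic and bounded in the lower half-strip.
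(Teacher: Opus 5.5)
Your symmetry argument is the paper's own. Realness of all coefficients gives $\bar\lambda$. The map $(v,w)\mapsto(\Pi v,-\Pi w)$, using $\Pi\mathcal{H}\Pi=-\mathcal{H}$ and the parity invariance of $\mathcal{K},\mathcal{L},\mathcal{M},\mathcal{M}^*$, gives $-\lambda$.

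For the first assertion, the paper only states that it follows from compactness of $\mathbb{T}$, unboundedness of $\mathcal{K},\mathcal{L}$, and boundedness of $\mathcal{H},\mathcal{M},\mathcal{M}^*$. Your reduction to $B^{-1}AU=\lambda U$ is a more careful route, and $\mathcal{M}$ is indeed invertible. However, your factorization has the wrong conjugate. With $z_u=1+\mathcal{K}\eta+i\eta'$, for real $f$ one has $\mathcal{M}f=\mathrm{Re}[\bar z_u(1+i\mathcal{H})f]=|z_u|^2\,\mathrm{Re}[(1+i\mathcal{H})f/z_u]$. Since $1/z_u$ is bounded and holomorphic in the lower half-strip, this inverts exactly to $\mathcal{M}^{-1}g=(1+\mathcal{K}\eta)h-\eta'\mathcal{H}h$ with $h=g/|z_u|^2$. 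This holds on all of $L^2_{\rm per}$; no mean condition is involved. (The lower-left block of $B$ should be $-2c\mathcal{H}$, which is harmless.)

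Two steps of your discreteness argument do not hold as written.

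\emph{The principal part of $\mathcal{L}$.} $\mathcal{L}$ is not $c^2\mathcal{K}$ plus a bounded operator, because $\eta\mathcal{K}$ and $\mathcal{K}(\eta\,\cdot)$ are first order. Correctly, $\mathcal{L}=(c^2-2\eta)\mathcal{K}+\mathcal{R}$ with $\mathcal{R}$ bounded (the commutator $[\mathcal{K},\eta]$ is bounded). So ${\rm Dom}(\mathcal{L})=H^1_{\rm per}$ with compact resolvent requires $c^2-2\eta>0$ on $\mathbb{T}$. This is true: Babenko's equation together with (\ref{zero-mean}) gives Bernoulli's law $(c^2-2\eta)|z_u|^2=c^2$. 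But it must be invoked, and it is exactly what degenerates at the peaked wave.

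\emph{The resolvent point.} This is the more serious gap. Compactness of $H^1_{\rm per}\hookrightarrow L^2_{\rm per}$ only shows that $A-\lambda B:H^1_{\rm per}\times H^1_{\rm per}\to L^2_{\rm per}\times L^2_{\rm per}$ is Fredholm of index zero for every $\lambda$. Analytic Fredholm theory then still allows every $\lambda\in\mathbb{C}$ to be an eigenvalue. You exclude this only by asserting that some $\mu$ is a resolvent point. That, rather than $\mathcal{M}^{-1}$, is the real obstacle, because $B^{-1}A$ is not self-adjoint and a steep wave is not a perturbation of the flat state.

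One way to close it:
\begin{itemize}
\item The principal symbol of $A-\lambda B$ at frequency $k$ has determinant $|z_u|^{-2}(|z_u|^2\lambda-ick)^2$. For real $\lambda$ this is bounded below by a multiple of $\lambda^2+k^2$.
\item From this, a parametrix argument should give $\|U\|_{H^1}+|\lambda|\,\|U\|_{L^2}\le C\|(A-\lambda B)U\|_{L^2}+C|\lambda|\,\|FU\|_{L^2}$ for large real $|\lambda|$, with $F$ compact.
\item $L^2$-normalized eigenfunctions $U_n$ with real $|\lambda_n|\to\infty$ converge weakly to zero. To see this, test $BU_n=\lambda_n^{-1}AU_n$ against smooth functions and use that $B$ is boundedly invertible.
\item Hence $\|FU_n\|\to0$, contradicting $1\le C\|FU_n\|$. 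So no such eigenvalues exist, and large real $\lambda$ are resolvent points.
\end{itemize}

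The paper's one-line proof addresses neither point, so these gaps are shared with it. Still, your argument as written does not close them.
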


\begin{proof}
The first assertion holds since $\mathbb{T}$ is compact, $\mathcal{K}$ and $\mathcal{L}$ are unbounded operators in $L^2_{\rm per}$, 
and $\mathcal{H}$, $\mathcal{M}$, and $\mathcal{M}^*$ are bounded operators in $L^2_{\rm per}$. The second assertion 
follows from the properties:
    \begin{itemize}
\item $\mathcal{K}$, $\mathcal{L}$, $\mathcal{M}$, $\mathcal{M}^*$, and $\mathcal{H}$ have real-valued coefficients.
\item $\mathcal{K}$, $\mathcal{L}$, $\mathcal{M}$, and $\mathcal{M}^*$ are invariant and $\mathcal{H}$ is covariant with respect to 
the parity transformation $(\Pi f)(x) = f(-x)$ such that 
$$
\Pi \mathcal{H} \Pi = -\mathcal{H}, \quad \Pi \mathcal{M} \Pi = \mathcal{M}, \quad \Pi \mathcal{M}^* \Pi = \mathcal{M}^*, \quad \Pi \mathcal{L} \Pi = \mathcal{L}, \quad 
\Pi \mathcal{K} \Pi = \mathcal{K}.
$$
\end{itemize}
As a consequence of these properties, if $(v,w) \in H^1_{\rm per} \times H^1_{\rm per}$ is an eigenfunction of the spectral problem (\ref{spec-Bab-eq}) for an eigenvalue $\lambda \in \mathbb{C}$, then $(\bar{v},\bar{w})$ is an eigenfunction for $\bar{\lambda}$, 
$(\Pi v, -\Pi w)$ is an eigenfunction for $-\lambda$, and $(\Pi \bar{v}, -\Pi \bar{w})$ is an eigenfunction for $-\bar{\lambda}$.
\end{proof}

\begin{remark}
    The nonlinear system (\ref{full-single-eq}) and the linearized system (\ref{lin-Bab-eq}) admit a Hamiltonian 
    formulation with a nonlocal symplectic structure. Since we study the spectral stability of traveling waves,
    we will not write the Hamiltonian structure explicitly. 
\end{remark}

To introduce the modulational stability problem, we extend the spectral stability problem (\ref{spec-Bab-eq}) 
from co-periodic perturbations $(v,w) \in H^1_{\rm per} \times H^1_{\rm per}$ to perturbations defined on the real line 
$(v,w) \in H^1(\mathbb{R}) \times H^1(\mathbb{R})$. To do so, we use the 
Floquet--Bloch theory and introduce the characteristic exponent $\mu \in \mathbb{B} := \left(-\frac{1}{2},\frac{1}{2} \right]$, where $\mathbb{B}$ is reciprocal to $\mathbb{T} = [0,2\pi)$. By using the Bloch transform 
$$
v(u) = \int_{\mathbb{B}} \hat{v}(u,\mu) e^{i \mu u} d \mu, \quad 
w(u) = \int_{\mathbb{B}} \hat{w}(u,\mu) e^{i \mu u} d \mu, 
$$
where $(\hat{v},\hat{w})(\cdot,\mu) \in H^1_{\rm per} \times H^1_{\rm per}$ for every fixed $\mu \in \mathbb{B}$, 
we obtain
\begin{equation}
\left\{ \begin{array}{ll}
\mathcal{K}_{\mu} \hat{w} &= \lambda \mathcal{M}_{\mu} \hat{v}, \\
\mathcal{L}_{\mu} \hat{v} &= \lambda (\mathcal{M}^*_{\mu} \hat{w} 
- 2 c \mathcal{H}_{\mu} \hat{v}),
	\end{array}
	\right. 
	\label{mod-Bab-eq-aux}
	\end{equation}
which is again defined in the periodic domain $\mathbb{T}$ for every $\mu \in \mathbb{B}$. The expressions for each operator $\mathcal{K}_{\mu}$, $\mathcal{L}_{\mu} $, $\mathcal{M}_{\mu} $, $\mathcal{M}^*_{\mu} $, and $\mathcal{H}_{\mu}$ are obtained below for $\mu \neq 0$. \\

\underline{$\mathcal{H}_{\mu} = e^{-i \mu u} \mathcal{H} e^{i \mu u}$:} 
Since $\mathcal{H} \sum\limits_{n \in \mathbb{Z}} \hat{f}_n e^{i (\mu + n) u} 
= i \sum\limits_{n \in \mathbb{Z}} {\rm sgn}(\mu + n) \hat{f}_n e^{i (\mu + n) u}$ and $\mu \in \mathbb{B}$, we get 
\begin{equation*}
\mathcal{H}_{\mu} f = \mathcal{H} f + i {\rm sgn}(\mu) \hat{f}_0, \quad \mu \neq 0, 
\end{equation*}
where $\hat{f}_0 = \langle 1, f \rangle$ due to normalization of the inner product in $L^2_{\rm per}$. Thus, we can define 
\begin{equation}
    	\label{H-mu}
  \mathcal{H}_{\pm}  = \mathcal{H} \pm i \langle 1, \cdot \rangle, \quad \pm \mu > 0,   
\end{equation}
where $\mathcal{H}$ is defined by (\ref{Hilbert-operator}).

\underline{$\mathcal{K}_{\mu} = e^{-i \mu u} \mathcal{K} e^{i \mu u}$:} Since $\mathcal{K}_{\mu} = - \mathcal{H}_{\mu} (\partial_u + i \mu)$, we get the expansion 
\begin{equation}
\label{K-mu}
\mathcal{K}_{\mu} = \mathcal{K} - i \mu \mathcal{H}_{\pm}, \quad \pm \mu > 0,
\end{equation}
where $\mathcal{K}$ is defined by (\ref{operator-K}).

\underline{$\mathcal{M}_{\mu} = e^{-i \mu u} \mathcal{M} e^{i \mu u}$:} It follows from (\ref{operator-M}) 
and (\ref{H-mu}) that $\mathcal{M}_{\mu} = \mathcal{M}_{\pm}$ given by 
\begin{align}
\label{M-mu}
\mathcal{M}_{\pm} = 1 + \mathcal{K}\eta + \eta' \mathcal{H}_{\pm}, \quad  \pm \mu > 0.
\end{align}
Similarly, we get $\mathcal{M}^*_{\mu} = \mathcal{M}^*_{\pm}$ given by 
\begin{align}
\label{M-star-mu}
\mathcal{M}^*_{\pm} = 1 + \mathcal{K}\eta  - \mathcal{H}_{\pm}(\eta' \; \cdot), \quad  \pm \mu > 0.
\end{align}

\underline{$\mathcal{L}_{\mu} = e^{-i \mu u} \mathcal{L} e^{i \mu u}$:} It follows from (\ref{linBop}) and (\ref{K-mu}) that 
\begin{align}
\label{L-mu}
\mathcal{L}_{\mu} &= \mathcal{L} - i \mu \mathcal{L}_{\pm}, \quad \pm \mu > 0,
\end{align}
where 
\begin{equation}
\label{L-plus-minus}
\mathcal{L}_{\pm} = c^2 \mathcal{H}_{\pm} - \eta \mathcal{H}_{\pm} - \mathcal{H}_{\pm} (\eta \; \cdot).
\end{equation}

As a result of (\ref{H-mu}), (\ref{K-mu}), (\ref{M-mu}), (\ref{M-star-mu}), and (\ref{L-mu}), 
we can rewrite the modulational stability problem (\ref{mod-Bab-eq-aux}) in the equivalent form
\begin{equation}
\left\{ \begin{array}{ll}
(\mathcal{K} - i \mu \mathcal{H}_{\pm}) \hat{w} &= \lambda \mathcal{M}_{\pm} \hat{v}, \\
(\mathcal{L} - i \mu \mathcal{L}_{\pm}) \hat{v} &= \lambda (\mathcal{M}^*_{\pm} \hat{w} 
- 2 c \mathcal{H}_{\pm} \hat{v}),
\end{array}
\right. 
\label{mod-Bab-eq}
\end{equation}
where $\mu \neq 0$ and the two signs refer to $\pm \mu > 0$. The following lemma is similar 
to Lemma \ref{lem-spectrum} but is extended to the modulational stability problem (\ref{mod-Bab-eq}).

\begin{lemma}
    \label{lem-mod-spectrum}
    The spectrum of the spectral problem (\ref{mod-Bab-eq}) in $L^2_{\rm per} \times L^2_{\rm per}$ 
    consists of eigenvalues $\lambda$ of finite algebraic multiplicity for each fixed $\mu \in \mathbb{B}$. 
    Consequently, we can enumerate the spectral bands 
\begin{equation}
\label{spectral-bands}
\sigma_S = \cup_{j \in \mathbb{N}} \{ \lambda_j(\mu), \;\; \mu \in \mathbb{B} \}.
\end{equation}
If $\lambda_j(\mu)$ is a spectral band of the spectral stability problem (\ref{mod-Bab-eq}) for $\mu \in \mathbb{B}$, 
so are $\bar{\lambda}_j(-\mu)$, $-\lambda_j(-\mu)$, and $-\bar{\lambda}_j(\mu)$.
\end{lemma}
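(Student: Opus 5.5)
The plan is to repeat the argument of Lemma \ref{lem-spectrum}, tracking how complex conjugation and the parity transformation $\Pi$ act on the Floquet parameter $\mu$. For the first assertion, fix $\mu \in \mathbb{B}$. The operator $\mathcal{K}_\mu = e^{-i\mu u}\mathcal{K}e^{i\mu u}$ has Fourier symbol $|n+\mu|$, so it is self-adjoint with compact resolvent on $\mathbb{T}$. The operators $\mathcal{H}_\pm$, $\mathcal{M}_\pm$, $\mathcal{M}^*_\pm$ differ from $\mathcal{H}$, $\mathcal{M}$, $\mathcal{M}^*$ only by the rank-one term $\pm i\langle 1,\cdot\rangle$ (times $\eta'$ or composed with $\eta'$), so they remain bounded. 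Likewise $\mathcal{L}_\mu$ is a relatively bounded perturbation of $c^2\mathcal{K}_\mu$. Writing (\ref{mod-Bab-eq}) as a pencil in $\lambda$ whose unbounded part has compact resolvent, the same reasoning as in Lemma \ref{lem-spectrum} gives isolated eigenvalues of finite algebraic multiplicity. Enumerating them for each $\mu$ then defines the bands (\ref{spectral-bands}).

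For the symmetries, I will use the conjugated forms $\mathcal{A}_\mu = e^{-i\mu u}\mathcal{A}e^{i\mu u}$ rather than the $\pm$ formulas, since they make the computation clean. \emph{Complex conjugation:} every operator $\mathcal{A}$ in the original problem has real coefficients, so $\overline{\mathcal{A}_\mu f} = e^{i\mu u}\mathcal{A}e^{-i\mu u}\bar f = \mathcal{A}_{-\mu}\bar f$. Conjugating (\ref{mod-Bab-eq-aux}) then shows that $(\bar{\hat v},\bar{\hat w})$ solves the problem at $-\mu$ with eigenvalue $\bar\lambda$. This gives the band $\bar\lambda_j(-\mu)$, and it is consistent with $\mathcal{H}_\pm$ being swapped, since $\overline{\pm i}=\mp i$. \emph{Parity:} $\Pi e^{i\mu u} = e^{-i\mu u}\Pi$, so $\Pi\mathcal{A}_\mu\Pi = e^{i\mu u}(\Pi\mathcal{A}\Pi)e^{-i\mu u}$. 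Combined with the relations from Lemma \ref{lem-spectrum}, this gives $\Pi\mathcal{H}_\mu\Pi = -\mathcal{H}_{-\mu}$, and also $\Pi\mathcal{K}_\mu\Pi=\mathcal{K}_{-\mu}$, $\Pi\mathcal{L}_\mu\Pi=\mathcal{L}_{-\mu}$, $\Pi\mathcal{M}_\mu\Pi=\mathcal{M}_{-\mu}$, $\Pi\mathcal{M}^*_\mu\Pi=\mathcal{M}^*_{-\mu}$. Applying $\Pi$ to (\ref{mod-Bab-eq-aux}) shows that $(\Pi\hat v,-\Pi\hat w)$ is an eigenfunction at $-\mu$ for $-\lambda$. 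The sign flip of $w$ absorbs the sign change of $\lambda$ in the first equation, and in the second equation it compensates the sign of $\mathcal{H}_{-\mu}$. Composing the two maps gives $(\Pi\bar{\hat v},-\Pi\bar{\hat w})$ at $\mu$ with eigenvalue $-\bar\lambda$.

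The main point requiring care is that the operators are not literally invariant under these maps: $\mathcal{H}_\mu$ has a discontinuity in $\mu$ at $0$ (the $\pm$ dependence). One must check that the rank-one correction $\pm i\langle 1,\cdot\rangle$ transforms correctly. Under conjugation it becomes $\mp i\langle 1,\cdot\rangle$, and under $\Pi$ (which fixes constants and the mean) combined with $\Pi\mathcal{H}\Pi=-\mathcal{H}$ it becomes $-(\mathcal{H}\mp i\langle 1,\cdot\rangle)=-\mathcal{H}_{\mp}$. I expect this verification, together with $\Pi(\eta'\,\cdot)\Pi = -\eta'\,\Pi(\cdot)$ for even $\eta$ (so that $\eta'\mathcal{H}_\pm$ maps to $\eta'\mathcal{H}_\mp$), to be the only step needing explicit bookkeeping. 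Finally, the point $\mu=1/2$ is identified with $-1/2$ modulo the reciprocal lattice, so the map $\mu\mapsto-\mu$ is well defined on $\mathbb{B}$.
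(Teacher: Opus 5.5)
Your proposal is correct and is essentially the paper's argument. You use the same three eigenfunction maps: $(\bar{\hat v},\bar{\hat w})$ at $-\mu$, $(\Pi\hat v,-\Pi\hat w)$ at $-\mu$, and $(\Pi\bar{\hat v},-\Pi\bar{\hat w})$ at $\mu$. The paper simply lists the relations $\mathcal{H}_{\pm}=\overline{\mathcal{H}}_{\mp}$ and $\Pi\mathcal{H}_{\pm}\Pi=-\mathcal{H}_{\mp}$, whereas you derive them from the cleaner identities $\overline{\mathcal{A}_\mu f}=\mathcal{A}_{-\mu}\bar f$ and $\Pi\mathcal{A}_\mu\Pi=(\Pi\mathcal{A}\Pi)_{-\mu}$. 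You treat the first assertion at the same level of detail as the paper does.

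There is one harmless slip: the identity you need is $\Pi(\eta' f)=-\eta'\,\Pi f$, that is, $\Pi(\eta'\,\cdot)\Pi=-\eta'\,\cdot$ rather than $-\eta'\,\Pi(\cdot)$. This is exactly what gives your conclusion $\Pi\,\eta'\mathcal{H}_{\pm}\,\Pi=\eta'\mathcal{H}_{\mp}$.
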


\begin{proof}
        The first assertion holds again because $\mathbb{T}$ is compact, $\mathcal{K}$ and $\mathcal{L}$ are unbounded operators in $L^2_{\rm per}$, whereas $\mathcal{H}_{\pm}$, $\mathcal{M}_{\pm}$, $\mathcal{M}^*_{\pm}$, and $\mathcal{L}_{\pm}$ are bounded operators in $L^2_{\rm per}$. 
The second assertion follows from the properties of $\mathcal{K}$, $\mathcal{L}$, $\mathcal{M}$, $\mathcal{M}^*$, and $\mathcal{H}$ in the proof of Lemma \ref{lem-spectrum} and the additional properties 
$$
    \mathcal{H}_{\pm} = \overline{\mathcal{H}}_{\mp}, \quad \mathcal{H}_{\pm} |_{\mu \to -\mu} = \mathcal{H}_{\mp}, \quad 
    \Pi \mathcal{H}_{\pm} \Pi = -\mathcal{H}_{\mp},
    $$ 
Consequently, if  $(\hat{v}(\mu),\hat{w}(\mu)) \in H^1_{\rm per} \times H^1_{\rm per}$ is an eigenfunction of the spectral problem (\ref{mod-Bab-eq}) for an eigenvalue $\lambda(\mu) \in \mathbb{C}$ with fixed $\mu \in \mathbb{B}$, then $(\bar{v}(-\mu),\bar{w}(-\mu))$ is an eigenfunction for $\bar{\lambda}(-\mu)$, $(\Pi v(-\mu), -\Pi w(-\mu))$ is an eigenfunction for $-\lambda(-\mu)$, and $(\Pi \bar{v}(\mu), -\Pi \bar{w}(\mu)$ is an eigenfunction for $-\bar{\lambda}(\mu)$.
\end{proof}

\begin{remark}
We say that the periodic wave with the profile $\eta \in C^{\infty}_{\rm per}$ is modulationally unstable if the spectral bands in $\sigma_S$ intersecting the origin do not coincide with a subset of $i\R$ for $\lambda \neq 0$. 
\end{remark}

\begin{remark}
	\label{rem-anal}
The original spectral  problem (\ref{mod-Bab-eq-aux}) is not analytic 
at $\mu = 0$ due to the different behavior of the linear operators in the limit $\mu \to 0^{\pm}$. However, after extending $\mathcal{H}_{\mu}$, $\mathcal{K}_{\mu}$, $\mathcal{M}_{\mu}$, $\mathcal{M}_{\mu}^*$, and $\mathcal{L}_{\mu}$ differently for $\mu > 0$ and $\mu < 0$, the reformulated spectral problem (\ref{mod-Bab-eq}) is now analytic in $\mu$ and $\lambda$ for both signs.
\end{remark}

\begin{figure}[htb!]
	\centering
	\includegraphics[width=0.985\textwidth]{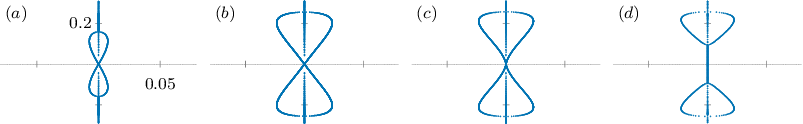} 
	\includegraphics[width=0.985\textwidth]{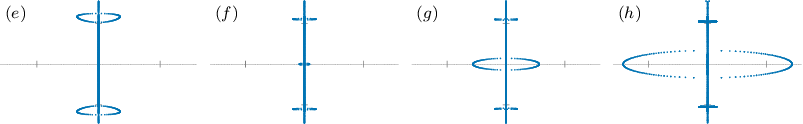} 
	\includegraphics[width=0.985\textwidth]{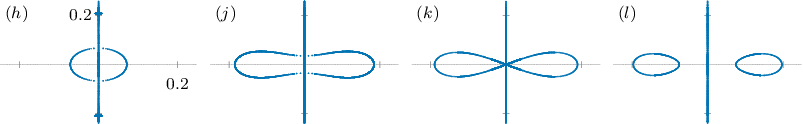} 
	\includegraphics[width=0.985\textwidth]{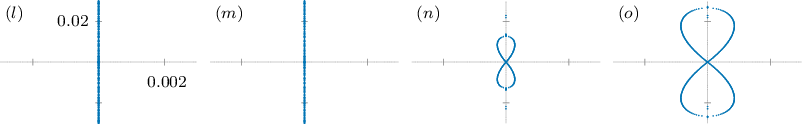} 
	\caption{Transformation of the spectral bands in the modulational stability problem (\ref{mod-Bab-eq}) when the steepness parameter $s$ of Stokes waves is increased.}
	\label{fig-bif}
\end{figure}

\subsection{Main results and organization of the paper}
\label{sec-1-3}

Figure \ref{fig-bif} shows numerical approximations of the spectral bands (\ref{spectral-bands}) in the modulational stability problem (\ref{mod-Bab-eq}) 
when the traveling wave is continued in the steepness parameter $s$ from the small-amplitude limit given by 
(\ref{eta-small})--(\ref{c-small}) towards the limiting wave with the peaked profile. 
We observe four bifurcations of the spectral bands near the origin described below. 
These four bifurcations were discovered in \cite{DDS2024}, see their Figure 3. Based on the first three cycles, it was conjectured in \cite{DDS2024} that the four bifurcations are repeated recurrently in the same sequence infinitely many times before the wave profile becomes peaked.

\begin{remark}
    Panels (h) and (l) are repeated twice in Figure \ref{fig-bif} to show that between (a)--(h), (h)--(l), and (l)--(o), we recalibrate the scales along the real and imaginary 
    axis of the complex $\lambda$-plane, for better presentation of bifurcations. The scaling units are explicitly shown in the first panel of the calibration and are kept for all subsequent panels until the next calibration. 
\end{remark}

Our main results consist of the derivation and justification of the criteria and the normal forms for the four bifurcations near the origin shown in Figure \ref{fig-bif}. To be precise, we describe each bifurcation separately in the list below. 

\begin{itemize} 
\item The first figure-$8$ bands are seen in (a). This bifurcation, also known as 
the Benjamin--Feir instability, is studied in Appendix \ref{app-A}. The normal form 
for the stability bands is given by 
	\begin{equation}
	\left( \lambda - \frac{i \mu}{2} \right)^2 = \frac{1}{64} \mu^2 (8 a^2 - \mu^2) + \mathcal{O}(a^6 + \mu^6),
	\label{fig-8-small-intro}
	\end{equation}
where $a \in \R$ is a small parameter of the asymptotic expansions (\ref{eta-small}) and (\ref{c-small}). 
The normal form was derived and justified in the previous works \cite{berti2022full,creedon2022ahigh}.\\

\item Panels (b)--(d) show the second bifurcation, for which the slopes of the figure-$8$ bands become vertical at $\lambda = 0$, 
after which the figure-$8$ bands break into two instability bubbles along the imaginary axis. This bifurcation is 
analyzed in Section \ref{sec-3} by expanding solutions of the spectral problem (\ref{mod-Bab-eq}) near $(\mu,\lambda) = (0,0)$. 
The normal form for the stability bands is given by 
    \begin{align}
		\label{normal-form-one-intro}
\mathcal{P}'(c_0) \left( \lambda - \frac{i \mu \mathcal{N}'(c)}{2 \mathcal{P}'(c)} \right)^2 + \frac{\mu^2 \Delta'(c_0) (c - c_0)}{4 \mathcal{P}'(c_0)} + \mathcal{D} |\mu|^3 = \mathcal{O}(\mu^4),
	\end{align}	
closed with the consistency condition $c - c_0 = \mathcal{O}(\mu)$, 
where $\mathcal{P}(c)$ and $\mathcal{N}(c)$  are given by (\ref{wave-momentum}), $\Delta(c)$ is the bifurcation parameter in (\ref{deltaC}), 
$\mathcal{D}$ is the numerical coefficient in (\ref{num-coeff-D}), and $c_0$ is a simple root of $\Delta(c)$. 
See Theorems \ref{theorem-bif-1} and \ref{theorem-bif-1-normal-form}. 
This bifurcation originates from the tangential intersection of the double-degenerate spectral bands at  $(\mu,\lambda) = (0,0)$. It has not 
been considered in the previous literature.\\

\item Panels (e)--(g) show the third bifurcation, for which a new circular instability bubble emerges near $\lambda = 0$. This bifurcation is analyzed in Section \ref{sec-4} by expanding 
solutions of the spectral problem (\ref{mod-Bab-eq}) near $(\mu,\lambda) = \left(\frac{1}{2},0\right)$. 
The normal form for the stability bands is given by 
    \begin{align}
		\label{normal-form-two-intro}
	\lambda^2\mathcal{A}_{1} + 2 i c_0 \tilde{\mu} \lambda \mathcal{A}_{2} + \tilde{\mu}^2 \mathcal{A}_{3} 
	- \Lambda'(c_0) (c-c_0) \| \psi_0 \|^2_{L^2} = \mathcal{O}\left( |\tilde{\mu}|^3 \right),
\end{align}
closed with the consistency condition $c - c_0 = \mathcal{O}(\tilde{\mu}^2)$ for $\tilde{\mu} = \frac{1}{2} - \mu$, 
where $\mathcal{A}_{1,2,3}$ are numerical coefficients in (\ref{coefficients-A}), 
$\Lambda(c)$ is the simple eigenvalue of 
$$
\mathcal{L}_{\rm aper} : H^1_{\rm aper} \subset L^2_{\rm aper} \to  L^2_{\rm aper},
$$
associated with the eigenfunction $\psi_0 \in H^1_{\rm aper}$, 
and $c_0$ is a simple root of $\Lambda(c)$. See Theorems \ref{theorem-bif-2} and \ref{theorem-bif-2-normal-form}.
This bifurcation originates from the one-dimensional kernel of $\mathcal{L}_{\rm aper}$ 
in the space of double-periodic (antiperiodic) perturbations. Such subharmonic 
bifurcations for fixed rational values of $\mu$, including $\mu = \frac{1}{2}$, have been studied in \cite{BDT2000b}. \\

\item Panels (j)--(l) show the fourth bifurcation, for which the instability band surrounding $\lambda = 0$ reconnects into the figure-$\infty$ bands, which then break into two instability bubbles along the real axis. This bifurcation is analyzed in Section \ref{sec-5} by expanding solutions of the spectral problem (\ref{mod-Bab-eq}) near $(\mu,\lambda) = (0,0)$. The normal form for the stability bands is given by 
    \begin{align}
		\label{normal-form-three-intro}
	\lambda^4 \mathcal{B} - i \mu \lambda \mathcal{N}'(c_0) + \lambda^2 \mathcal{P}''(c_0) (c - c_0) = \mathcal{O}(\mu^2),	
\end{align}
closed with the consistency condition $c - c_0 = \mathcal{O}(|\mu|^{2/3})$, 
where $\mathcal{P}(c)$ and $\mathcal{N}(c)$  are given by (\ref{wave-momentum}), $\mathcal{B}$ is the numerical coefficient in (\ref{B-coef}), 
and $c_0$ is a simple root of $\mathcal{P}'(c)$. See Theorem \ref{theorem-bif-3}. This bifurcation originates from the extreme points of energy $\mathcal{H}$ (equivalently, the horizontal momentum), at which the spectral stability problem (\ref{spec-Bab-eq}) for co-periodic perturbations admits the zero eigenvalue bifurcation. The zero-eigenvalue bifurcation for $\mu = 0$ has been studied in \cite{DP2025}.\\

\item Panels (m)--(o) show the first bifurcation again, for which new figure-$8$ bands arise near $\lambda = 0$. This bifurcation is analyzed 
in Section \ref{sec-6} by expanding solutions of the spectral problem (\ref{mod-Bab-eq}) near $(\mu,\lambda) = (0,0)$. 
 The normal form for the stability bands is given by 
    \begin{align}
		\label{normal-form-four-intro}
        \left( \lambda - \frac{i \mu c_0}{2} \right)^2 = \mu^2 \left[ \mathcal{N}(c_0) - \frac{5c_0}{4} \mathcal{P}(c_0) \right] 
        \left[ |\mu| + \frac{\delta_0 \varepsilon}{c_0^2 \langle 1, v_0 \rangle} \right] + \mathcal{O}(|\mu|^{7/2}),
\end{align}
closed with the consistency condition $c^2 = c_0^2 + \delta_0 \varepsilon^2 + \mathcal{O}(\varepsilon^3)$ and $\varepsilon = \mathcal{O}(\mu)$,
where $\mathcal{P}(c)$ and $\mathcal{N}(c)$  are given by (\ref{wave-momentum}),
$\delta_0$ is given by (\ref{delta}), and $c_0$ is the fold bifurcation point, for which 
the kernel of 
$$
\mathcal{L} : H^1_{\rm per} \subset L^2_{\rm per} \to  L^2_{\rm per}
$$ 
in the space of co-periodic perturbations is two-dimensional with the eigenfunctions $\eta'$ and $v_0 \in H^1_{\rm per}$ 
satisfying $\langle 1, v_0 \rangle \neq 0$. See Theorems 
\ref{theorem-bif-4} and \ref{theorem-bif-4-normal-form}. 
This bifurcation originates from the extreme points of speed $c$, 
at which the family of smooth traveling waves 
parameterized by steepness $s$ has a fold bifurcation. 
Fold bifurcations for $\mu = 0$ have been studied in \cite{BDT2000b}. 
\end{itemize}

\begin{remark}
It is remarkable that the normal form (\ref{normal-form-four-intro}) for the steep Stokes waves 
is different from the normal form (\ref{fig-8-small-intro}) for the small-amplitude Stokes waves, 
despite both describing a figure-8 bifurcation. It is seen from (\ref{normal-form-one-intro}) and (\ref{normal-form-four-intro}) 
that the original spectral problem (\ref{mod-Bab-eq-aux}) is not analytic at $\mu = 0$ because the pseudo-differential operators $\mathcal{H}_{\mu}$
and $\mathcal{K}_{\mu}$ are singular at $\mu = 0$. Nevertheless, we derive the normal forms by analytic extension of these operators 
for $\mu > 0$ and $\mu < 0$ in the reformulated spectral problem (\ref{mod-Bab-eq}). 
\end{remark}

The numerical method used in our computations is described in Appendix \ref{app-B}.
The profile $\eta \in C^{\infty}_{\rm per}$ of the traveling wave 
is computed from Babenko's equation (\ref{trav-Bab-eq}) as a function of 
the wave steepness $s$, see Appendix \ref{app-B-1}. The spectral bands (\ref{spectral-bands}) of the modulational stability problem (\ref{mod-Bab-eq}) are computed for each steepness $s$ as functions of the Floquet parameter $\mu$, see Appendix \ref{app-B-2}. We compare the spectral bands computed from the normal form equations for each bifurcation described above with the spectral bands zoomed near the origin in the complex $\lambda$-plane. Table~\ref{tab:bif} provides 
numerical data for the steepness $s$, speed $c$, and energy $\mathcal{H}$ at each bifurcation point in the first two cycles of the four bifurcations.

\begin{table}[htb!]
	\centering
	\renewcommand{\arraystretch}{1.3}
	\begin{tabular}{c|c|c|c|l|l}
		$s$ & $c$ & $\mathcal H$ & Bifurcation & Figure \\ \hline
		$0$           & $1$           &  $0$             & Figure-$8$ I &        \\ 
        $0.10921308$  & $1.06052068$  &  $0.34626647037$ & Hourglass I &          ~Fig.~\ref{fig:bcd1} \\      
		$0.12890310$  & $1.08414013$  &  $0.44648936798$ & Ellipse I &            ~Fig.~\ref{fig:efg1} \\
		$0.13660355$  & $1.09213793$  &  $0.46517718146$ & Figure-$\infty$ I &    ~Fig.~\ref{fig:jkl1} \\
		$0.13875329$  & $1.09295138$  &  $0.46256418841$ & Figure-$8$ II &     ~Fig.~\ref{fig:mno1} \\  
		$0.14010224$  & $1.09256435$  &  $0.45870019516$ & Hourglass II &       ~Fig.~\ref{fig:bcd2} \\ 
		$0.14048675$  & $1.09238125$  &  $0.45792453975$ & Ellipse II &         ~Fig. \ref{fig:efg2} \\
		$0.14079655$  & $1.09228678$  &  $0.45770578965$ & Figure-$\infty$ II & ~Fig. \ref{fig:jkl2} \\
	\end{tabular}
	\caption{Parameters of Stokes waves at the four bifurcation points: steepness $s$, speed $c$, energy $\mathcal{H}$, and the bifurcation type in the first two cycles.}
	\label{tab:bif}
\end{table}

The main computations of the normal forms are described in Sections \ref{sec-3}, \ref{sec-4}, \ref{sec-5}, and \ref{sec-6}. Section \ref{sec-2} contains two preliminary results about integral characteristics of the traveling waves with profile $\eta \in C^{\infty}_{\rm per}$ and about the behavior of the spectral problem (\ref{mod-Bab-eq}) in the limit $\mu \to 0^{\pm}$. Section \ref{sec-1-4} describes the Newton polytope method, which is used to justify the balance of asymptotic terms in the direct Puiseux expansions.

\section{Preliminary results}
\label{sec-2}

We describe two results which clarify the behavior of the spectral problem (\ref{mod-Bab-eq}) near $(\mu,\lambda) = (0,0)$. 
Section \ref{sec-2-1} discusses smoothness of continuation 
of the traveling wave with the even profile $\eta \in C^{\infty}_{\rm per}$ with respect to the wave speed $c \in (1,c_*)$ and 
derives some relations between the integral quantities related to the horizontal momentum (\ref{conserved-momentum}) and the energy (\ref{conserved-Ham}). Both 
facts are useful in the explicit computations of the projection matrix 
at $(\mu,\lambda) = (0,0)$ used in the bifurcation theory of Sections \ref{sec-3}, \ref{sec-5}, and \ref{sec-6}. Section \ref{sec-2-2} compares the limit 
$\mu \to 0^{\pm}$ in the spectral problem (\ref{mod-Bab-eq}) 
with the co-periodic stability problem (\ref{spec-Bab-eq}), where we prove that 
the eigenfunctions of the two problems are related by an invertible transformation.  

\subsection{Integral quantities for traveling waves}
\label{sec-2-1}

Recall the linear operator 
$$
\mathcal{L} : H^1_{\rm per} \subset L^2_{\rm per} \to L^2_{\rm per},
$$ 
given by (\ref{linBop}). It represents a linearization of Babenko's equation (\ref{trav-Bab-eq}) at the traveling wave with the even profile $\eta \in C^{\infty}_{\rm per}$. Due to the translational symmetry, we have $\mathcal{L} \eta' = 0$ with $\eta' \in H^1_{\rm per}$. 
Also recall the linear operator 
$$
\mathcal{K} : H^1_{\rm per} \subset L^2_{\rm per} \to L^2_{\rm per},
$$
given by (\ref{operator-K}). 
By using Fourier series, we have $\mathcal{K} 1 = 0$ with $1 \in H^1_{\rm per}$. Furthermore, we obtain the following relations by explicit computations:
\begin{align}
\label{L-on-1}
\mathcal{L} 1 &= -\left(1 + 2 \mathcal{K} \eta\right), \\
\label{L-on-eta}
\mathcal{L} \eta &= \eta - c^2 \mathcal{K} \eta, 
\end{align}
and
\begin{align}
\label{L-on-eta-prime}
\mathcal{L}_{\pm} \eta' &= -c^2 \mathcal{K} \eta + \eta \mathcal{K}\eta + \frac{1}{2} \mathcal{K}\eta^2 = -\eta,
\end{align} 
where $\mathcal{L}_{\pm} : L^2_{\rm per} \to L^2_{\rm per}$ is given by 
(\ref{L-plus-minus}). 

The following lemma gives the sufficient condition for differentiability of the mapping $c \to \eta \in C^{\infty}_{\rm per}$ at any $c \in (1,c_*)$. 

\begin{lemma}
	\label{lem-diff}
	If ${\rm Ker}(\mathcal{L}) = {\rm span}(\eta')$, then the mapping 
	$c \to \eta \in C^{\infty}_{\rm per}$ is continuously differentiable for this value of $c \in (1,c_*)$ and 
	\begin{align}
	\label{der-speed}
	\mathcal{L} \partial_c \eta &= - 2 c \mathcal{K} \eta.
	\end{align}
\end{lemma}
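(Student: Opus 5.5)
The plan is an implicit function theorem argument for the map $F(\eta,c) := (c^2\mathcal{K}-1)\eta - \frac12\mathcal{K}\eta^2 - \eta\mathcal{K}\eta$. We restrict $F$ to the closed subspace of even functions: $X := H^1_{\rm per,even}$ for the domain and $Y := L^2_{\rm per,even}$ for the target (smoothness is recovered afterwards). The map $F$ is invariant under the parity $\Pi$ because $\Pi\mathcal{K}\Pi=\mathcal{K}$, so it does map $X \times \R$ into $Y$. It is polynomial in $\eta$ and in $c$, and the bilinear terms are bounded from $H^1\times H^1$ to $L^2$ since $H^1_{\rm per}$ is an algebra. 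Hence $F$ is real-analytic, in particular $C^1$. Its Fr\'echet derivative in $\eta$ at a solution is exactly $\mathcal{L}$ from (\ref{linBop}), and its derivative in $c$ is $2c\mathcal{K}\eta$.

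The key step is invertibility of $\mathcal{L}|_{X}: X\to Y$.
\begin{itemize}
\item The operator $\mathcal{L} = c^2\mathcal{K} - (\text{bounded multiplication}) - \eta\mathcal{K} - \mathcal{K}(\eta\,\cdot)$ is self-adjoint. Its leading part is $(c^2 - 2\eta)\mathcal{K}$ plus lower order commutator terms, using $\mathcal{K}(\eta f) = \eta\mathcal{K}f + [\mathcal{K},\eta]f$, where $[\mathcal{K},\eta]$ is bounded for smooth $\eta$.
\item I would argue that $c^2 - 2\eta > 0$ for the smooth waves in the family; equivalently, $\mathcal{L}$ has compact resolvent on the torus. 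Then $\mathcal{L}$ is Fredholm of index zero from $H^1_{\rm per}$ to $L^2_{\rm per}$.
\item If this ellipticity is the weak point, I fall back on the fact, implicit in Lemma \ref{lem-spectrum}, that $\mathcal{L}$ has discrete spectrum. That again makes $\mathcal{L}$ Fredholm of index zero with range equal to ${\rm Ker}(\mathcal{L})^\perp$.
\item Under the hypothesis ${\rm Ker}(\mathcal{L}) = {\rm span}(\eta')$, the kernel is spanned by an odd function, since $\eta$ is even.
\item Because $\mathcal{L}$ commutes with $\Pi$, its restriction to even functions has trivial kernel. Its range in $Y$ is $\{f \in Y : \langle \eta', f\rangle = 0\}$, and this condition holds automatically for even $f$. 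So $\mathcal{L}|_X$ is an isomorphism.
\end{itemize}

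The implicit function theorem then gives a $C^1$ (indeed analytic) curve $c\mapsto\eta(c)\in X$ near the given $c$. It coincides with the given family by local uniqueness of even solutions. Differentiating $F(\eta(c),c)=0$ gives $\mathcal{L}\partial_c\eta + 2c\mathcal{K}\eta = 0$, which is (\ref{der-speed}).

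To upgrade to $C^\infty_{\rm per}$, I would bootstrap. Rewrite the equation so that the highest-order part is $(c^2-2\eta)\mathcal{K}\eta$, which the hypothesis $c^2 - 2\eta>0$ allows us to invert; the remaining terms are of lower order. This gives $H^s$ regularity for every $s$, with $C^1$ dependence on $c$ in each $H^s$. One can do this by applying the implicit function theorem in $H^{s+1}_{\rm per,even} \to H^s_{\rm per,even}$ for each $s$, where the same kernel argument works. Alternatively one differentiates the equation and uses elliptic regularity for $\partial_c\eta$ solving (\ref{der-speed}). Sobolev embedding then gives continuity of $c\mapsto\eta$ into $C^k$ for every $k$.

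The main obstacle is establishing the Fredholm and ellipticity property of $\mathcal{L}$, namely $c^2 - 2\eta > 0$ along the smooth branch. I would cite this from the existence theory of \cite{BDT2000a,BDT2000b}, where this positivity is equivalent to the smoothness of the wave and the nonvanishing of the surface velocity.
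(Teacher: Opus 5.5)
Your proposal is correct and follows the paper's proof. Both restrict Babenko's equation to even functions, use that the only kernel element $\eta'$ is odd so that $\mathcal{L}$ is injective on $H^1_{\rm per,e}$, take the Fredholm (discrete-spectrum) property of $\mathcal{L}$ from the Buffoni--Dancer--Toland theory, apply the implicit function theorem in $(\eta,c)$, and differentiate to obtain (\ref{der-speed}). Your extra details fill in points the paper only cites or defers to the proof of Lemma \ref{lem-period-doubling}: the ellipticity $c^2-2\eta>0$ behind the Fredholm property (this is the right justification; your fallback to Lemma \ref{lem-spectrum} does not apply, since that lemma concerns the spectral problem (\ref{spec-Bab-eq}) rather than $\mathcal{L}$ itself) and the bootstrap to $C^\infty$.
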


\begin{proof}
If $\eta$ is even, then $\eta'$ is odd. Restricting solutions 
to Babenko's equation (\ref{trav-Bab-eq}) to even functions 
is allowed since $\mathcal{K}$ is parity-preserving 
in the subspace of $H^1_{\rm per}$ spanned by even functions, 
which we denote by $H^1_{\rm per,e}$. 
The linearized operator $\mathcal{L}$ acting on $H^1_{\rm per,e} \subset H^1_{\rm per}$ has  a trivial kernel 
and its spectrum consists of eigenvalues which do not 
accumulate to $0$, see \cite{BDT2000a,BDT2000b}. 
Hence, by the implicit function theorem for $(\eta,c) \in H^1_{\rm per,e} \times \mathbb{R}$, there exists a $C^1$ continuation 
of the solution $\eta \in H^1_{\rm per,e}$ 
to Babenko's equation (\ref{trav-Bab-eq}) in $c$ for this value of 
$c \in (1,c_*)$. By differentiating (\ref{trav-Bab-eq}) in $c$, 
we obtain (\ref{der-speed}), where $\mathcal{K} \eta \perp {\rm Ker}(\mathcal{L})$ with the unique solution $\partial_c \eta \in  H^1_{\rm per,e}$. 
\end{proof}

\begin{remark}
	\label{rem-fold}
	We say that the profile $\eta \in C^{\infty}_{\rm per}$ is 
	at the fold bifurcation if ${\rm Ker}(\mathcal{L}) \neq  {\rm span}(\eta')$ 
	for this value of $c \in (1,c_*)$. The fold bifurcation is included 
	in our study in Section \ref{sec-6}. Bifurcations 
	in Sections \ref{sec-3}, \ref{sec-4}, and \ref{sec-5} hold for ${\rm Ker}(\mathcal{L}) = {\rm span}(\eta')$.
\end{remark}

Related to the profile $\eta \in C^{\infty}_{\rm per}$ of the traveling wave, we define the wave momentum $\mathcal{P}(c)$ and 
the wave energy $\mathcal{H}(c)$ by 
evaluating $P(\psi,\eta)$ and $H(\psi,\eta)$ in (\ref{conserved-momentum}) and (\ref{conserved-Ham}) at $\psi= -c \mathcal{H} \eta$:
\begin{equation}
\label{wave-momentum}
\mathcal{P}(c) = c \langle K \eta, \eta \rangle
\end{equation}
and 
\begin{equation}
\label{wave-energy}
\mathcal{H}(c) = \frac{c^2}{2} \langle K \eta, \eta \rangle + \frac{1}{2}\langle \eta^2, (1+K\eta) \rangle.
\end{equation}
In addition, we introduce the squared $L^2$ norm, which is a part of the energy (\ref{wave-energy}) but is not a conservative quantity in the time evolution of the system (\ref{time-Bab-eq}):
\begin{equation}
\label{wave-norm}
\mathcal{N}(c) := \langle \eta, \eta \rangle. 
\end{equation}
These integral quantities are used in the Taylor expansion of solutions of the spectral problem (\ref{mod-Bab-eq}) near $(\mu,\lambda) = (0,0)$. Some useful relations are derived in the following lemma.

\begin{lemma}
\label{lem-identity}
It follows that
\begin{align}
\label{identity2}    
\mathcal{H}(c) = \frac{5}{6} c\mathcal{P}(c) + \frac{1}{6}\mathcal{N}(c),
\end{align}
where $\mathcal{P}(c)$, $\mathcal{H}(c)$,  and $\mathcal{N}(c)$ are defined by 
(\ref{wave-momentum}), (\ref{wave-energy}), and (\ref{wave-norm}). If ${\rm Ker}(\mathcal{L}) = {\rm span}(\eta')$, then we also have 
\begin{align}
    \label{identity1}
\mathcal{H}'(c) = c\mathcal{P'}(c) = \mathcal{N}'(c) + 5\mathcal{P}(c).
\end{align}
\end{lemma}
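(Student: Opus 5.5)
Both identities come from pairing Babenko's equation (\ref{trav-Bab-eq}) and the derivative relation (\ref{der-speed}) with suitable test functions. Throughout, $K$ in (\ref{wave-momentum})--(\ref{wave-energy}) means $\mathcal{K}$, and $\mathcal{K}$ is self-adjoint with real coefficients.

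\emph{Identity (\ref{identity2}).} Pair (\ref{trav-Bab-eq}) with $\eta$:
$$
c^2\langle \mathcal{K}\eta,\eta\rangle - \langle \eta,\eta\rangle = \tfrac12\langle \mathcal{K}\eta^2,\eta\rangle + \langle \eta\mathcal{K}\eta,\eta\rangle = \tfrac32 \langle \eta^2,\mathcal{K}\eta\rangle ,
$$
using $\langle \mathcal{K}\eta^2,\eta\rangle=\langle \eta^2,\mathcal{K}\eta\rangle$. Set $T:=\langle \eta^2,\mathcal{K}\eta\rangle$. This gives
$$
T=\tfrac23\big(c\mathcal{P}(c)-\mathcal{N}(c)\big).
$$
Now expand (\ref{wave-energy}):
$$
\mathcal{H} = \tfrac12 c\mathcal{P} + \tfrac12\mathcal{N} + \tfrac12 T .
$$
Substituting $T$ yields $\mathcal{H} = \tfrac12 c\mathcal{P}+\tfrac12\mathcal{N}+\tfrac13 c\mathcal{P}-\tfrac13\mathcal{N}$, which is (\ref{identity2}). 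No assumption on the kernel is needed here.

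\emph{Identity (\ref{identity1}).} Assume ${\rm Ker}(\mathcal{L})={\rm span}(\eta')$. By Lemma \ref{lem-diff}, $\eta$ is $C^1$ in $c$ and (\ref{der-speed}) holds. Write $\dot\eta=\partial_c\eta$ and $\mathcal{L}=c^2\mathcal{K}-1-R$, where $R$ is the linearization of the quadratic part. Since the cubic form $\Phi(\eta)=\tfrac12\langle\mathcal{K}\eta^2,\eta\rangle$ satisfies $\langle R\eta,\eta\rangle=3\Phi$-type homogeneity, we have $\langle R\eta,\dot\eta\rangle=2\cdot\tfrac{d}{dc}$ of the cubic part. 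Concretely, $\tfrac{d}{dc}T = 3\langle \eta\mathcal{K}\eta+\tfrac12\mathcal{K}\eta^2,\dot\eta\rangle$.

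The efficient route is variational: $\Lambda_c$ in (\ref{aug-energy}) has Euler--Lagrange equation (\ref{trav-Bab-eq}), so $\langle \Lambda_c'(\eta),\dot\eta\rangle=0$. Hence
$$
\frac{d}{dc}\Lambda_c(\eta)=\partial_c\Lambda_c = c\langle\mathcal{K}\eta,\eta\rangle=\mathcal{P}(c).
$$
Evaluate $\Lambda_c(\eta)$ using the first computation:
$$
\Lambda_c(\eta)=\tfrac12(c\mathcal{P}-\mathcal{N})-\tfrac12T=\tfrac16(c\mathcal{P}-\mathcal{N}).
$$
This gives $(c\mathcal{P})'-\mathcal{N}'=6\mathcal{P}$, i.e.
$$
c\mathcal{P}'=\mathcal{N}'+5\mathcal{P},
$$
which is the second equality. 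Differentiating (\ref{identity2}) gives
$$
\mathcal{H}'=\tfrac56(\mathcal{P}+c\mathcal{P}')+\tfrac16\mathcal{N}'.
$$
Substituting $\mathcal{N}'=c\mathcal{P}'-5\mathcal{P}$ gives $\mathcal{H}'=c\mathcal{P}'$.

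The main subtlety is justifying $\langle\Lambda_c'(\eta),\dot\eta\rangle=0$, i.e. that $\Lambda_c$ is differentiable along the branch. This follows from the $C^1$ dependence in Lemma \ref{lem-diff} and the smoothness of $\eta$. As an alternative check, one can pair (\ref{der-speed}) with $\eta$: self-adjointness and (\ref{L-on-eta}) give $\langle \dot\eta,\eta-c^2\mathcal{K}\eta\rangle=-2c\langle\mathcal{K}\eta,\eta\rangle$, which reproduces the same relation.
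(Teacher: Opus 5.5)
Your proof is correct and uses the same ingredients as the paper: pairing (\ref{trav-Bab-eq}) with $\eta$ to eliminate $\langle \eta^2,\mathcal{K}\eta\rangle$, the variational identity $\frac{d}{dc}\Lambda_c(\eta)=\mathcal{P}(c)$, and, in your ``alternative check'', the paper's own computation with (\ref{L-on-eta}) and (\ref{der-speed}). The only real difference is bookkeeping: you apply the variational identity to $\Lambda_c(\eta)=\tfrac16(c\mathcal{P}-\mathcal{N})$ rather than to $\Lambda_c(\eta)=c\mathcal{P}-\mathcal{H}$, and recover $\mathcal{H}'=c\mathcal{P}'$ by differentiating (\ref{identity2}), which is equivalent; note also that your unused aside about $R$ has a wrong factor, since $\frac{d}{dc}T=2\langle \eta\mathcal{K}\eta+\tfrac12\mathcal{K}\eta^2,\dot\eta\rangle$ rather than $3\langle\cdot\rangle$, so that paragraph should simply be deleted.
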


\begin{proof}
It follows from (\ref{trav-Bab-eq}) that 
\begin{align*}
c \mathcal{P}(c) - \mathcal{N}(c) = \langle (c^2 \mathcal{K}\eta - \eta), \eta \rangle = \frac{3}{2} \langle \eta^2, \mathcal{K} \eta \rangle.
\end{align*}
Eliminating $\langle \eta^2, \mathcal{K} \eta \rangle$ from 
$$
2 \mathcal{H}(c) = c \mathcal{P}(c) + \mathcal{N}(c) + \langle \eta^2, \mathcal{K} \eta \rangle,
$$
we obtain (\ref{identity2}). 

To derive (\ref{identity1}) under the assumption ${\rm Ker}(\mathcal{L}) = {\rm span}(\eta')$, see Lemma \ref{lem-diff}, we obtain from (\ref{aug-energy}) that 
$$
\Lambda_c(\eta)  = c \mathcal{P}(c) - \mathcal{H}(c).
$$
Differentiating $\Lambda_c(\eta)$ in $c$ and using (\ref{trav-Bab-eq}), yields $\frac{d}{dc} \Lambda_c(\eta) = \mathcal{P}(c)$, 
which implies the first identity in (\ref{identity1}). For the second identity in (\ref{identity1}), we obtain 
    \begin{align*}
c\mathcal{P}'(c) - \mathcal{N}'(c) &= c\langle \mathcal{K}\eta, \eta \rangle + 
2 \langle (c^2 \mathcal{K} \eta - \eta), \partial_c \eta \rangle  \\
        &= \mathcal{P}(c) - 2 \langle \mathcal{L} \eta, \partial_c \eta \rangle \\
        &= \mathcal{P}(c) + 4 c \langle \mathcal{K} \eta, \eta \rangle \\
        &= 5\mathcal{P}(c), 
    \end{align*}
    where we have used~\eqref{L-on-eta}, ~\eqref{der-speed}, and self-adjointness of $\mathcal{L}$ in $L^2_{\rm per}$.
\end{proof}

\begin{remark}
Let $\sigma_0$ be the smallest eigenvalue of $\mathcal{L} : H^1_{\rm per} \subset L^2_{\rm per} \to L^2_{\rm per}$.
Then, it follows from ~\eqref{L-on-eta} that  
\begin{align*}
        1 - \frac{c\mathcal{P}(c)}{\mathcal{N}(c)} = \frac{\langle \mathcal{L} \eta, \eta \rangle}{\langle \eta, \eta \rangle} \geq \sigma_0. 
\end{align*}  
Since it follows from (\ref{L-on-1}) that $\langle \mathcal{L} 1, 1 \rangle = -1 < 0$, we have $\sigma_0 < 0$, which implies that 
$$
c \mathcal{P}(c) \leq (1 - \sigma_0) \mathcal{N}(c).
$$
\end{remark}

\begin{remark}
	By using the second identity in (\ref{identity1}), we obtain 
\begin{align}
\Delta(c) &= 4 \mathcal{P}'(c) \mathcal{N}(c) - c \mathcal{P}'(c) \mathcal{N}'(c) + [\mathcal{N}'(c)]^2 \notag \\
&= 4 \mathcal{P}'(c) \mathcal{N}(c) - 5 \mathcal{P}(c) \mathcal{N}'(c), 
\label{deltaC}
\end{align}
where $\Delta(c)$ appears in Section \ref{sec-3}, see the proof of Theorem \ref{theorem-bif-1}.
\end{remark}

Figure \ref{fig:delta} displays the numerically computed values of $\mathcal{H}'(c)$ defined by (\ref{identity1}) (red solid line) and $\Delta(c)$ defined by (\ref{deltaC}) (green solid line) as functions of the steepness parameter $s$ in the logarithmic scale. 
The vertical solid lines mark zeros of $c'(s)$, for which the mapping $c \to \eta \in C^{\infty}_{\rm per}$ is not continuously differentiable in $c \in (1,c_*)$.
Zeros of $\Delta(c)$ (green line) mark the hourglass bifurcation, see Section \ref{sec-3}. Zeros of $\mathcal{H}'(c)$  (red line) mark the figure-$\infty$ bifurcation, see Section \ref{sec-5}. Zeros of $c'(s)$ (vertical lines) mark the figure-$8$ bifurcation, see Section \ref{sec-6}. 

\begin{figure}[htb!]
	\centering
	\includegraphics[width=0.82\linewidth]{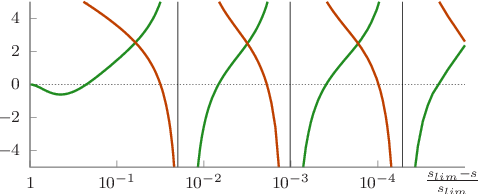}
	\caption{The dependence of $\mathcal{H}'(c)$ (red solid) and $\Delta(c)$ (green solid) versus the wave steepness $s$, with the vertical lines for zeros of $c'(s)$. }
	\label{fig:delta}
\end{figure}

In our numerical computations, we compute $\mathcal{P}(c)$, $\mathcal{P}'(c)$, $\mathcal{H}(c)$, and $\mathcal{H}'(c)$, after which we obtain $\mathcal{N}(c)$ and $\mathcal{N}'(c)$ from the relations~\eqref{identity2} and~\eqref{identity1}.
The values of $\mathcal{P}(c)$ and $\mathcal{H}(c)$ are obtained from 
(\ref{wave-momentum}) and (\ref{wave-energy}), 
The derivative $\mathcal{P}'(c)$ is obtained from 
\begin{align*}
\mathcal{P}'(c) = \langle \mathcal{K}\eta, \eta \rangle + 2c\langle \mathcal{K}\partial_c\eta, \eta \rangle,
\end{align*}
where $\partial_c\eta$ is determined from solving the linear equation (\ref{der-speed}) numerically via the MINRES method to a prescribed tolerance defined as 
$$
\frac{\|\mathcal{L}\partial_c\eta + 2c\mathcal{K}\eta \|_{L^2}}{
	\|2c\mathcal{K}\eta \|_{L^2}} \leq 10^{-10}.
$$ 
From $\mathcal{P}'(c)$, we obtain the derivative $\mathcal{H}'(c) = c \mathcal{P}'(c)$ by using (\ref{identity1}).

Figure \ref{fig:beigs} displays the first (smallest) eigenvalues of the operator $\mathcal{L}$ for co-periodic (green curves) and anti-periodic (red curves) perturbations versus the normalized steepness. In the limit of zero steepness, we obtain from (\ref{eta-small}) and (\ref{c-small}):
$$
\lim_{a \to 0} \mathcal{L} = \mathcal{K}-1.
$$
Therefore, we have a simple eigenvalue at $-1$  and a sequence of double eigenvalues 
at $\{ 0,1,2,\dots \}$ for co-periodic perturbations associated with the Fourier modes $1$ and $e^{\pm i n u}$ for $n \in \mathbb{N}$, respectively, and a sequence of double eigenvalues $\{ -\frac{1}{2}, \frac{1}{2}, \frac{3}{2}, \dots \}$ for anti-periodic perturbations associated with the Fourier modes $e^{\pm \frac{i}{2} (2n-1) u}$ for $n \in \mathbb{N}$. These eigenvalues are continued with respect to the steepness parameter. 

The red solid dots correspond to the subharmonic (double-period) bifurcation of traveling wave solutions, which results in the ellipse bifurcation in the modulational stability problem (\ref{mod-Bab-eq}) described in Section~\ref{sec-4}. The green solid dots correspond to the turning points of the speed $c$ versus steepness $s$, where $c'(s) = 0$, see vertical solid lines on Figure \ref{fig:delta}. At these points, the mapping $c \to \eta \in C^{\infty}_{\rm per}$ is not continuously differentiable, see Lemma \ref{lem-diff}, and the figure-$8$ bifurcation occurs in the spectral problem (\ref{mod-Bab-eq}) described in Section \ref{sec-6}. Each zero crossing for the red curve occurs between a zero crossing by the green curve.

\begin{figure}[htb!]
	\centering
	\includegraphics[width=0.75\linewidth]{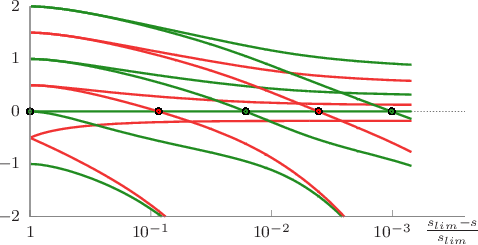}
	\caption{The dependence of the first (smallest) eigenvalues of $\mathcal{L}$ for co-periodic (green) and anti-periodic (red) perturbations. }
	\label{fig:beigs}
\end{figure}

\subsection{The limit $\mu \to 0^{\pm}$ in the modulational stability problem}
\label{sec-2-2}

It is obvious that the limit $\mu \to 0^{\pm}$ in the spectral problem (\ref{mod-Bab-eq}) is different from the co-periodic stability problem (\ref{spec-Bab-eq}) because of the jumps in the continuation of 
the linear operators $\mathcal{H}$, $\mathcal{M}$, and $\mathcal{M}^*$ for 
$\mu \neq 0$. However, the following lemma shows that eigenfunctions of (\ref{mod-Bab-eq}) in the limit $\mu \to 0^{\pm}$ are related to eigenfunctions of (\ref{spec-Bab-eq}) by using a simple invertible transformation. 

\begin{lemma}
	\label{lem-transformation}
	Let $(v,w) \in H^1_{\rm per} \times H^1_{\rm per}$ be an eigenfunction of (\ref{spec-Bab-eq}) for an eigenvalue $\lambda \in \mathbb{C} \backslash \{0\}$. Then, the eigenfunction  $(\hat{v},\hat{w}) \in H^1_{\rm per} \times H^1_{\rm per}$ of (\ref{mod-Bab-eq}) in the limit $\mu \to 0^{\pm}$ exists for the same eigenvalue $\lambda \in \mathbb{C} \backslash \{0\}$ and is given by 
	\begin{equation} 
	\label{transformation}
	\hat{v} = \mp i \langle 1, v \rangle \eta' + v, \qquad 
	\hat{w} = \pm i c \langle 1, v \rangle + w.
	\end{equation}
\end{lemma}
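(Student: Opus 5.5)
The plan is to verify the claim by direct substitution. Since (\ref{mod-Bab-eq}) is analytic in $\mu$ for each sign (Remark \ref{rem-anal}), its limit $\mu \to 0^{\pm}$ is obtained by setting $\mu = 0$ in the $\pm$ version of the operators, i.e. with $\mathcal{H}_{\pm} = \mathcal{H} \pm i\langle 1,\cdot\rangle$, $\mathcal{M}_{\pm}$, $\mathcal{M}^*_{\pm}$ and the operators $\mathcal{K}$, $\mathcal{L}$. So it suffices to check that $(\hat v,\hat w)$ given by (\ref{transformation}) satisfies
$$
\mathcal{K}\hat w = \lambda \mathcal{M}_{\pm}\hat v, \qquad \mathcal{L}\hat v = \lambda(\mathcal{M}^*_{\pm}\hat w - 2c\mathcal{H}_{\pm}\hat v),
$$
whenever $(v,w)$ solves (\ref{spec-Bab-eq}) with $\lambda \neq 0$. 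The transformation is clearly invertible, since $\langle 1, \hat v\rangle = \langle 1, v\rangle$. We will use four elementary facts: $\mathcal{K}1 = 0$, $\mathcal{L}\eta' = 0$, $\langle 1,\eta'\rangle = 0$, and $\mathcal{H}\eta' = -\mathcal{K}\eta$.

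\emph{First equation.} We have $\mathcal{K}\hat w = \mathcal{K} w$. For the right-hand side, first compute $\mathcal{M}\eta' = (1+\mathcal{K}\eta)\eta' + \eta'\mathcal{H}\eta' = \eta'$, and note that $\langle 1,\hat v\rangle = \langle 1,v\rangle$. Then
$$
\mathcal{M}_{\pm}\hat v = \mathcal{M} v \mp i\langle 1,v\rangle \eta' \pm i\eta'\langle 1,v\rangle = \mathcal{M} v,
$$
so the first equation reduces exactly to the first equation of (\ref{spec-Bab-eq}).

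\emph{Second equation.} On the left, $\mathcal{L}\hat v = \mathcal{L} v$. On the right, $\mathcal{M}^*$ acting on a constant $k$ gives $k(1+2\mathcal{K}\eta)$, because $-\mathcal{H}(\eta' k) = k\mathcal{K}\eta$. Also $\mathcal{H}_{\pm}\hat v = \mathcal{H}v \pm i\langle 1,v\rangle(1+\mathcal{K}\eta)$. Collecting terms, the right-hand side equals
$$
\mathcal{M}^* w - 2c\mathcal{H} v \mp i\bigl(\langle \eta', w\rangle + c\langle 1,v\rangle\bigr),
$$
where the contributions $\pm ic\langle 1,v\rangle(1+2\mathcal{K}\eta)$ and $\mp 2ic\langle 1,v\rangle(1+\mathcal{K}\eta)$ combine to $\mp ic\langle 1,v\rangle$. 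The whole proof therefore reduces to the scalar identity
$$
\langle \eta', w\rangle + c\langle 1, v\rangle = 0 .
$$

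The main step is to derive this identity from (\ref{spec-Bab-eq}), and this is where $\lambda \neq 0$ is needed. Two projections do the job.

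\emph{Projection of the first equation onto $1$.} Since $\langle 1,\mathcal{K}w\rangle = 0$ and $\lambda \neq 0$, we get $\langle 1,\mathcal{M}v\rangle = 0$. Using $\langle 1,\eta'\mathcal{H}v\rangle = \langle \eta',\mathcal{H}v\rangle = -\langle \mathcal{H}\eta', v\rangle = \langle \mathcal{K}\eta, v\rangle$, this becomes $\langle 1,v\rangle = -2\langle \mathcal{K}\eta, v\rangle$.

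\emph{Projection of the second equation onto $\eta'$.} Self-adjointness of $\mathcal{L}$ and $\mathcal{L}\eta' = 0$ kill the left-hand side. Dividing by $\lambda \neq 0$, and using $\langle \eta',\mathcal{M}^* w\rangle = \langle \mathcal{M}\eta', w\rangle = \langle \eta', w\rangle$ together with $\langle \eta',\mathcal{H}v\rangle = \langle \mathcal{K}\eta, v\rangle$, we obtain $\langle \eta', w\rangle = 2c\langle \mathcal{K}\eta, v\rangle$.

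Combining the two gives $\langle \eta', w\rangle = -c\langle 1,v\rangle$, which is the required identity. The remaining work is careful sign bookkeeping for the $\pm$ branches and for the skew-adjointness of $\mathcal{H}$, which I expect to be the only real source of error.
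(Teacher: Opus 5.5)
Your proof is correct and follows the paper's argument. After substituting (\ref{transformation}) you get $\mathcal{K}\hat w = \mathcal{K} w$, $\mathcal{L}\hat v = \mathcal{L} v$ and $\mathcal{M}_{\pm}\hat v = \mathcal{M} v$, and the leftover term vanishes by the identity $\langle \eta', w\rangle + c\langle 1, v\rangle = 0$, obtained by projecting onto $1$ and $\eta'$ with $\lambda \neq 0$. The only difference is cosmetic: the paper derives this identity from the Fredholm constraints on the modulational eigenfunction $(\hat v,\hat w)$, whereas you derive it from the co-periodic eigenfunction $(v,w)$, which matches the direction in which the lemma is stated.
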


\begin{proof}
	In the limit $\mu \to 0^{\pm}$, we rewrite (\ref{mod-Bab-eq}) in the explicit form:
\begin{equation}
\left\{ \begin{array}{ll}
\mathcal{K} \hat{w} &= \lambda \left( \mathcal{M} \hat{v} \pm i \langle 1, \hat{v} \rangle \eta' \right), \\
\mathcal{L} \hat{v} &= \lambda \left( \mathcal{M}^*  \hat{w} 
- 2 c \mathcal{H} \hat{v} \mp i \langle \eta', \hat{w} \rangle \mp 2 c i \langle 1, \hat{v} \rangle \right).
\end{array}
\right. 
\label{mod-Bab-eq-limit}
\end{equation}
Since $\mathcal{K}1 = 0$ and $\mathcal{L} \eta' = 0$, we have the following orthogonality condition for every solution $(\hat{v},\hat{w}) \in H^1_{\rm per} \times H^1_{\rm per}$ of (\ref{mod-Bab-eq-limit}) with $\lambda \in \mathbb{C} \backslash \{0\}$:
\begin{align}
\label{Fredholm-constraints}
\left. \begin{array}{r}
\langle (1 + 2 \mathcal{K} \eta), \hat{v} \rangle = 0, \\
\langle \eta', \hat{w} \rangle - 2 c \langle  \mathcal{K} \eta, \hat{v} \rangle = 0, 
\end{array}
\right\}
\end{align}	
where we have used $\mathcal{M}^* 1 = 1 + 2 \mathcal{K} \eta$ and $\mathcal{M} \eta' = \eta'$ obtained from (\ref{operator-M}). The two constraints imply 
$\langle \eta', \hat{w} \rangle + c \langle 1, \hat{v} \rangle = 0$, which enables us to rewrite (\ref{mod-Bab-eq-limit}) in the equivalent form:
\begin{equation}
\left\{ \begin{array}{ll}
\mathcal{K} \hat{w} &= \lambda \left( \mathcal{M} \hat{v} \pm i \langle 1, \hat{v} \rangle \eta' \right), \\
\mathcal{L} \hat{v} &= \lambda \left( \mathcal{M}^*  \hat{w} 
- 2 c \mathcal{H} \hat{v} \mp c i \langle 1, \hat{v} \rangle \right).
\end{array}
\right. 
\label{eq-inhomog-problem}
\end{equation}
By using the transformation (\ref{transformation}), we obtain 
$\mathcal{K} \hat{w} = \mathcal{K} w$, 
$\mathcal{L} \hat{v} = \mathcal{L} v$, and $\langle 1, \hat{v} \rangle = \langle 1, v \rangle$. Furthermore, the inhomogeneous terms in (\ref{eq-inhomog-problem}) cancel out under the transformation (\ref{transformation}) and the spectral problem (\ref{eq-inhomog-problem}) reduces to the form (\ref{spec-Bab-eq}) for $(v,w) \in H^1_{\rm per} \times H^1_{\rm per}$.
\end{proof}

\begin{remark}
	\label{rem-multiplicity}
	Due to Lemma \ref{lem-transformation}, the algebraic and geometric multiplicities of every eigenvalue of the co-periodic spectral problem (\ref{spec-Bab-eq}) are equal to those in the spectral problem (\ref{mod-Bab-eq}) as $\mu \to 0^{\pm}$, that is, in the spectral problem (\ref{mod-Bab-eq-limit}). The result holds for $\lambda = 0$ due to the orthogonality conditions (\ref{Fredholm-constraints}) needed for generalized eigenfunctions.
\end{remark}

\section{Methodology}
\label{sec-1-4}

We explain the method that we use for the justification analysis of the four bifurcations considered in Sections \ref{sec-3}, \ref{sec-4}, \ref{sec-5}, and \ref{sec-6}. The spectral bands (\ref{spectral-bands}) are defined by the eigenvalues $\lambda$ of the spectral problem (\ref{mod-Bab-eq}) continued as functions of the Floquet parameter $\mu$. In particular, we are interested in the behavior of these functions near the origin for $\mu$ being close to either $0$ or $\frac{1}{2}$, and we are attempting to match the behavior of the spectral bands seen in Figure \ref{fig-bif} with the normal forms justified via some rigorous theory. 

If we investigate zeros of some Fredholm determinant which give eigenvalues of the purely discrete spectrum of the spectral problem (\ref{mod-Bab-eq}) for a finite $\mu \in \mathbb{B}$, we can use the Weierstrass preparation theorem to pull out the (finite) part that is vanishing, and ignore the rest \cite{Hor73}. This is somewhat standard, though it technically requires some analyticity to use. 
The spectral problem (\ref{mod-Bab-eq}) is analytic in $\mu$ and $\lambda$ for both signs, see Remark \ref{rem-anal}, and the analytic projections  
to the discrete spectrum near $\lambda = 0$ are justified. 

To explain ideas of the justification analysis, we study eigenvalues and eigenvectors of a matrix perturbed by a small parameter $\ve$. This parameter represents either small $|\mu|$ in Sections \ref{sec-3}, \ref{sec-5}, and \ref{sec-6} or $|\frac{1}{2} - \mu|$ in Section \ref{sec-4}. Furthermore, the small deviation of the speed $c$ from the critical value $c_0$ for the four bifurcations also produces perturbation terms, which are represented by another small parameter. In what follows, we require the second small parameter to be scaled consistently with the parameter $\varepsilon$ in order to obtain splittings of degenerate eigenvalues near $\lambda = 0$ at the leading order of the asymptotic theory. 

\subsection{Two double Jordan blocks}
\label{sec-case-1}

This situation represents the bifurcation studied in Section \ref{sec-3}. For simplicity, we take the following normalized matrix
\begin{equation}
\label{A-0}
A_0 = \begin{pmatrix} 0 & 1 & 0 & 0 \\ 0 & 0 & 0 &0 \\ 0 & 0 & 0& 1 \\ 0 & 0 & 0 &0 \end{pmatrix}
\end{equation}
with a quadruple zero eigenvalue. We want to consider how the eigenvalues of 
$$
A(\ve) = A_0 + \ve B + \ve^2 C
$$
unfold for $\ve \neq 0$, where $B$ and $C$ are some given generic matrices 
with entries $(b_{ij})_{1 \leq i,j \leq 4}$ and $(c_{ij})_{1 \leq i,j \leq 4}$.

We first explain the method of direct Puiseux expansions which is used 
in Section \ref{sec-3} (and other sections). Then, we justify the direct 
method by using the Newton polytope method for analysis of the characteristic polynomials. The advantage of the direct Puiseux expansions is that they avoid 
computations of all projections of perturbations to the basis of generalized eigenfunctions and single out only the perturbation terms which lead to splitting 
of eigenvalues near $\lambda = 0$. 

\underline{\bf Direct  Puiseux expansions.} We fix the standard basis $\{ e_1,e_2,e_3,e_4\}$ such that 
\begin{equation*}
A_0 e_1 = 0, \quad A_0 e_2 = e_1, \quad A_0 e_3 = 0, \quad  A_0 e_4 = e_3. 
\end{equation*}
Because of the Jordan block structure of $A_0$, the small eigenvalues of $A(\ve)$ scale like $\mathcal{O}(\ve^{1/2})$ as $\ve \to 0$ \cite{Hin91}. This theory allows us to use Puiseux expansions for eigenvalues 
$$
\lambda(\ve) = \ve^{1/2} \lambda_1 + \ve \lambda_2 + \cO(\ve^{3/2})
$$
and eigenvectors
$$
v = v_0 + \ve^{1/2} v_1 + \ve v_2 + \cO(\ve^{3/2}),
$$
which are inserted into the matrix eigenvalue problem $A(\ve) v = \lambda(\ve) v$.
Collecting powers of $\ve^{1/2}$, we get the following results. 

\underline{\bf Order $\mathcal{O}(\ve^0)$:}
Since 
$$
A_0 v_0 = 0 \quad \Rightarrow \quad  v_0 \in \ker A_0 = \Span{e_1,e_3},
$$
we can write 
\begin{equation}
\label{eq:vect0}
v_0 = \alpha e_1 + \beta e_3,
\end{equation}
with coefficients $\alpha, \beta \in \R$.

\underline{\bf Order $\mathcal{O}(\ve^{1/2})$:}
Since 
\begin{equation*}
A_0 v_1 = \lambda_1 v_0 \quad \Rightarrow \quad  
v_1 \in \ker A_0^2 = \Span{e_1,e_2,e_3,e_4}.
\end{equation*}
we can write 
\begin{equation}
\label{eq:vect1}
v_1 = \lambda_1 (\alpha e_2 + \beta e_4) + u_1,
\end{equation}
where $u_1 \in \ker{A_0}$ is arbitrary. Since $\alpha,\beta$ are arbitrary in (\ref{eq:vect0}), we can set $u_1 \equiv 0$ without loss of generality. 

\underline{\bf Order $\mathcal{O}(\ve^{1})$:}
We obtain 
\begin{equation*}
A_0 v_2 + B v_0 = \lambda_1 v_1 + \lambda_2 v_0. 
\end{equation*}
The Fredholm alternative theorem implies that there exists a solution for $v_2$ if and only if
\begin{equation}
\label{eq:solveps}
\lambda_1 v_1 - B v_0 \in {\rm ran}(A_0) = [\ker(A_0^T)]^{\perp},
\end{equation}
since $\lambda_2 v_0 \in {\rm ran}(A_0)$. Using (\ref{eq:vect0}), (\ref{eq:vect1}), and $\langle e_{2,4}, e_{1,3} \rangle  = 0$ (this is essentially the reason why we can set $u_1 \equiv 0$), we get 
from (\ref{eq:solveps}) due to  $\ker(A_0^T) = \Span{e_2,e_4}$,
\begin{align}
\label{alpha-beta}
\left\{ \begin{array}{l} 
(b_{21} \alpha + b_{23} \beta ) = \lambda_1^2 \alpha, \\ 
(b_{41} \alpha + b_{43} \beta) = \lambda_1^2 \beta.
\end{array} \right. 
\end{align}
A non-trivial pair $(\alpha,\beta)$ exists in (\ref{alpha-beta}) if and only if 
$\lambda_1^2$ is an eigenvalue of the projection matrix 
\begin{equation}
\label{G-1}
G_1 = \begin{pmatrix} b_{21} & b_{23} \\ b_{41} & b_{43} \end{pmatrix} = \begin{pmatrix} \langle e_2, B e_1\rangle &  \langle e_2, B e_3\rangle \\  \langle e_4, B e_1\rangle&  \langle e_4, B e_3\rangle \end{pmatrix}.
\end{equation}
$G_1$ provides a map for the image of $B$ acting on the right eigenspace $\{ e_1, e_3\}$, projected onto the left eigenspace $\{ e_2, e_4\}$. If we denote the eigenvalues of $G_1$ by $s_1$ and $s_2$, then the four eigenvalues $\lambda(\ve)$ of $A(\ve)$ split from the quadruple zero eigenvalue if $\ve \neq 0$ according to the following asymptotic balance:
\begin{equation}
\label{eigenvalue-split-1}
\left\{ \begin{array}{l} 
\lambda(\ve) = \ve^{1/2} \sqrt{ s_1} + \mathcal{O}(\ve), \\  
\lambda(\ve) = -\ve^{1/2}\sqrt{ s_1} + \mathcal{O}(\ve), \\ 
\lambda(\ve) = \ve^{1/2}\sqrt{ s_2} + \mathcal{O}(\ve), \\ 
\lambda(\ve) = -\ve^{1/2}\sqrt{ s_2} + \mathcal{O}(\ve). 
\end{array} \right. 
\end{equation}
For each of these eigenvalues, we have an associated eigenvector $v_0$ in (\ref{eq:vect0}) and an associate generalized eigenvector $v_1$ in (\ref{eq:vect1}). Furthermore, we can find the second-order correction $v_2$ by 
\begin{equation}
\label{eq:secondvec}
v_2 = A_0^T(\lambda_1 v_1 - B v_0) + \lambda_2 (\alpha e_2 + \beta e_4) + u_2
\end{equation}
where $u_2 \in \ker{A_0}$ is arbitrary.

For the splitting of the quadruple zero eigenvalue in Section \ref{sec-3}, we have a simplification compared to the general theory. Namely, the projection matrix $G_1$ in (\ref{G-1}) has 
\begin{equation}
\label{b-coefficients}
b_{21} = b_{23} = b_{41} = 0, \quad b_{43} \neq 0,
\end{equation}
see (\ref{matrix-projections}). As a result, $s_1 = 0$ and $s_2 \neq 0$ in (\ref{eigenvalue-split-1}) so that 
only two of the four eigenvalues $\lambda(\ve)$ of $A(\ve)$
split at the order of $\mathcal{O}(\ve^{1/2})$, whereas 
the other two eigenvalues split at the higher order of $\mathcal{O}(\ve)$.

To consider the higher-order splitting of the eigenvalue $\lambda(\ve) = \mathcal{O}(\ve)$ for $s_1 = 0$ and $s_2 \neq 0$, we introduce the eigenvector of $G_1$ for the zero eigenvalue $s_1 = 0$ as  $(\alpha_1, \beta_1)^T$:
$$
G_1 \begin{pmatrix} \alpha_1 \\ \beta_1 \end{pmatrix} = 0.
$$
This yields $v_0$ in (\ref{eq:vect0}) and since $\lambda_1 = 0$ and $u_1 \equiv 0$, 
then (\ref{eq:vect1}) implies that $v_1 = 0$. Furthermore, (\ref{eq:secondvec}) implies that  
\begin{equation}
\label{eq:vect2}
v_2 = - A_0^T B v_0 + \lambda_2 (\alpha_1 e_2 + \beta_1 e_4) + u_2,
\end{equation}
where $u_2 \in \ker{A_0}$. Unlike previously, we need to assume here that $u_2 \neq 0$ since we get constraints on $u_2$ from (\ref{eq:messy}) below, hence we write $u_2 = \gamma e_1 + \delta e_3$ with new coefficients $\gamma,\delta \in \R$.

\underline{\bf Order $\mathcal{O}(\ve^{3/2})$:} We obtain
$$
A_0 v_3 = \lambda_3 v_0 \quad  \Rightarrow \quad 
v_3 = \lambda_3(\alpha_1 e_2 + \beta_1 e_4) 
+ u_3,
$$
where $u_3 \in \ker{A_0}$ is arbitrary.

\underline{\bf Order $\mathcal{O}(\ve^{2})$:} We obtain 
\begin{equation*}
A_0 v_4 + B v_2 + C v_0 = \lambda_2 v_2 + \lambda_4 v_0. 
\end{equation*}
The Fredholm alternative theorem implies that there exists a solution for $v_4$ if and only if
\begin{equation*}
\lambda_2 v_2 - B v_2 - C v_0 \in {\rm ran}(A_0) = [\ker(A_0^T)]^{\perp}.
\end{equation*}
Substituting $v_2$ from \eqref{eq:vect2} and using the fact that $\ker{A_0^T} = \Span{e_2,e_4}$ we arrive at the pair of conditions: 
\begin{equation}
\label{eq:messy}
\lambda_2^2 \begin{pmatrix} \alpha_1 \\ \beta_1 \end{pmatrix} - \lambda_2 G_2 \begin{pmatrix} \alpha_1 \\ \beta_1 \end{pmatrix} - G_3 \begin{pmatrix} \alpha_1 \\ \beta_1 \end{pmatrix} =  G_1 \begin{pmatrix} \gamma \\ \delta \end{pmatrix},
\end{equation} 
where 
$$
G_2 = \begin{pmatrix} \langle e_2, (A_0^T B + B A_0^T) e_1\rangle &  \langle e_2, (A_0^T B + B A_0^T) e_3\rangle \\  \langle e_4, (A_0^T B + B A_0^T) e_1\rangle&  \langle e_4,(A_0^T B + B A_0^T) e_3\rangle \end{pmatrix}
$$
and
$$
G_3 = \begin{pmatrix} \langle e_2, (B A_0^{T} B - C) e_1\rangle &  \langle e_2, (B A_0^{T} B - C) e_3\rangle \\  \langle e_4, (B A_0^{T} B - C) e_1\rangle&  \langle e_4, (B A_0^{T} B - C) e_3\rangle \end{pmatrix}.
$$
The left-hand side of \eqref{eq:messy} is a pair of polynomials in $\lambda_2$, dependent on the eigenvector $(\alpha_1,\beta_1)^T$ of the zero eigenvalue of $G_1$, which we can denote as $(p_1(\lambda_2),p_2(\lambda_2))^T$. There exists a solution of \eqref{eq:messy} for $(\gamma,\delta)^T$ if and only if $(p_1(\lambda_2),p_2(\lambda_2))^T$ is in the range of $G_1$ i.e. orthogonal to a eigenvector of $G_1^T$ for the zero eigenvalue $s_1 = 0$, denoted by $(\ell_1,\ell_2)^T$:
\begin{equation}
\label{eq:quad}
\ell_1 p_1(\lambda_2) + \ell_2 p_2(\lambda_2) = 0.
\end{equation}
This gives a quadratic equation on $\lambda_2$ with two roots labeled as $\lambda_2^{\pm}$. Combined with the other pair related to $s_2 \neq 0$, see 
(\ref{eigenvalue-split-1}), we obtain four eigenvalues splitting from the quadruple zero eigenvalue if $\ve \neq 0$ according to the following asymptotic balance:
\begin{equation}
\label{eigenvalue-split-2}
\left\{ \begin{array}{l} 
\lambda(\ve) = \ve \lambda_2^+ + \mathcal{O}(\ve^{3/2}), \\  
\lambda(\ve) = \ve \lambda_2^- + \mathcal{O}(\ve^{3/2}), \\ 
\lambda(\ve) = \ve^{1/2}\sqrt{ s_2} + \mathcal{O}(\ve), \\ 
\lambda(\ve) = -\ve^{1/2}\sqrt{ s_2} + \mathcal{O}(\ve). 
\end{array} \right. 
\end{equation}
For the case (\ref{b-coefficients}), the eigenvector of $G_1$ in (\ref{G-1}) for $s_1 = 0$ is $(\alpha_1,\beta_1)^T = (1,0)^T$ so that the second equation 
in the system (\ref{eq:messy}) defines uniquely $\delta$, whereas the first equation in the system (\ref{eq:messy}) gives a quadratic equation 
on $\lambda_2$, which is equivalent to (\ref{eq:quad}). The correction $\gamma$ is not uniquely defined and can be set to $0$.

\underline{\bf Justification with the Newton polytope method.} A suitable algorithm which justifies the direct expansions leading to 
the splitting formulas (\ref{eigenvalue-split-1}) or (\ref{eigenvalue-split-2}) 
is based on the {\em Newton polytope method} for the characteristic polynomials \cite{Kir92}. Let $cp(\lambda,\ve)$ denote the characteristic polynomial of the perturbed matrix $A(\ve)$:
$$
cp(\lambda,\ve) := \det\left(A_0 + \ve B + \ve^2 C - \lambda I \right).
$$
The polynomial $cp(\lambda,\ve)$ is a polynomial in two variables $\lambda$ and $\ve$. Since we are looking at perturbations near zero, we can get a leading approximation of this polynomial via the lower convex hull of its Newton polytope in the $(\lambda,\ve)$ lattice, where points correspond to integer powers \cite{Kir92}. That is, we expand $cp(\lambda,\ve)$ in powers of $\lambda$ and $\ve$ and keep track of the non-zero coefficient in front of powers. For each of the power  $\lambda^i\ve^j$ with a non-zero coefficient, we plot points on an integer lattice $(i,j)$. Then we take the lower-convex hull of these points.

\begin{figure}[htb!]
	\begin{center}
		\includegraphics[scale=0.4]{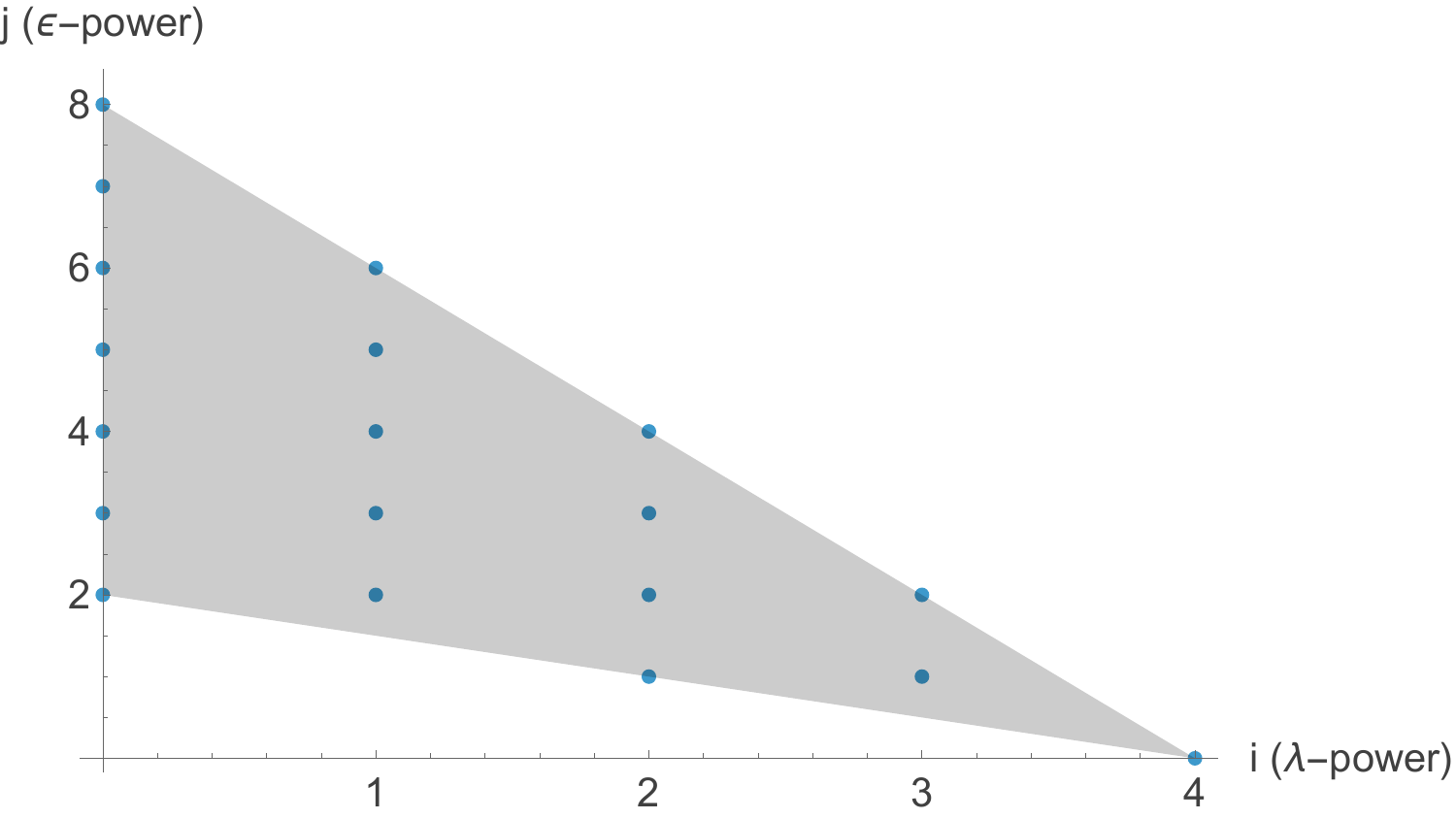}
		\caption{The Newton polytope of $cp(\lambda,\ve) = \det\left(A_0 + \ve B + \ve^2 C - \lambda I \right)$. The lower convex hull is the points in the grey triangle. The boundary consists of the lattice points at $(0,2),(2,1)$ and $(4,0).$}
		\label{fig:newtpoly22gen}
	\end{center}
\end{figure}

Figure \ref{fig:newtpoly22gen} represents the generic case for the matrix $A_0$ in (\ref{A-0}), for which we have 
\begin{equation}
\label{eq:cp1approx}
cp(\lambda,\ve) = \lambda^4 + a_1 \ve\lambda^2  + a_2 \ve^2 + \hot, 
\end{equation}
where $\hot$ denotes terms of the order of $\ve \lambda^3$, $\ve^2 \lambda^2$, $\ve^2 \lambda$, and $\ve^3$. When the coefficients $a_1$ and $a_2$ are nonzero, the lower-convex hull with the boundary connecting $(0,2)$, $(2,1)$, and $(4,0)$ gives the lowest-order approximation to the characteristic polynomial.

The coefficients $a_1$ and $a_2$ are given in terms of the entries of $B$. In certain cases \cite{ME98,MBO97} some formulas are known to exist, though not for the general forms of the matrices. However in our examples, the dimensions are low enough that we can write out the characteristic polynomial and collect terms explicitly. In particular, for the matrix $A_0$ in (\ref{A-0}) perturbed by $\ve B + \ve^2 C$, we can write
\begin{equation*}
\begin{cases}
a_1 = -(b_{21} + b_{43}) = -\tr{G_1}, \\ 
a_2 = b_{21}b_{43} - b_{23}b_{41} = \det(G_1),
\end{cases}
\end{equation*}
in agreement with our previous direct computations, see (\ref{alpha-beta}).  Substituting $\lambda = \lambda_1 \ve^{1/2}$ into \eqref{eq:cp1approx} yields
$$
cp(\ve^{1/2} \lambda_1, \ve) = \ve^2 \left( \lambda_1^4 - \tr{G_1} \lambda_1^2 + \det(G_1) \right) + \cO(\ve^{>2}) 
$$
from which we obtain $\lambda_1 \sim \pm \sqrt{s_{1,2}}$, where $s_1$ and $s_2$ are eigenvalues of $G_1$, and recover the eigenvalue splitting in (\ref{eigenvalue-split-1}).

If we now assume that $s_1 = 0$ and $s_2 \neq 0$, then $a_1 \neq 0$ and $a_2 = 0$, so that we remove the point $(0,2)$ in the lower convex hull and re-compute the approximating polynomial. The characteristic polynomial is now expanded as 
\begin{equation}
\label{eq:cp22dega20}
cp(\lambda,\ve) = \lambda^4 + a_1 \ve \lambda^2  +a_3 \ve^2 \lambda + a_4 \ve^3 +  \hot,
\end{equation}
where $a_1 = -\tr{G_1}$ as before, $a_3$ and $a_4$ have been computed with the symbolic software,
\begin{equation*}
\begin{cases}
a_3 = (b_{11}+b_{22})\,b_{43}-b_{41}(b_{13}+b_{24})
-b_{23}(b_{31}+b_{42})
+ b_{21}(b_{33}+b_{44}), \\
a_4 = b_{41}(b_{13} b_{22} + b_{24} b_{33} - c_{23}) -b_{43}(b_{11} b_{22} + b_{24} b_{31} - c_{21}) \\
\quad \quad 
+ b_{23}(b_{11} b_{42} + b_{31} b_{44} - c_{41})
- b_{21}(b_{13} b_{42} + b_{33} b_{44} - c_{43}),
\end{cases}
\end{equation*}
and $\hot$ denote terms of the order of $\ve \lambda^3$, $\ve^2 \lambda^2$, $\ve^3 \lambda$, and $\ve^4$.

\begin{figure}[htb!]
	\begin{center}
		\includegraphics[scale=0.4]{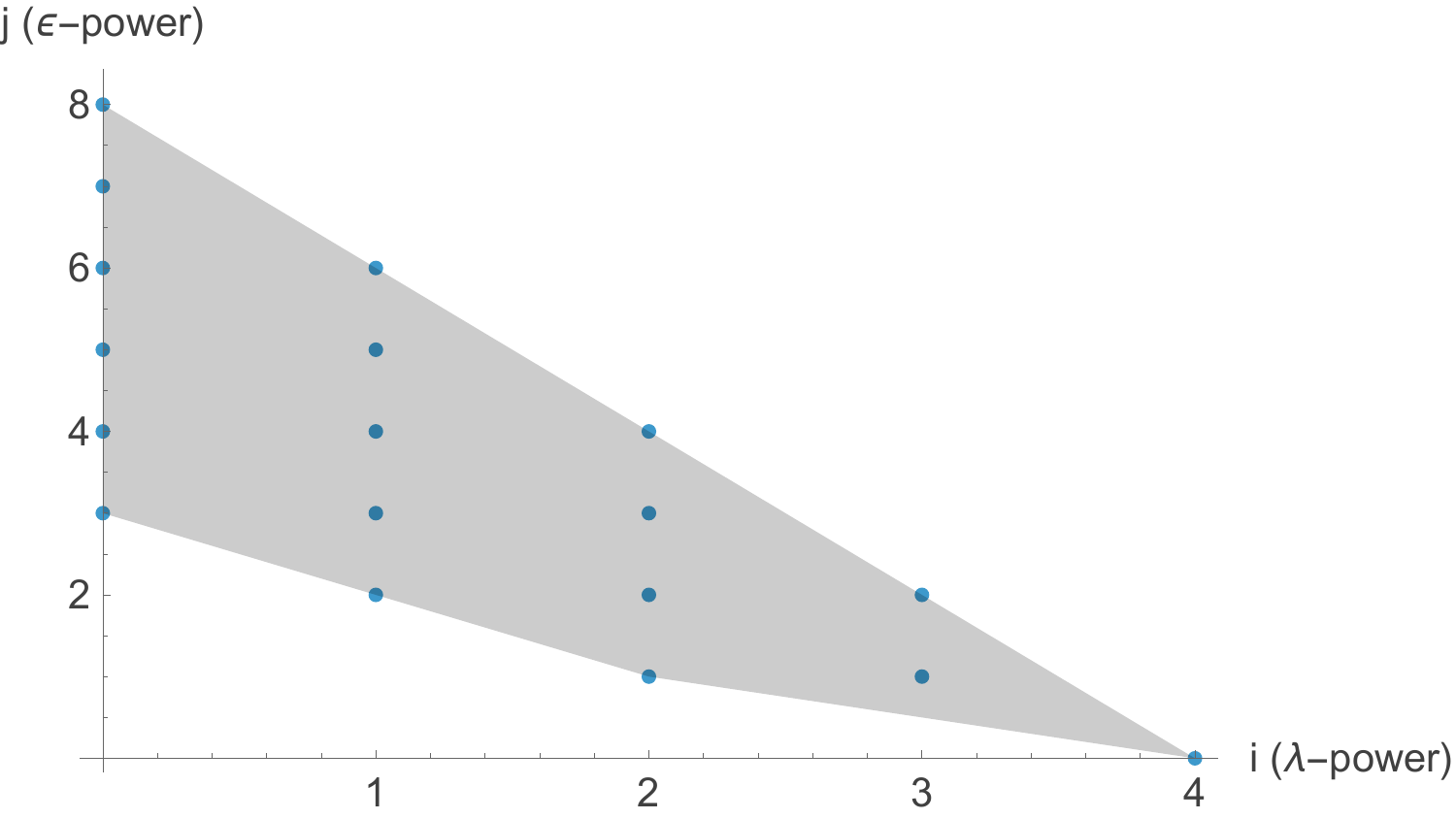} 
		\caption{The Newton polytope of $cp(\lambda,\ve) = \det\left(A_0 + \ve B + \ve^2 C - \lambda I \right)$ when one of the eigenvalues of $G_1$ is zero. The lower convex hull is the points in the grey triangle. The boundary consists of the lattice points at $(0,3),(1,2), (2,1)$ and $(4,0)$.}
		\label{fig:newtpoly22dega20}
	\end{center}
\end{figure}

Figure \ref{fig:newtpoly22dega20} shows the lower-convex hull of the Newton polytope if $a_1, a_3, a_4 \neq 0$. In this case, we have two slopes, so there will be two branches of the eigenvalues. The first is again found by setting $\lambda = \lambda_1 \ve^{1/2}$ which corresponds to the slope $-\frac{1}{2}$ connecting $(2,1)$ and $(4,0)$ in Figure \ref{fig:newtpoly22dega20}. Substituting this into \eqref{eq:cp22dega20}, we have 
$$
cp(\ve^{1/2} \lambda_1,\ve) = \ve^2\left(\lambda_1^4 + a_1 \lambda_1^2\right) + \cO(\ve^{>2}), 
$$
which recovers the first non-zero correction $\lambda_1 \sim \pm \sqrt{s_2}$ for the non-zero eigenvalue of the matrix $G_1$. There are also two eigenvalues that are $0$ here, indicating that the slope $-\frac{1}{2}$ can not see the eigenvalue splitting at this level. We next make the substitution $\lambda = \lambda_2 \ve$, which corresponds to the slope $-1$ connecting $(0,3)$ and $(2,1)$ in Figure \ref{fig:newtpoly22dega20}, and we arrive at 
\begin{align*}
cp(\ve \lambda_2, \ve)  &= \ve^4  \lambda_2^4 + \ve^3 \left( a_1 \lambda_2^2 + a_3 \lambda_2 + a_4 \right) + h.o.t \\ 
&=  \ve^3 \left( a_1 \lambda_2^2 + a_3 \lambda_2 + a_4 \right) + \cO(\ve^{>3}). 
\end{align*}
This recovers the quadratic equation
$$
a_1 \lambda_2^2 + a_3 \lambda_2 + a_4 =0,
$$
which is equivalent to  \eqref{eq:quad}. By denoting the two roots of the quadratic equatio by $\lambda_2^{\pm}$ and combining with the previous case, we have recovered the eigenvalue splitting in (\ref{eigenvalue-split-2}). The eigenvalue splitting as in (\ref{eigenvalue-split-2}) is obtained with direct Puiseux expansions, see (\ref{expansion-1-eps}) with (\ref{quad-eq}) and (\ref{expansion-1-second-eps}) with (\ref{single-eq}) in Section \ref{sec-3}.

\subsection{One double Jordan block}
\label{sec-case-2}

This situation represents the bifurcation studied in Section \ref{sec-4} 
and corresponds to the following normalized matrix
\begin{equation*}
A_0 = \begin{pmatrix} 0 & 1 \\ 0 & 0 \end{pmatrix}
\end{equation*}
with a double zero eigenvalue. Since this case is much simpler than the previous case of two double Jordan blocks, the eigenvalue splitting is obtained from the quadratic characteristic polynomial by using simpler computations, which we do not write here. We only poit out that $G_1 = 0$ (where $G_1$ is a scalar) so that 
the double zero eigenvalue splits as $\lambda = \mathcal{O}(\ve)$ instead of 
$\lambda = \mathcal{O}(\ve^{1/2})$, see (\ref{expansion-2-eps}) with (\ref{quad-eq-anti}) in Section \ref{sec-4}.

\subsection{One quadruple and one double Jordan blocks}
\label{sec-case-3}

This situation represents the bifurcation studied in Section \ref{sec-5} 
and corresponds to the following normalized matrix
\begin{equation}
\label{eq:a042}
A_0 = \begin{pmatrix} 0 & 1 & 0 & 0 & 0 & 0 \\ 0 & 0 & 1 & 0 & 0 & 0 \\0 & 0 & 0 & 1 & 0 & 0 \\ 0 & 0 & 0 & 0 & 0 & 0 \\ 0 & 0 & 0 & 0 & 0 & 1 \\ 0 & 0 & 0 & 0 & 0 & 0 \end{pmatrix}
\end{equation}
with the zero eigenvalue of algebraic multiplicity $6$. The characteristic polynomial is expanded as 
\begin{equation}
\label{eq:cp42gen}
cp(\lambda, \ve) = \lambda^6 + a_1 \ve \lambda^2 + a_2 \ve^2 + h.o.t,
\end{equation}
where 
\begin{equation*}
\begin{cases}
a_1 = - b_{41}, \\ 
a_2 = b_{41} b_{65} - b_{45} b_{61} = \det \begin{pmatrix} b_{41} & b_{61} \\ b_{45} & b_{65} \end{pmatrix}. 
\end{cases}
\end{equation*}

\begin{figure}[htb!]
	\begin{center}
		\includegraphics[scale=0.4]{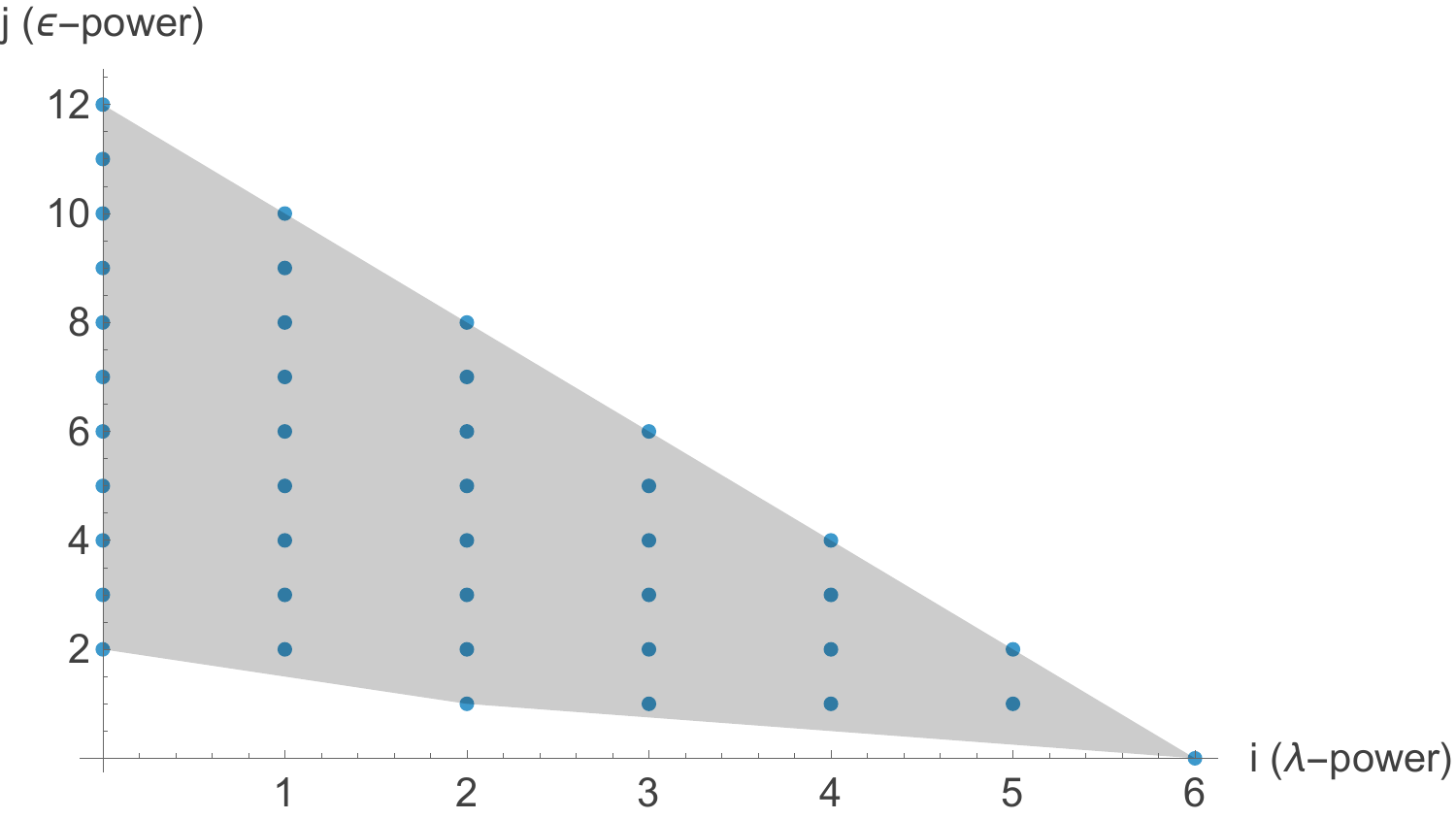} 
		\caption{The Newton polytope of $cp(\lambda,\ve) = \det\left(A_0 + \ve B + \ve^2 C - \lambda I \right)$ with $A_0$ given by \eqref{eq:a042}. The lower convex hull is the points in grey. The boundary consists of the lattice points at $(0,2),(2,1)$ and $(6,0).$}
		\label{fig:newtpolygen42}
	\end{center}
\end{figure}

Figure \ref{fig:newtpolygen42} shows the lower convex hull of the Newton polytope 
for the characteristic polynomial (\ref{eq:cp42gen})  with the boundary consisting of two lines if $a_1, a_2 \neq 0$. Taking into account the two slopes $-\frac{1}{4}$ and $-\frac{1}{2}$ of the lower boundary, we can get four eigenvalues by using the scaling $\lambda = \ve^{1/4} \lambda_{1/2}$ and two eigenvalues by using the scaling $\lambda = \ve^{1/2} \lambda_1$ with the leading-order approximations given by 
\begin{equation}
\label{eigenvalue-split-3}
\lambda_{1/2} \sim  e^{ 2k \pi/4} \sqrt[4]{-a_1} \quad k = 1,2,3,4 \quad \quad \textrm{and} \quad \quad \lambda_1 \sim \pm \sqrt{-a_2/a_1},
\end{equation}
respectively. However, our case of study gives $b_{41} = 0$, 
which is the same coefficient as $b_{21}$ in (\ref{G-1}), see (\ref{b-coefficients}). Therefore, we need to consider the case of $a_1 = 0$, for which the eigenvalue 
splitting (\ref{eigenvalue-split-3}) is not applicable.

\begin{figure}[htb!]
	\begin{center}
		\includegraphics[scale=0.4]{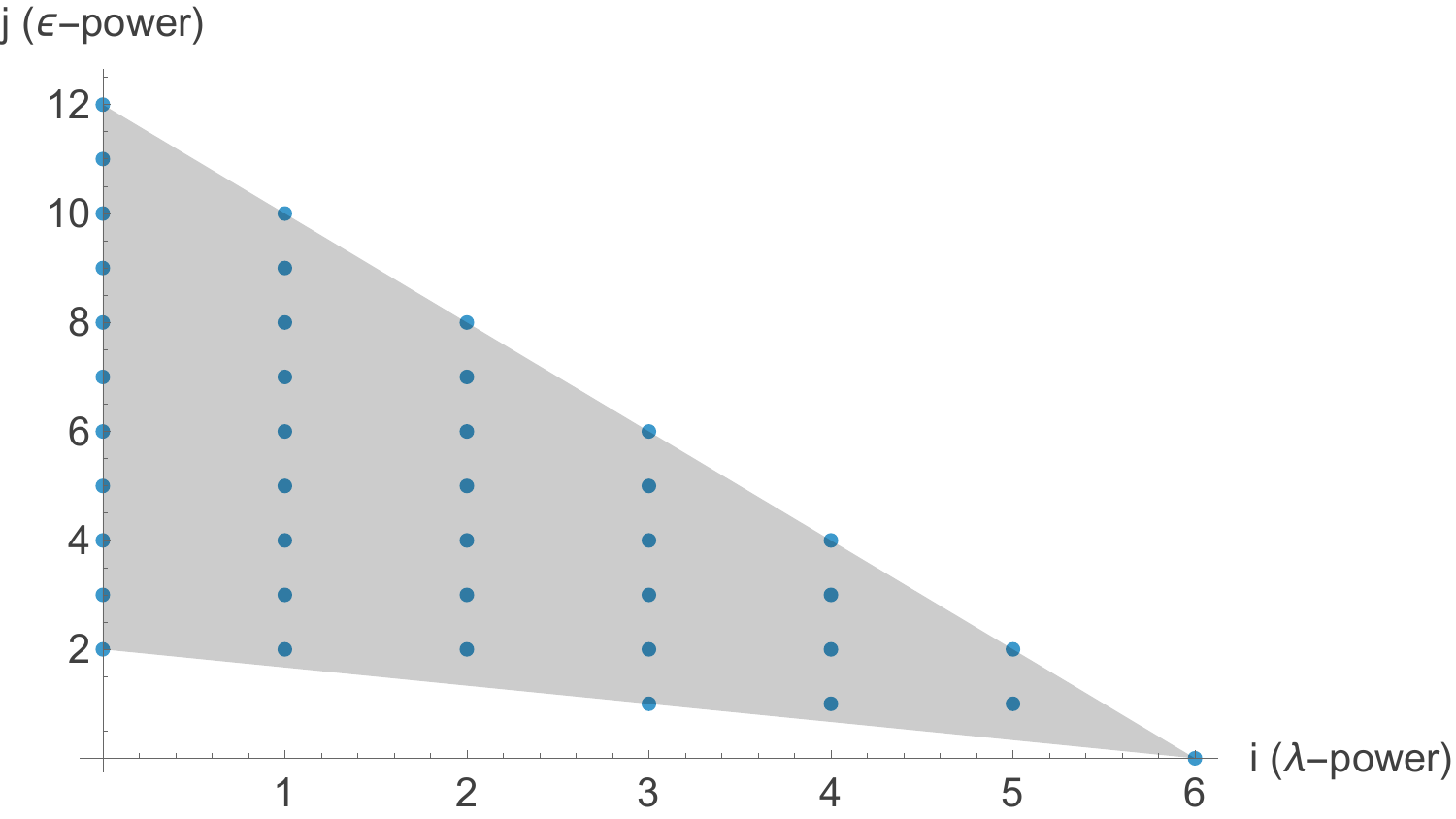} 
		\caption{The same as in Figure \ref{fig:newtpolygen42} but in the case $a_1 = 0$. The lower convex hull is the points in grey. The boundary consists of the lattice points at $(0,2),(3,1)$ and $(6,0).$}
		\label{fig:newtpoly42deg}
	\end{center}
\end{figure}

If $a_1 = 0$, then the characterisic polynomial is expanded as 
\begin{equation}
\label{eq:cp42gen-expand}
cp(\lambda, \ve) = \lambda^6 + a_3 \lambda^3 \ve + a_2 \ve^2 + \hot,
\end{equation}
where $a_2 = - b_{45} b_{61}$ as before and $a_3 = -b_{31}-b_{42}$. 
Figure  \ref{fig:newtpoly42deg} shows the lower convex hull of the Newton polytope for the characteristic polynomial (\ref{eq:cp42gen-expand}) with the boundary given by a single line. This suggests that the six eigenvalues have the correct scaling $\lambda = \ve^{1/3} \lambda_1$, with the leading-order approximation for $\lambda_1$ given by
\begin{equation}
\label{eigenvalue-split-4}
\lambda_1 \sim \frac{e^{2 \pi i k/3}}{2} \left( -a_3 \pm \sqrt{a_3^2 - 4 a_2} \right) \quad k = 1, 2,3.
\end{equation}
However, our case of study gives $b_{61} = 0$, 
which is the same coefficient as $b_{41}$ in (\ref{G-1}), see (\ref{b-coefficients}).  Therefore, we need to consider the double-degenerate case of $a_1 = a_2 = 0$, for which the eigenvalue 
splitting (\ref{eigenvalue-split-4}) is not applicable.

If $a_1 = a_2 = 0$, then the characterisic polynomial is expanded as 
\begin{equation}
\label{eq:cp42gen-expand2}
cp(\lambda, \ve) = \lambda_1^6 + a_3 \ve \lambda^3 + a_4 \ve^2 \lambda + a_5 \ve^3 + \hot,
\end{equation}
where $a_3 = -b_{31} - b_{42}$ as before and $a_4$, $a_5$ are obtained with the symbolic software:
\begin{equation*}
\begin{cases}
a_4 = (-b_{35} - b_{46})\, b_{61}
+ b_{45}\,(-b_{51} - b_{62})
+ (b_{31} + b_{42})\, b_{65}\\
a_5 = b_{65}\Big(
- b_{11} b_{42} - b_{21} b_{43} - b_{31} b_{44}
- b_{46} b_{51} + c_{41}
\Big) \\
\qquad 
+ b_{61}\Big(
b_{15} b_{42} + b_{25} b_{43} + b_{35} b_{44}
+ b_{46} b_{55} - c_{45}
\Big) \\
\qquad 
+ b_{45}\Big(
b_{11} b_{62} + b_{21} b_{63} + b_{31} b_{64}
+ b_{51} b_{66} - c_{61}
\Big).
\end{cases}
\end{equation*}

\begin{figure}[htb!]
	\begin{center}
		\includegraphics[scale=0.4]{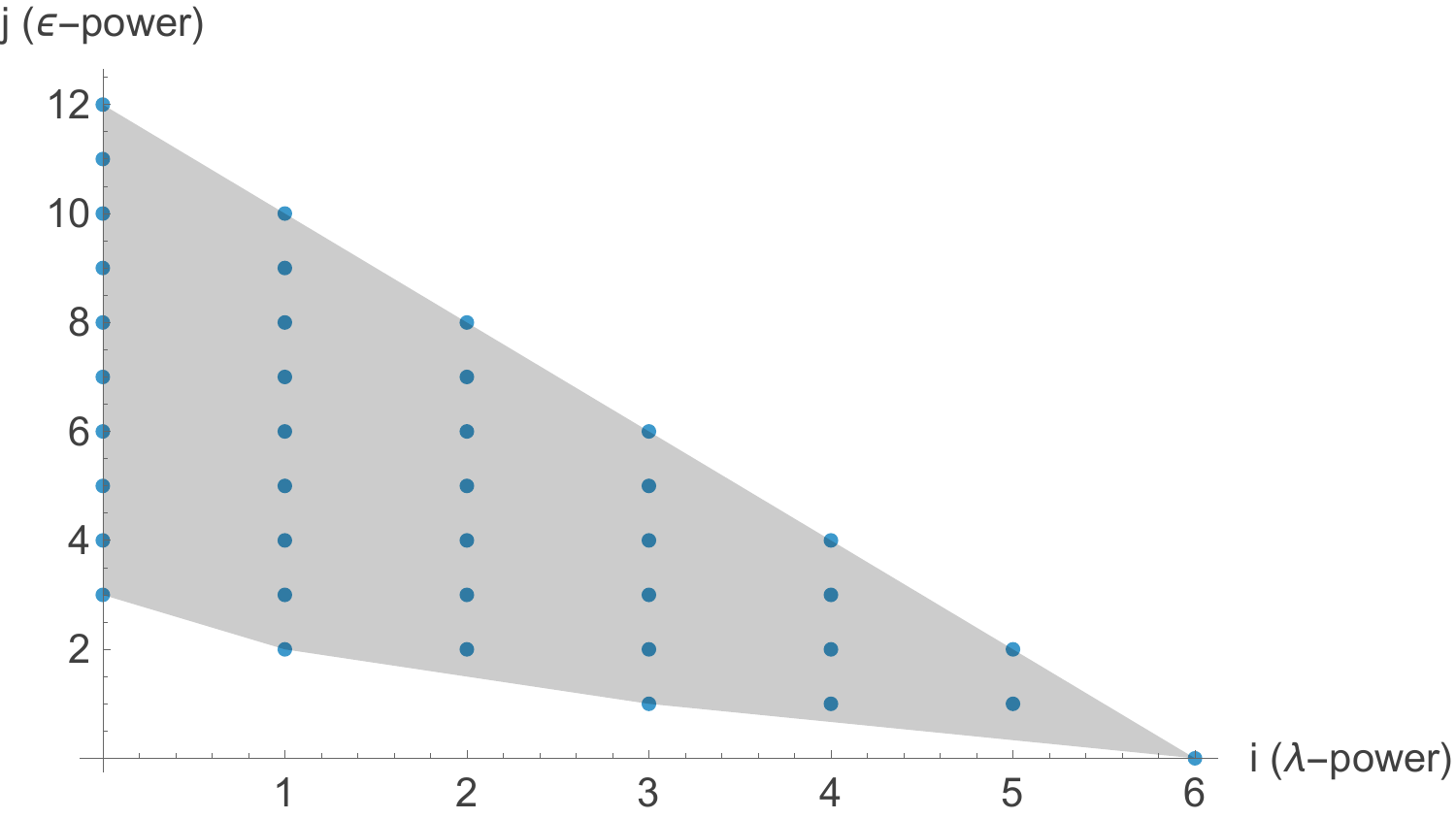} 
		\caption{The same as in Figure \ref{fig:newtpolygen42} but in the case $a_1  = a_2 = 0$. The lower convex hull is the points in grey. The boundary consists of the lattice points at $(0,3),(1,2),(3,1)$ and $(6,0).$}
		\label{fig:newtpoly42deg2}
	\end{center}
\end{figure}

Figure  \ref{fig:newtpoly42deg2} shows the lower convex hull of the Newton polytope for the characteristic polynomial (\ref{eq:cp42gen-expand2}) with the boundary given by three lines. This suggests three different scalings for the six eigenvalues with the leading-order approximation given by 
\begin{equation}
\label{eigenvalue-split-5}
\lambda \sim \ve^{1/3} e^{2 \pi i k/3} \sqrt[3]{-a_3} \quad k = 1,2,3 \quad \textrm{and } \quad \lambda \sim \pm \ve^{1/2} \sqrt{-\frac{a_4}{a_3}} \quad \textrm{and} \quad \lambda \sim - \ve \frac{a_5}{a_4}.
\end{equation}
The eigenvalue splitting as in (\ref{eigenvalue-split-5}) is obtained with direct Puiseux expansions, see (\ref{expansion-3-eps}) with (\ref{normal-form-three}) in Section \ref{sec-5}.

\subsection{Two simple and one double Jordan blocks}
\label{sec-case-4}

This situation represents the bifurcation studied in Section \ref{sec-6} 
and corresponds to the following normalized matrix
\begin{equation}
\label{eq:a0mat211}
A_0 = \begin{pmatrix} 0 & 0 & 0 & 0 \\ 0 & 0 & 0 & 0 \\ 0 & 0 & 0 & 1 \\ 0 & 0 & 0 & 0 \end{pmatrix}
\end{equation}
with the quadruple zero eigenvalue. The characteristic polynomial is expanded as 
\begin{equation}
\label{eq:cp211}
cp(\lambda;\ve) = \lambda^4 + a_1 \ve \lambda^2 + a_2 \ve^2 \lambda + a_3 \ve^3 + \hot,
\end{equation}
where 
\begin{equation*}
\begin{cases}
a_1 = - b_{43} \\ 
a_2 = b_{43}(b_{11} + b_{22}) - b_{13}b_{41} - b_{23}b_{42} \\ 
a_3 = b_{13} b_{22} b_{41}
- b_{12} b_{23} b_{41}
- b_{13} b_{21} b_{42}
+ b_{11} b_{23} b_{42}
+ b_{12} b_{21} b_{43}
- b_{11} b_{22} b_{43},
\end{cases}
\end{equation*}
from which 
$$
a_3 = - \det\begin{pmatrix}
b_{11} & b_{12} & b_{13}\\
b_{21} & b_{22} & b_{23}\\
b_{41} & b_{42} & b_{43}\\
\end{pmatrix}.
$$

\begin{figure}[htb!]
	\begin{center}
		\includegraphics[scale=0.4]{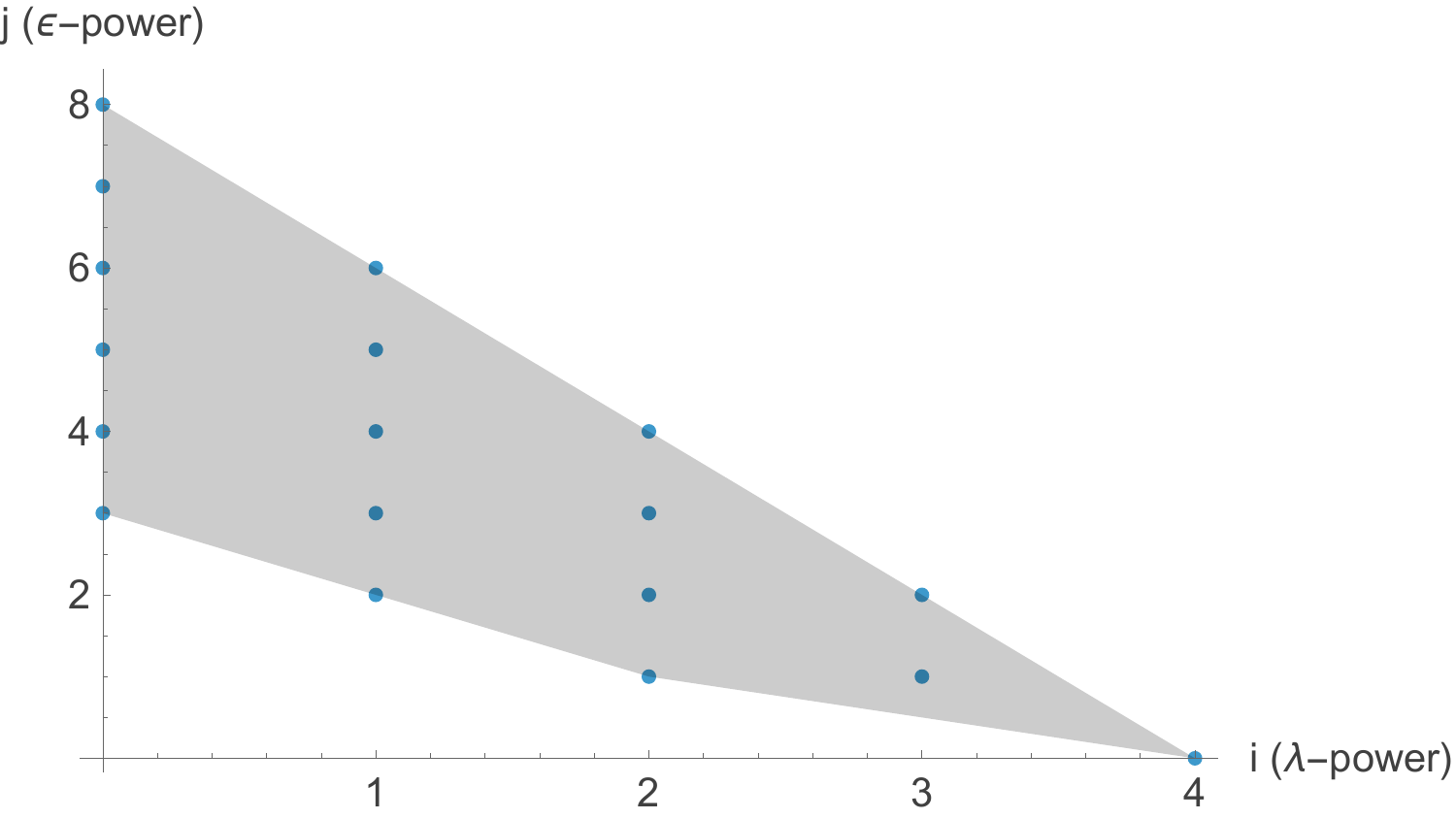} 
		\caption{The Newton polytope of $cp(\lambda,\ve) = \det\left(A_0 + \ve B + \ve^2 C - \lambda I \right)$ where $A_0$ is given by \eqref{eq:a0mat211}. The lower convex hull is the points in the grey. The boundary consists of the lattice points at $(0,3),(1,2),(2,1)$ and $(4,0).$}
		\label{fig:newtonpoly211}
	\end{center}
\end{figure}

Figure \ref{fig:newtonpoly211} shows the lower convex hull of the Newton polytope for the characteristic polynomial (\ref{eq:cp211}) with the boundary consisting of two lines if $a_1, a_2, a_3 \neq 0$. For the line with the slope $-\frac{1}{2}$, we  set $\lambda = \lambda_1 \ve^{1/2}$ and substitute into \eqref{eq:cp211} to get 
$$
cp(\ve^{1/2} \lambda_1, \ve) = \ve^2 \lambda_1^2( \lambda_1^2 + a_1) + \cO(\ve^{>2}),
$$
which defines two roots $\lambda_1 \sim \pm \sqrt{b_{43}}$. For the line with the slope $-1$, we set $\lambda = \lambda_2 \ve$ and substitute into \eqref{eq:cp211} to get
$$
cp(\ve \lambda_2, \ve) = \ve^3( a_1 \lambda_2^2 + a_2 \lambda_2 + a_3) + \cO(\ve^{>3}),  
$$ 
which defines two roots near the two roots of $a_1 \lambda_2^2+ a_2 \lambda_2 + a_3 = 0$. Denoting these two roots by $\lambda_2^{\pm}$ and combining with the previous case, we obtain the eigenvalue splitting given by 
\begin{equation}
\label{eigenvalue-split-6}
\lambda \sim \pm \ve^{1/2} \sqrt{b_{43}} \quad \textrm{and } \quad \lambda \sim  \ve \lambda_2^{\pm}.
\end{equation}
In our case of study in Section \ref{sec-6}, we have $a_1 = -b_{43} \neq 0$, see (\ref{b-coefficients}). However, we also have $a_2^2 - 4 a_1 a_3 = 0$ so that $\lambda_2^+ = \lambda_2^-$ is a double root of $a_1 \lambda_2^2+ a_2 \lambda_2 + a_3 = 0$, see (\ref{matrix-projections-extended}) and (\ref{double-eigenvalue}). 
As a result, there exists a degeneracy in the eigenvalue splitting as in (\ref{eigenvalue-split-6}) at the order of $\mathcal{O}(\ve)$. 

To unfold the degeneracy of the double root, we look at the next terms in the expansion of the characteristic polynomial $cp(\lambda,\ve)$. Let 
$$
\lambda_2^* := - \frac{a_2}{2 a_1}
$$
denote this double root for $a_2^2 - 4 a_1 a_3 = 0$. The characteristic polynomial can be rewritten in this setting as
\begin{align}
cp(\ve \lambda_2, \ve) & = \ve^3 \left( a_1 \lambda_2^2 + a_2 \lambda_2 + a_3 + \cO(\ve) \right) \notag \\
& = \ve^3 \left( a_1 (\lambda_2 - \lambda_2^*)^2 + \ve R(\lambda_2) + \cO(\ve^{>1}) \right),\label{eq:higherord}
\end{align}
where $R(\lambda_2)$ is a quartic polynomial corresponding to the vertices $(0,4),(1,3),(2,2),(3,1),(4,0)$ in the Newton polytope. Writing 
$$
R(\lambda_2) := \lambda_2^4 + d_1 \lambda_2^3 + d_2 \lambda_2^2 + d_3 \lambda_2 + d_4
$$
the coefficients $d_i$ are found explicitly by using the symbolic software:
\begin{equation*}
\begin{cases}
d_1 & = - b_{11} - b_{22} - b_{33} - b_{44}, \\ 
d_2 & =
- b_{12} b_{21}
+ b_{11} b_{22}
- b_{13} b_{31}
- b_{23} b_{32}
- b_{14} b_{41}
- b_{24} b_{42}
- b_{34} b_{43}
+ b_{33} b_{44} \\ 
& \quad \quad 
+ b_{11} ( b_{33} + b_{44} )
+ b_{22} ( b_{33} + b_{44} )
- c_{43} \\  
d_3 & = b_{13} b_{22} b_{31}
- b_{12} b_{23} b_{31}
- b_{13} b_{21} b_{32}
+ b_{11} b_{23} b_{32} \\[4pt]
& \quad \quad  + b_{14} b_{22} b_{41}
- b_{12} b_{24} b_{41}
- b_{13} b_{34} b_{41}
- b_{14} b_{21} b_{42}
+ b_{11} b_{24} b_{42}
- b_{23} b_{34} b_{42} \\[4pt]
& \quad \quad  + b_{11} b_{34} b_{43}
+ b_{22} b_{34} b_{43}
+ b_{11} b_{22} ( - b_{33} - b_{44} )
+ b_{12} b_{21} ( b_{33} + b_{44} ) \\[6pt]
& \quad \quad  + b_{41} ( b_{14} b_{33} - c_{13} )
+ b_{43} ( - b_{14} b_{31} - b_{24} b_{32} + c_{11} + c_{22} )
+ b_{42} ( b_{24} b_{33} - c_{23} ) \\[4pt]
& \quad \quad  + b_{13} ( b_{31} b_{44} - c_{41} )
+ b_{23} ( b_{32} b_{44} - c_{42} )
+ b_{11} ( - b_{33} b_{44} + c_{43} )
+ b_{22} ( - b_{33} b_{44} + c_{43} )\\ 
d_4 & = b_{13} b_{22} b_{34} b_{41}
- b_{12} b_{23} b_{34} b_{41}
- b_{13} b_{21} b_{34} b_{42}
+ b_{11} b_{23} b_{34} b_{42}
+ b_{12} b_{21} b_{34} b_{43}
- b_{11} b_{22} b_{34} b_{43} \\[4pt]
& + b_{22} b_{43} ( b_{14} b_{31} - c_{11} )
+ b_{23} b_{42} ( - b_{14} b_{31} + c_{11} )
+ b_{23} b_{41} ( b_{14} b_{32} - c_{12} )
+ b_{21} b_{43} ( - b_{14} b_{32} + c_{12} ) \\[4pt]
& + b_{21} b_{42} ( b_{14} b_{33} - c_{13} )
+ b_{22} b_{41} ( - b_{14} b_{33} + c_{13} )
+ b_{13} b_{42} ( b_{24} b_{31} - c_{21} )
+ b_{12} b_{43} ( - b_{24} b_{31} + c_{21} ) \\[4pt]
& + b_{11} b_{43} ( b_{24} b_{32} - c_{22} )
+ b_{13} b_{41} ( - b_{24} b_{32} + c_{22} )
+ b_{12} b_{41} ( b_{24} b_{33} - c_{23} )
+ b_{11} b_{42} ( - b_{24} b_{33} + c_{23} ) \\[4pt]
& + b_{12} b_{23} ( b_{31} b_{44} - c_{41} )
+ b_{13} b_{22} ( - b_{31} b_{44} + c_{41} )
+ b_{13} b_{21} ( b_{32} b_{44} - c_{42} )
+ b_{11} b_{23} ( - b_{32} b_{44} + c_{42} ) \\[4pt]
& + b_{11} b_{22} ( b_{33} b_{44} - c_{43} )
+ b_{12} b_{21} ( - b_{33} b_{44} + c_{43} )
\end{cases}
\end{equation*}
Now we set $\lambda_2 = \lambda_2^* + \eta$ and expand (\ref{eq:higherord}): 
$$
cp(\ve \lambda_2, \ve) = \ve^3\left( a_1 \eta^2 + \ve \left( R(\lambda_2^*) + \eta R'(\lambda_2^*) + \frac{1}{2} \eta^2 R''(\lambda_2^*) + \frac{1}{6}\eta^3R'''(\lambda_2^*) + \eta^4 \right) \right) + \mathcal{O}(\ve^{>3}).
$$ 
Collecting terms yields the balance  $\eta^2 \sim \ve$ or $\eta \sim \ve^{1/2} \lambda_3$. Substituting this in the previous expression and collecting the leading order, we obtain 
$$
a_1 \lambda_3^2 + R(\lambda_2^*) = 0 \quad \quad \Rightarrow \quad \quad  \lambda_3 = \pm \sqrt{\frac{R(\lambda_2^*)}{a_1}}. 
$$
Putting this all together now, we have that the double eigenvalue at the order of $\mathcal{O}(\ve)$ splits at the order of $\mathcal{O}(\ve^{3/2})$ if $R(\lambda_2^*) \neq 0$ like 
\begin{equation} 
\label{eigenvalue-split-7}
\lambda \sim \pm \ve^{1/2} \sqrt{b_{43}} \quad \textrm{and } \quad  \lambda \sim \ve \lambda_2^*  \pm \ve^{3/2} \sqrt{\frac{R(\lambda_2^*)}{a_1}}.
\end{equation}
The eigenvalue splitting (\ref{eigenvalue-split-7}) is obtained with direct Puiseux expansions, see (\ref{expansion-mu-4}) with (\ref{double-eigenvalue-again}) and (\ref{lambda-3}) in Section \ref{sec-6}.

\section{Degeneration of figure-8 bands resulting in their vertical slope}
\label{sec-3}

The corresponding  bifurcation is seen in panels (b)--(d) of Figure \ref{fig-bif}.
Due to its shape, it can be called the hourglass bifurcation.
Before the bifurcation [panel (b)], the spectral bands near $\lambda = 0$ form a typical figure-$8$ pattern. At the bifurcation [panel (c)], the slopes of the figure-$8$ bands become vertical at $\lambda = 0$. After the bifurcation 
[panel (d)], the figure-$8$ bands break into two instability bubbles along the imaginary axis. 

Section \ref{sec-3-1} develops the criterion for this bifurcation. Section \ref{sec-3-2} gives the derivation of the normal form, 
which explains how the bifurcation is unfolded. Section \ref{sec-3-3} describes 
numerical approximations of the coefficients of the normal form, 
from which we are able to reproduce the bifurcation pattern seen in panels (b)--(d) of Figure \ref{fig-bif}.

\subsection{Criterion of the bifurcation} 
\label{sec-3-1}

We study first the geometric and algebraic multiplicity of the 
eigenvalue $\lambda = 0$ in the spectral problem (\ref{mod-Bab-eq}) 
for $\mu = 0$. Recall that the two signs in (\ref{mod-Bab-eq}) are defined for the two signs of $\mu$, hence, $\mu = 0$ stands for the limit $\mu \to 0^{\pm}$, 
given by the spectral problem (\ref{mod-Bab-eq-limit}), see Remark \ref{rem-multiplicity}.

For the geometric multiplicity of the zero eigenvalue, we add the following generic assumption. 

\begin{assumption} 
	\label{ass-kernel}
	${\rm Ker}(\mathcal{L}) = {\rm span}(\eta')$ for this value of $c \in (1,c_*)$.
\end{assumption}

Since $\mathcal{K} 1 = 0$ and $\mathcal{L} \eta' = 0$ are satisfied for every $c \in (1,c_*)$, Assumption \ref{ass-kernel} implies that the spectral 
problem (\ref{mod-Bab-eq-limit}) for $\lambda = 0$  admits a
two-dimensional kernel spanned as 
\begin{equation}
\label{kernel}
\left( \begin{array}{c} \hat{v}_0 \\ \hat{w}_0 \end{array} \right) = a_1 \left( \begin{array}{c} \eta' \\ 0 \end{array} \right) + a_2 \left( \begin{array}{c} 0 \\ 1 \end{array} \right),
\end{equation}
where $(a_1,a_2) \in \mathbb{R}^2$. Due to the Hamiltonian symmetry in Lemma \ref{lem-spectrum}, the generalized kernel of the spectral problem (\ref{mod-Bab-eq-limit}) is at least four-dimensional with at least two generalized eigenfunctions. These generalized eigenfunctions were obtained in \cite{DP2025} for the spectral problem (\ref{spec-Bab-eq}) and can be extended to the spectral problem (\ref{mod-Bab-eq-limit}) by using the transformation (\ref{transformation}) in Lemma \ref{lem-transformation} as follows:
\begin{equation}
\label{kernel-generalized}
\left( \begin{array}{c} \hat{v}_g \\ \hat{w}_g \end{array} \right) = a_1 
\left[ \left( \begin{array}{c} -\partial_c \eta \\ \mathcal{H} \eta \end{array} \right) \pm i \langle 1, \partial_c \eta \rangle 
\left( \begin{array}{c} \eta' \\ -c \end{array} \right) 
\right] 
+ a_2 \left[ \left( \begin{array}{c} -1 \\ 0 \end{array} \right) 
\pm i \left( \begin{array}{c} \eta' \\ -c \end{array} \right) \right],
\end{equation}
where the smoothness of the mapping $c \mapsto \eta \in C^{\infty}_{\rm per}$ is due to Assumption \ref{ass-kernel}, see Lemma \ref{lem-diff}. It can be checked by using (\ref{L-on-1}) and (\ref{der-speed}) that $(\hat{v}_g,\hat{w}_g)$ satisfies the linear inhomogeneous system
\begin{equation}
\left\{ \begin{array}{ll}
\mathcal{K} \hat{w}_g &=  \mathcal{M} \hat{v}_0, \\
\mathcal{L} \hat{v}_g &= \mathcal{M}^* \hat{w}_0 -  2 c \mathcal{H} \hat{v}_0.
\end{array}
\right. 
\label{gen-Bab-eq}
\end{equation}	
The system (\ref{gen-Bab-eq}) follows by differentiating (\ref{mod-Bab-eq-limit}) in $\lambda$ at $\lambda = 0$ and using $\langle 1, \eta' \rangle = 0$. For the algebraic multiplicity of the zero eigenvalue, we add the following generic assumption.

\begin{assumption}
	\label{ass-first-bif}
	$\mathcal{P}'(c) \neq 0$ for this value of $c \in (1,c_*)$.
\end{assumption}

It is shown in \cite{DP2025} that Assumptions \ref{ass-kernel} and \ref{ass-first-bif} imply that the generalized kernel of the spectral problem (\ref{mod-Bab-eq-limit}) is exactly four-dimensional.

We now formulate and prove the criterion for the corresponding bifurcation. 

\begin{theorem}
	\label{theorem-bif-1}
Let Assumptions \ref{ass-kernel} and \ref{ass-first-bif} be true and set 
\begin{equation}
\label{Delta}
\Delta(c) := 4 \mathcal{P}'(c) \mathcal{N}(c) - 5 \mathcal{P}(c) \mathcal{N}'(c).
\end{equation}
There exist four spectral bands of the modulational stability problem (\ref{mod-Bab-eq}) intersecting $\lambda = 0$. 
Two of the spectral bands are subsets of $i \mathbb{R}$ and non-degenerate if $\Delta(c) > 0$, tangential to $i \mathbb{R}$ and degenerate if $\Delta(c) = 0$, and cross $\lambda = 0$ along the straight lines in all four complex quadrants of the complex $\lambda$-plane and non-degenerate if $\Delta(c) < 0$. The other two spectral bands are subsets of $i \mathbb{R}$ and non-degenerate, independently of $c \in (1,c_*)$.
\end{theorem}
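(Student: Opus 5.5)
We will follow the direct Puiseux expansion of Section~\ref{sec-case-1}, with $\varepsilon=|\mu|$ and $A_0$ the restriction of (\ref{mod-Bab-eq-limit}) at $\lambda=0$ to its four-dimensional generalized kernel. That kernel is spanned by (\ref{kernel}) and (\ref{kernel-generalized}), and it is exactly four-dimensional by Assumptions~\ref{ass-kernel} and \ref{ass-first-bif}. Since (\ref{mod-Bab-eq}) is analytic in $(\mu,\lambda)$ for each sign of $\mu$ (Remark~\ref{rem-anal}), the Weierstrass preparation argument of Section~\ref{sec-1-4} reduces the problem to the four small eigenvalues. Their continuations in $\mu$ give the four spectral bands through $\lambda=0$.

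\textbf{Expansion.} Expand
\[
\hat v=\hat v_0+\mu \hat v_1+\dots,\qquad \hat w=\hat w_0+\mu\hat w_1+\dots,\qquad \lambda=\mu\Lambda_1+\dots,
\]
with the $\mathcal{O}(\mu)$ operator corrections $-i\mu\mathcal{H}_\pm$ in the first equation and $-i\mu\mathcal{L}_\pm$ in the second. Here $(\hat v_0,\hat w_0)$ is given by (\ref{kernel}), and the order-one correction contains $\Lambda_1$ times (\ref{kernel-generalized}).

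\textbf{Degenerate projection.} Project the order-$\mu$ equation onto the kernel of the adjoint problem, spanned by the constraints (\ref{Fredholm-constraints}), i.e.\ onto $1$ in the $\mathcal{L}$-equation and $\eta'$-type directions. We expect the projection matrix $G_1$ to have the structure (\ref{b-coefficients}), with the only nonzero entry coming from $\mathcal{H}_\pm 1=\pm i$. This means $\lambda=\mathcal{O}(\mu)$ rather than $\mathcal{O}(|\mu|^{1/2})$ for the relevant branches, and one must carry the expansion to order $\mu^2$.

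\textbf{Order $\mu^2$.} The two solvability conditions produce a $2\times 2$ quadratic eigenvalue problem in $\Lambda_1$ for the coefficients $(a_1,a_2)$. Its entries involve:
\begin{itemize}
\item $\langle \mathcal{L}\partial_c\eta,\partial_c\eta\rangle=-c\mathcal{P}'(c)$, computed via (\ref{der-speed});
\item $\langle\mathcal{L}1,1\rangle=-1$ from (\ref{L-on-1});
\item $\mathcal{L}_\pm\eta'=-\eta$ from (\ref{L-on-eta-prime}), which generates $\mathcal{N}(c)$;
\item cross terms $\langle 1,\partial_c\eta\rangle$ and $\langle \eta,\partial_c\eta\rangle=\tfrac12\mathcal{N}'(c)$.
\end{itemize}
We plan to show that the determinant factorizes into two quadratics in $\Lambda_1$.
\begin{itemize}
\item One quadratic should be independent of $c$-degeneracies and have real-coefficient form $(\Lambda_1-i\alpha)^2+\beta^2=0$. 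Its roots are purely imaginary and simple, giving the two bands in $i\mathbb{R}$ that are non-degenerate for all $c$.
\item The other should be, after using Lemma~\ref{lem-identity} to eliminate $\mathcal{H}$ and relate $c\mathcal{P}'$ to $\mathcal{N}'+5\mathcal{P}$,
\[
\mathcal{P}'(c)\Bigl(\Lambda_1-\tfrac{i\mathcal{N}'(c)}{2\mathcal{P}'(c)}\Bigr)^2+\frac{\Delta(c)}{4\mathcal{P}'(c)}=0,
\]
consistent with (\ref{normal-form-one-intro}) at $c=c_0$.
\end{itemize}
Writing $\Lambda_1=i\kappa$ for the second quadratic, its discriminant is proportional to $\Delta(c)/\mathcal{P}'(c)^2$:
\begin{itemize}
\item if $\Delta>0$, both roots are purely imaginary and distinct, so the bands are $\lambda\approx i\kappa_\pm\mu$, lying in $i\mathbb{R}$ and non-degenerate;
\item if $\Delta=0$, there is a double root, so the slopes coincide and the bands are tangent to $i\mathbb{R}$ (degenerate);
\item if $\Delta<0$, the roots are $\Lambda_1=i\tfrac{\mathcal{N}'}{2\mathcal{P}'}\pm r$ with $r\in\mathbb{R}\setminus\{0\}$, so $\lambda\approx\mu\Lambda_1$ for $\mu$ of both signs traces straight lines through the origin in all four quadrants.
\end{itemize}
The symmetries of Lemma~\ref{lem-mod-spectrum} ($\lambda\mapsto-\bar\lambda$ at fixed $\mu$, and $\mu\mapsto-\mu$) guarantee that the roots are symmetric about $i\mathbb{R}$, and that the invariance of the band set as a subset of $i\mathbb{R}$ persists to all orders rather than just leading order.

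\textbf{Main obstacle.} The hardest step is the bookkeeping at order $\mu^2$. The $\pm$-dependent rank-one terms $\pm i\langle1,\cdot\rangle$ and the $\eta'$-corrections in (\ref{kernel-generalized}) must be carried correctly. One must also verify that the off-diagonal couplings between the two blocks vanish, or cancel via (\ref{Fredholm-constraints}), so that the quartic splits into the two quadratics above. I would first compute the full $4\times4$ matrix $B$ of Section~\ref{sec-case-1} in the basis (\ref{kernel}), (\ref{kernel-generalized}), check (\ref{b-coefficients}), and then read off the quadratic (\ref{eq:quad}) by the Newton polytope argument. This justifies the $\mathcal{O}(\mu)$ scaling for all four eigenvalues, provided the remaining coefficient $a_1$ is nonzero thanks to $\mathcal{P}'(c)\neq0$.
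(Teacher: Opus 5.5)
The gap is in the other two bands: they scale like $|\mu|^{1/2}$, so your uniform ansatz $\lambda=\mu\Lambda_1$ cannot see them. Your treatment of the $\eta'$-block is sound. The quadratic you write for it expands to exactly (\ref{quad-eq}) with $\mu_1=1$. Your discussion of the three signs of $\Delta(c)$, and your use of $\lambda\mapsto-\bar\lambda$ from Lemma~\ref{lem-mod-spectrum}, match the paper.

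You correctly identify the only nonzero entry of $G_1$, namely $\langle 1, i\mu\mathcal{H}_\pm 1\rangle=\mp\mu$. But this entry belongs to the kernel vector $(0,1)^T$; it is $b_{43}\neq0$ in (\ref{b-coefficients}). By (\ref{eigenvalue-split-2}), that block splits at order $|\mu|^{1/2}$, not $\mathcal{O}(\mu)$.

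Concretely, take your ansatz $\lambda=\mu\Lambda_1$ and project the order-$\mu$ $\mathcal{K}$-equation onto $1$. The $\lambda$-term drops because $\langle 1,\mathcal{M}_\pm\eta'\rangle=\langle 1,\eta'\rangle=0$, leaving $\mp\mu a_2=0$. So $a_2=0$ is forced, the order-$\mu^2$ conditions yield a single quadratic in $\Lambda_1$, and two of the four eigenvalues never appear. Consequently:
\begin{itemize}
\item there is no second quadratic $(\Lambda_1-i\alpha)^2+\beta^2=0$ and no factorization of a quartic;
\item your closing claim that all four eigenvalues are $\mathcal{O}(\mu)$ is false;
\item the Newton polygon you invoke contradicts it as well: in Figure~\ref{fig:newtpoly22dega20} the segment of slope $-\frac{1}{2}$ from $(2,1)$ to $(4,0)$ exists precisely because $a_1=-b_{43}\neq0$, and it produces two roots $\lambda\sim\pm\varepsilon^{1/2}\sqrt{b_{43}}$.
\end{itemize}

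What is missing is a second, separate expansion about $(\hat v,\hat w)=(0,1)$ with $\mu=\epsilon^2\mu_2$ and $\lambda=\epsilon\lambda_1+\mathcal{O}(\epsilon^2)$. Its order-$\epsilon$ correction is $\lambda_1[(-1,0)^T\pm i(\eta',-c)^T]$ plus a free multiple of $(\eta',0)^T$. At order $\epsilon^2$, the Fredholm condition for the $\mathcal{K}$-equation reads $\mp\mu_2-\lambda_1^2\langle 1,\mathcal{M}1\rangle=0$. Since $\mathcal{M}1=1+\mathcal{K}\eta$ and $\langle 1,\mathcal{K}\eta\rangle=0$, this gives $\lambda_1^2=-|\mu_2|$. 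The condition against $\eta'$ holds identically.

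Hence these two bands are $\lambda\approx\pm i|\mu|^{1/2}$. They are simple for $\mu\neq0$ and therefore kept on $i\mathbb{R}$ by the Hamiltonian symmetry. So the conclusion you wanted is true, but with square-root rather than linear dependence on $\mu$. The two blocks decouple because their scalings differ, which is why the paper treats the subspaces $(a_1,a_2)=(1,0)$ and $(0,1)$ with two different expansions.

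Two smaller slips:
\begin{itemize}
\item The solvability projections are of the $\mathcal{K}$-equation onto $1$ and of the $\mathcal{L}$-equation onto $\eta'$.
\item $\langle\mathcal{L}\partial_c\eta,\partial_c\eta\rangle=-2c\langle\mathcal{K}\eta,\partial_c\eta\rangle=\mathcal{P}(c)/c-\mathcal{P}'(c)$, not $-c\mathcal{P}'(c)$. The coefficient $\mathcal{P}'(c)$ of $\lambda_1^2$ arises as $\langle\mathcal{K}\eta,\eta\rangle+2c\langle\mathcal{K}\eta,\partial_c\eta\rangle$.
\end{itemize}
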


\begin{proof}
Substituting (\ref{kernel}) into (\ref{mod-Bab-eq}) yields the residual terms 
$$
\left\{ \begin{array}{l} 
i \mu \mathcal{H}_{\pm} \hat{w}_0 + \lambda \mathcal{M}_{\pm} \hat{v}_0, \\
i \mu \mathcal{L}_{\pm} \hat{v}_0 + \lambda (\mathcal{M}_{\pm}^* \hat{w}_0 - 2 c \mathcal{H}_{\pm} \hat{v}_0).
\end{array} \right.
$$
Projecting the first equation to ${\rm Ker}(\mathcal{K}) = {\rm span}(1)$ in the first equation  and the second equation to ${\rm Ker}(\mathcal{L}) = {\rm span}(\eta')$  yields 
\begin{align}
\label{system-tech}
\left\{ \begin{array}{l}
    i\mu \langle 1, \mathcal{H}_{\pm} \hat{w}_0 \rangle + \lambda \langle 1, \mathcal{M}_{\pm} \hat{v}_0 \rangle,  \\
    i\mu \langle \eta', \mathcal{L}_{\pm} \hat{v}_0 \rangle 
    + \lambda \langle \eta', (\mathcal{M}_{\pm}^* \hat{w}_0 - 2 c \mathcal{H}_{\pm} \hat{v}_0) \rangle,
    \end{array} \right.
\end{align}
We substitute (\ref{kernel}) into (\ref{system-tech}) to obtain 
\begin{align}
\label{matrix-projections}
\left( \begin{matrix} 
0 & \mp \mu \\
0 & 0 
\end{matrix} \right) \left( \begin{matrix} a_1 \\ a_2 \end{matrix} \right), 
\end{align}
where we have used $\langle \eta', \eta \rangle = \langle 1, \eta' \rangle = \langle \eta', \mathcal{K} \eta \rangle = 0$, 
$\mathcal{M}_{\pm} \eta' = \eta'$, $\mathcal{L}_{\pm} \eta' = -\eta$, 
see (\ref{H-mu}), (\ref{M-mu}), and (\ref{L-on-eta-prime}). 
The first-order terms in $\lambda$ have zero projections due to the 
existence of the generalized kernel given by (\ref{kernel-generalized}). 
The first-order terms in $\mu$ has one nonzero projection and one zero projection.

In terms of analysis in Section \ref{sec-case-1}, we can associate 
$\mu$ as $\ve$ and obtain (\ref{b-coefficients}) from (\ref{matrix-projections}). 
This suggest the eigenvalue splitting as in (\ref{eigenvalue-split-2}) 
with two eigenvalues of the order $\lambda = \mathcal{O}(\mu)$  
related to the subspace $(a_1,a_2) = (1,0)$ and
two eigenvalues of the order $\lambda = \mathcal{O}(\mu^{1/2})$ related to the subspace $(a_1,a_2) = (0,1)$. By using direct Puiseux expansions, it suffices to consider these two subspaces separately from each other, with two different expansions in powers of $\mu$.  In what follows, we give the computational details which imply the assertion. Justification of the asymptotic expansions is based on the methods from \cite{Hin91,Hor73,Kir92} explained in Section \ref{sec-case-1}.

\vspace{0.2cm}

\underline{Subspace $(a_1,a_2) = (1,0)$:} We introduce the asymptotic expansions
\begin{equation}
\label{expansion-1-eps}
\mu = \epsilon \mu_1, \quad \lambda = \epsilon \lambda_1 + \mathcal{O}(\epsilon^2),
\end{equation}
and 
\begin{equation}
\label{expansion-1}
\left( \begin{array}{c} \hat{v} \\ \hat{w} \end{array} \right) = \left( \begin{array}{c} \eta' \\ 0 \end{array} \right) + \epsilon \left( \begin{array}{c} \hat{v}_1 \\ \hat{w}_1 \end{array} \right) + \epsilon^2 \left( \begin{array}{c} \hat{v}_2 \\ \hat{w}_2 \end{array} \right) + \mathcal{O}(\epsilon^3),
\end{equation}
where $\epsilon > 0$ is a small parameter, $\mu_1 \neq 0$ is fixed, and the correction terms $(\hat{v}_1, \hat{w}_1)$ and $(\hat{v}_2, \hat{w}_2)$ are to be computed recursively in function space $H^1_{\rm per} \times H^1_{\rm per}$. 

\underline{\bf Order $\mathcal{O}(\epsilon)$:} We get the linear inhomogeneous system 
\begin{equation}
\left\{ \begin{array}{ll}
\mathcal{K} \hat{w}_1 &=  \lambda_1 \mathcal{M} \eta', \\
\mathcal{L} \hat{v}_1 &= -  2 c \lambda_1 \mathcal{H} \eta' + i \mu_1 \mathcal{L}_{\pm} \eta' ,
\end{array}
\right. 
\label{level-one}
\end{equation}
where the plus/minus signs correspond to ${\rm sgn}(\mu_1) = \pm 1$. 
Since $\mathcal{L}_{\pm} \eta' = -\eta$ by (\ref{L-on-eta-prime}) and 
\begin{equation}
\label{Upsilon}
\mathcal{L}	\Upsilon = - \eta, \quad \Upsilon := \frac{c}{2} \partial_c \eta - \eta,
\end{equation}
by (\ref{L-on-eta}) and (\ref{der-speed}), we use (\ref{kernel}) and (\ref{kernel-generalized}) and obtain the solution in the form 
\begin{align}
\left( \begin{array}{c} \hat{v}_1 \\ \hat{w}_1 \end{array} \right) = \lambda_1 
\left[ \left( \begin{array}{c} -\partial_c \eta \\ \mathcal{H} \eta \end{array} \right) \pm i \langle 1, \partial_c \eta \rangle 
\left( \begin{array}{c} \eta' \\ -c \end{array} \right) 
\right]
+ i \mu_1 \left( \begin{array}{c} \Upsilon \\ 0 \end{array} \right) 
 + a \left( \begin{array}{c} 0 \\ 1 \end{array} \right),\label{level-one-solution}
\end{align}
where $a \in \mathbb{R}$ is arbitrary at this order and the projection 
to $(\eta',0)^T$ is not included since the eigenfunctions are defined up to arbitrary scalar multiplication in (\ref{expansion-1}). 

\underline{\bf Order $\mathcal{O}(\epsilon^2)$:} We get the linear inhomogeneous system 
\begin{equation}
\left\{ \begin{array}{ll}
\mathcal{K} \hat{w}_2 &=  \lambda_1 \left( \mathcal{M} \hat{v}_1 \pm i \langle 1, \hat{v}_1 \rangle \eta' \right) + i \mu_1 \left( \mathcal{H} \hat{w}_1 \pm i \langle 1, \hat{w}_1 \rangle \right), \\
\mathcal{L} \hat{v}_2 &= \lambda_1 \left( \mathcal{M}^* \hat{w}_1 
-  2 c  \mathcal{H} \hat{v}_1 \mp i \langle \eta', \hat{w}_1 \rangle \mp 2 c i \langle 1, \hat{v}_1 \rangle \right) + i \mu_1 \mathcal{L}_{\pm} \hat{v}_1.
\end{array}
\right. 
\label{level-two}
\end{equation}
By Fredholm's theorem, there exists a periodic solution of the linear inhomogeneous system (\ref{level-two}) if and only if the two constraints are satisfied:
\begin{align}
\label{lin-alg-1}
\lambda_1 \langle 1, \mathcal{M} \hat{v}_1 \rangle \mp \mu_1 \langle 1, \hat{w}_1 \rangle &= 0, \\
\lambda_1 \langle \eta', \mathcal{M}^* \hat{w}_1 
-  2 c  \mathcal{H} \hat{v}_1 \rangle + i \mu_1 \langle \eta', \mathcal{L}_{\pm} \hat{v}_1 \rangle &= 0,
\label{lin-alg-2}
\end{align} 
where we have used $\langle 1, \eta' \rangle = \langle 1, \mathcal{H} \hat{w}_1 \rangle = 0$. Taking derivative of the constraint (\ref{zero-mean}) with respect to $c$ yields 
\begin{equation}
\label{der-c}
\langle (1+2K\eta), \partial_c \eta \rangle = 0.
\end{equation}
Since $\mathcal{M}^* 1 = 1 + 2 \mathcal{K}\eta$, it follows from (\ref{level-one-solution}) and (\ref{lin-alg-1}) with the help of (\ref{der-c}) that 
\begin{equation}
    \label{tech-eq-1}
i \mu_1 \lambda_1 \langle (1 + 2 \mathcal{K} \eta), \Upsilon \rangle 
+ i c \mu_1 \lambda_1 \langle 1, \partial_c \eta \rangle \mp a \mu_1 = 0.
\end{equation}
By using (\ref{zero-mean}), (\ref{L-on-1}), (\ref{Upsilon}), and self-adjointness of $\mathcal{L}$, we get 
\begin{equation}
\label{tech-eq-11}
\langle (1 + 2 \mathcal{K} \eta), \Upsilon \rangle = - \langle \mathcal{L} 1, \Upsilon \rangle = -\langle 1, \mathcal{L} \Upsilon \rangle 
= \langle 1, \eta \rangle = - \langle \mathcal{K} \eta, \eta \rangle. 
\end{equation}
Dividing (\ref{tech-eq-1}) by $\pm \mu_1 \neq 0$ and using (\ref{wave-momentum}),  (\ref{der-c}), and (\ref{tech-eq-11}), 
we obtain the unique expression for $a$:
\begin{align}
a &= \pm i \lambda_1 \left[ c \langle 1, \partial_c \eta \rangle - \langle \mathcal{K} \eta, \eta \rangle \right] \notag \\
&= \mp i \lambda_1 \left[ 2 c \langle K \eta, \partial_c \eta \rangle + \langle \mathcal{K} \eta, \eta \rangle \right] \notag \\
&= \mp i \lambda_1 \mathcal{P}'(c). \label{a-first}
\end{align}
Similarly, it follows from (\ref{level-one-solution}) and (\ref{lin-alg-2}) that $(\mu_1,\lambda_1)$ must be related according to the characteristic equation:
\begin{align}
\label{some-tech}
\lambda_1^2 \left[ \langle \mathcal{K} \eta, \eta \rangle + 2 c \langle \mathcal{K} \eta, \partial_c \eta \rangle \right] - i \mu_1 \lambda_1 \left[ \langle \eta, \partial_c \eta \rangle + 2 c \langle \mathcal{K}\eta, \Upsilon \rangle \right] - \mu_1^2 \langle \eta, \Upsilon \rangle = 0,
\end{align}
where we have used $\mathcal{M} \eta' = \eta'$ and $\mathcal{L}_{\pm} \eta' = -\eta$ together with skew-adjointness of 
$\mathcal{H}$ and $\mathcal{L}_{\pm}$ in $L^2_{\rm per}$. Note that $a$ in (\ref{a-first}) does not contribute to the equation (\ref{some-tech}). Substituting (\ref{Upsilon}) into (\ref{some-tech}) and using (\ref{wave-momentum}) and (\ref{wave-norm}) yields the quadratic characteristic equation 
\begin{equation}
\label{quad-eq}
\lambda_1^2 \mathcal{P}'(c) - i \mu_1 \lambda_1 \mathcal{N}'(c) + \mu_1^2 \left[ \mathcal{N}(c) - \frac{c}{4} \mathcal{N}'(c) \right] = 0,
\end{equation}
where we have used that 
$$
2c \langle \mathcal{K} \eta, \Upsilon \rangle = - \langle \mathcal{L} \partial_c \eta, \Upsilon \rangle = \langle \partial_c \eta, \eta \rangle,
$$
due to (\ref{der-speed}) and (\ref{Upsilon}) as well as self-adjointness of $\mathcal{L}$ in $L^2_{\rm per}$. Note that the quadratic equation (\ref{quad-eq}) is the same for both $\mu_1 > 0$ and $\mu_1 < 0$. It admits two solutions $\lambda_1^{\pm}(\mu_1)$ given by 
\begin{equation}
    \label{lambda-quad}
\lambda_1^{\pm}(\mu_1) = \frac{i \mu_1}{2 \mathcal{P}'(c)} \left[ \mathcal{N}'(c) \pm \sqrt{\Delta(c)}\right], 
\end{equation}
where $\Delta(c)$ is given by (\ref{Delta}). If $\Delta(c) > 0$, then 
$\lambda_1^{\pm}(\mu_1) \in i \mathbb{R}$ and $\lambda_1^+(\mu_1) \neq \lambda_2^-(\mu_1)$ for $\mu_1 \neq 0$. If $\Delta(c) = 0$, then $\lambda_1^{\pm}(\mu_1) \in i \mathbb{R}$ and $\lambda_1^+(\mu_1) = \lambda_2^-(\mu_1)$ for $\mu_1 \neq 0$. If $\Delta(c) < 0$, then $\lambda_1^{\pm}(\mu_1) \notin i \mathbb{R}$. 

Persistence of the two spectral bands crossing $\lambda = 0$ for $\Delta(c) \neq 0$ follows from the Hamiltonian symmetry of Lemma \ref{lem-mod-spectrum}. For fixed $\mu_1 \neq 0$, the two eigenvalues $\lambda_1^{\pm}$ are distinct and hence they remain either on $i \R$ if $\Delta(c) > 0$ or in each quadrant of the complex $\lambda$-plane if $\Delta(c) < 0$, in the asymptotic expansions (\ref{expansion-1-eps}) and (\ref{expansion-1}). 

\vspace{0.2cm}

\underline{Subspace $(a_1,a_2) = (0,1)$:} We introduce the asymptotic expansions
\begin{equation}
\label{expansion-1-second-eps}
\mu = \epsilon^2 \mu_2, \quad \lambda = \epsilon \lambda_1 + \mathcal{O}(\epsilon^2),
\end{equation}
and  
\begin{equation}
\label{expansion-1-second}
\left( \begin{array}{c} \hat{v} \\ \hat{w} \end{array} \right) = \left( \begin{array}{c} 0 \\ 1 \end{array} \right) + \epsilon \left( \begin{array}{c} \hat{v}_1 \\ \hat{w}_1 \end{array} \right) + \epsilon^2 \left( \begin{array}{c} \hat{v}_2 \\ \hat{w}_2 \end{array} \right) + \mathcal{O}(\epsilon^3).
\end{equation}
where $\epsilon > 0$ is a small parameter and $\mu_2 \neq 0$ is fixed.

\underline{\bf Order $\mathcal{O}(\epsilon)$:} We use (\ref{kernel}) and (\ref{kernel-generalized}) and obtain the solution in the form 
\begin{align}
\left( \begin{array}{c} \hat{v}_1 \\ \hat{w}_1 \end{array} \right) = \lambda_1 
\left[ \left( \begin{array}{c} -1 \\ 0 \end{array} \right) \pm i 
\left( \begin{array}{c} \eta' \\ -c \end{array} \right) 
\right]
 + a \left( \begin{array}{c} \eta' \\ 0 \end{array} \right),\label{level-one-solution-second}
\end{align}
where $a \in \mathbb{R}$ is arbitrary at this order and the projection 
to $(0,1)^T$ is not included since the eigenfunctions are defined up to arbitrary scalar multiplication in (\ref{expansion-1-second}). 

\underline{\bf Order $\mathcal{O}(\epsilon^2)$:} We get the linear inhomogeneous system 
\begin{equation}
\left\{ \begin{array}{ll}
\mathcal{K} \hat{w}_2 &=  i \mu_2 \mathcal{H}_{\pm} 1 + \lambda_1 \mathcal{M}_{\pm} \hat{v}_1, \\
\mathcal{L} \hat{v}_2 &= \lambda_1 \left( \mathcal{M}^*_{\pm} \hat{w}_1 
-  2 c  \mathcal{H}_{\pm} \hat{v}_1 \right).
\end{array}
\right. 
\label{level-two-second}
\end{equation}
Fredholm's orthogonality condition to the first equation of system (\ref{level-two-second}) gives 
\begin{equation}
\label{single-eq}
\mp \mu_2 - \lambda_1^2 \langle 1, \mathcal{M} 1 \rangle = 0 \quad \Rightarrow \quad \lambda_1^2 = -|\mu_2|, 
\end{equation}
where we have used $\mathcal{M} 1 = 1 + \mathcal{K} \eta$ and $\langle 1, \mathcal{K} \eta \rangle = 0$. 
Fredholm's orthogonality condition to the second equation of system (\ref{level-two-second}) is trivially satisfied 
because 
$$
\langle \eta', \mathcal{M}^*_{\pm} \hat{w}_1 -  2 c  \mathcal{H}_{\pm} \hat{v}_1 \rangle = 
\langle \eta', \hat{w}_1 \rangle - 2 c \langle \mathcal{K} \eta, \hat{v}_1 \rangle = 0.
$$
The two solutions $\lambda_1^{\pm}(\mu_2) = \pm i \sqrt{|\mu_2|} \in i \mathbb{R}$ satisfy $\lambda_1^+(\mu_2) \neq \lambda_1^-(\mu_2)$ for $\mu_2 \neq 0$. Under the orthogonality condition $\lambda_1^2 + |\mu_2| = 0$, the system (\ref{level-two-second}) can be solved explicitly:
\begin{align}
\left( \begin{array}{c} \hat{v}_2 \\ \hat{w}_2 \end{array} \right) = \lambda_1^2 
\left( \begin{array}{c} \pm i (\partial_c \eta - c) \\ -\eta \end{array} \right) 
 + a \lambda_1 \left[ \left( \begin{array}{c} -\partial_c \eta \\ \mathcal{H} \eta \end{array} \right) \pm i \langle 1, \partial_c \eta \rangle \left( \begin{array}{c} \eta' \\ -c \end{array} \right) \right],
 \label{level-two-solution-second}
\end{align}
where we have used $\mathcal{L}(\partial_c \eta - c) = c$, which follows from (\ref{L-on-1}) and (\ref{der-speed}). The parameter $a \in \R$ is not defined in (\ref{level-one-solution-second}) and (\ref{level-two-solution-second}) 
up to the order $\mathcal{O}(\epsilon^2)$. Since $\lambda_1^+(\mu_2) \neq \lambda_1^-(\mu_2)$ for $\mu_2 \neq 0$, the two spectral bands persist on $i \R$ 
in the asymptotic expansions (\ref{expansion-1-second-eps}) and (\ref{expansion-1-second}) due to the Hamiltonian symmetry of Lemma \ref{lem-mod-spectrum}. 
\end{proof}

\begin{remark}
	\label{rem-Delta}
	If $\Delta(c) < 0$, the spectral bands in Theorem \ref{theorem-bif-1} crossing $\lambda = 0$ can be parts of either figure-$8$ or  figure-$\infty$ or a more complicated pattern. If $\Delta(c) > 0$, the spectral bands on $i \mathbb{R}$ near $\lambda = 0$ may become complex away from $\lambda = 0$ due to additional bifurcations of the instability bubbles on $i \R$. If $\Delta(c) = 0$, we have the bifurcation for which the degenerate figure-8 bands have the vertical slope at $\lambda = 0$. Computation of the normal form of this bifurcation is required to unfold the degeneracy beyond the leading order and to obtain further information on whether the spectral bands are subsets of $i \mathbb{R}$ near $\lambda = 0$ or cross $\lambda = 0$ from the four complex quadrants of the complex $\lambda$-plane, see Theorem \ref{theorem-bif-1-normal-form} below.
\end{remark}

\subsection{Normal form for the bifurcation} 
\label{sec-3-2}

The normal form takes into account higher-order terms in $\mu$ which break the degeneracy of the double eigenvalue $\lambda_1^{\pm}(\mu_1) = \frac{i \mu_1 \mathcal{N}'(c)}{2 \mathcal{P}'(c)}$ in the case of $\Delta(c) = 0$, see Remark \ref{rem-Delta}. This is done by continuing Puiseux expansions to the higher orders in $\epsilon$, as in Section \ref{sec-case-4}. The following theorem gives the normal form with a numerical coefficient $\mathcal{D}$, which is assumed to be non-zero for the non-degeneracy.

\begin{theorem}
	\label{theorem-bif-1-normal-form}
    Under the assumptions of Theorem \ref{theorem-bif-1}, let $c_0 \in (1,c_*)$ be the root of $\Delta(c)$ given by (\ref{Delta}). 
    Let $\mathcal{D}$ be given by (\ref{num-coeff-D}) and assume that $\mathcal{D} \neq 0$. The two spectral bands intersecting $\lambda = 0$ with the vertical slope at $c = c_0$ are non-degenerate and obtained from roots of the characteristic equation
    \begin{align}
		\label{normal-form-one}
\mathcal{P}'(c_0) \left( \lambda - \frac{i \mu \mathcal{N}'(c_0)}{2 \mathcal{P}'(c_0)} \right)^2 + \mathcal{D} |\mu|^3 = \mathcal{O}(\mu^4).	
	\end{align}	
If ${\rm sgn}\mathcal{D} = {\rm sgn} \mathcal{P}'(c_0)$, the two spectral bands are subsets of $i \R$. If 
${\rm sgn}\mathcal{D} = -{\rm sgn} \mathcal{P}'(c_0)$, the two spectral bands cross $\lambda = 0$ along the straight lines 
in all four complex quadrants of the complex $\lambda$-plane.
\end{theorem}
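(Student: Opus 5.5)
We continue the direct Puiseux expansion of the proof of Theorem \ref{theorem-bif-1} in the subspace $(a_1,a_2)=(1,0)$ to higher order, exactly as in the degenerate case of Section \ref{sec-case-4}. At $c=c_0$ we have $\Delta(c_0)=0$, so the quadratic (\ref{quad-eq}) has the double root $\lambda_1^* = \frac{i\mu_1 \mathcal{N}'(c_0)}{2\mathcal{P}'(c_0)}$. Guided by (\ref{eigenvalue-split-7}), we replace (\ref{expansion-1-eps}) by
$$
\mu=\epsilon\mu_1,\qquad \lambda = \epsilon\lambda_1^* + \epsilon^{3/2}\lambda_{3/2} + \epsilon^2\lambda_2+\mathcal{O}(\epsilon^{5/2}),
$$
with eigenfunction corrections in half-integer powers of $\epsilon$. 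We fix $\mu_1$ of one sign, so that the $\pm$ operators in (\ref{mod-Bab-eq}) are analytic (Remark \ref{rem-anal}). The sign choice only produces the factor $|\mu|^3 = |\mu_1|^3\epsilon^3$ after undoing the scaling, since the terms with $\pm$ enter in pairs.

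The orders $\mathcal{O}(\epsilon)$ and $\mathcal{O}(\epsilon^{3/2})$ reproduce (\ref{level-one-solution}) with $\lambda_1=\lambda_1^*$. The order-$\epsilon^{3/2}$ correction is $\lambda_{3/2}$ times the generalized eigenvector (\ref{kernel-generalized}) with $a_1=1$, plus a free multiple of $(0,1)^T$. At order $\mathcal{O}(\epsilon^2)$ the two Fredholm conditions (\ref{lin-alg-1})--(\ref{lin-alg-2}) are satisfied identically, because $\lambda_1^*$ is a root of (\ref{quad-eq}). Indeed, (\ref{lin-alg-1}) fixes the coefficient $a$ via (\ref{a-first}), and (\ref{lin-alg-2}) is then $\mathcal{P}'(c_0)(\lambda_1-\lambda_1^*)^2=0$. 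We therefore solve explicitly for $(\hat v_2,\hat w_2)$ in $H^1_{\rm per}\times H^1_{\rm per}$, using $\mathcal{L}^{-1}$ on ${\rm span}(\eta')^\perp$ and $\mathcal{K}^{-1}$ on ${\rm span}(1)^\perp$, which are available by Assumption \ref{ass-kernel}. This solution contains a new free constant times $(0,1)^T$.

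At order $\mathcal{O}(\epsilon^{5/2})$, the condition projected on $1$ determines the free constant multiplying $(0,1)^T$ at order $\epsilon^{3/2}$. The condition projected on $\eta'$ is automatically satisfied by the double-root structure. At order $\mathcal{O}(\epsilon^3)$, the two projections (on $1$ for the first equation, on $\eta'$ for the second) give the following. The first fixes the constant in $(\hat v_2,\hat w_2)$. The second, after eliminating that constant, yields a scalar relation
$$
\mathcal{P}'(c_0)\lambda_{3/2}^2 + \mathcal{D}|\mu_1|^3=0 .
$$
Here $\mathcal{D}$ collects inner products of $(\hat v_2,\hat w_2)$ with $\eta'$, $\mathcal{L}_\pm$, $\mathcal{M}^*_\pm$, $\mathcal{H}_\pm$, and this is the defining formula (\ref{num-coeff-D}). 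The terms linear in $\lambda_{3/2}$ cancel, by the same mechanism as the vanishing of $\lambda_2^*$-derivative terms in Section \ref{sec-case-4}. Undoing the scaling gives (\ref{normal-form-one}). This identification step is the main obstacle. The bookkeeping is heavy, and we must check that the $\pm$-dependent contributions combine into $|\mu|^3$ with a coefficient that is independent of the sign. For that check, we would use the symmetry of Lemma \ref{lem-mod-spectrum} ($\lambda(\mu)\mapsto \bar\lambda(-\mu)$) to relate the two signs rather than recomputing.

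Justification follows Section \ref{sec-case-4}. The restriction of the characteristic (Fredholm/Weierstrass-prepared) determinant to the four-dimensional generalized kernel has the Newton polytope described there with a double root on the slope $-1$ edge. The next term $R(\lambda_2^*)$ corresponds to $-\mathcal{D}|\mu_1|^3/\ldots \neq 0$ by assumption, so the two roots split at order $|\mu|^{3/2}$ as $\lambda = \frac{i\mu\mathcal{N}'(c_0)}{2\mathcal{P}'(c_0)} \pm |\mu|^{3/2}\sqrt{-\mathcal{D}/\mathcal{P}'(c_0)}+\mathcal{O}(\mu^2)$. If $\mathcal{D}/\mathcal{P}'(c_0)>0$, the correction is purely imaginary. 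The two roots are then distinct and lie on $i\mathbb{R}$, and the Hamiltonian symmetry keeps them there. If $\mathcal{D}/\mathcal{P}'(c_0)<0$, the correction is real, and the bands leave $\lambda=0$ tangent to $i\mathbb{R}$ into all four quadrants. The other two bands, from the subspace $(0,1)$, are unaffected, as in Theorem \ref{theorem-bif-1}.
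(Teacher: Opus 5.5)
Your proposal follows the paper's proof essentially step for step. It uses the same half-integer Puiseux expansion about the double root $\lambda_1 = \frac{i\mu_1\mathcal{N}'(c_0)}{2\mathcal{P}'(c_0)}$. The order-$\epsilon^{5/2}$ conditions fix the $(0,1)^T$ coefficient at order $\epsilon^{3/2}$, and the $\eta'$ condition there vanishes by the double-root structure. The order-$\epsilon^3$ condition on $\eta'$ then gives $\lambda_{3/2}^2\mathcal{P}'(c_0) \pm \mu_1^3\mathcal{D} = 0$.

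The difference is in the step you flag as the obstacle. The paper obtains the real, sign-independent coefficient (\ref{num-coeff-D}) by a parity decomposition: only the odd part of $\tilde{w}_2$ and the even part of $\tilde{v}_2$ enter, and both are $\pm i$ times real functions coming only from the $\pm i\langle 1,\cdot\rangle$ parts of the operators. Your route instead invokes the symmetry $\lambda(\mu)\mapsto\bar{\lambda}(-\mu)$. Writing $\mathcal{D}_\pm$ for the coefficient obtained when $\pm\mu>0$, that symmetry alone only gives $\mathcal{D}_-=\overline{\mathcal{D}_+}$. You would also need $\lambda\mapsto-\bar{\lambda}$ at fixed $\mu$, or the paper's parity computation, to conclude that $\mathcal{D}$ is real.
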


\begin{proof}
By using Puiseux expansions for the double eigenvalue 
$\lambda_1^+(\mu_1) = \lambda_2^-(\mu_1)$ in (\ref{lambda-quad}) with $\Delta(c_0) = 0$, we modify the expansions (\ref{expansion-1-eps}) and (\ref{expansion-1})
as follows:
\begin{equation}
    \label{formal-exp-1}
\mu = \epsilon \mu_1, \quad \lambda = \epsilon \lambda_1 + \epsilon^{3/2} \lambda_{3/2} + \mathcal{O}(\epsilon^2),
\end{equation}
and 
\begin{align}
\left( \begin{array}{c} \hat{v} \\ \hat{w} \end{array} \right) &= \left( \begin{array}{c} \eta' \\ 0 \end{array} \right) + \epsilon \left( \begin{array}{c} \hat{v}_1 \\ \hat{w}_1 \end{array} \right) + \epsilon^{3/2} \left( \begin{array}{c} \hat{v}_{3/2} \\ \hat{w}_{3/2} \end{array} \right) + \epsilon^2 \left( \begin{array}{c} \hat{v}_2 \\ \hat{w}_2 \end{array} \right) \notag \\
& \quad + \epsilon^{5/2} \left( \begin{array}{c} \hat{v}_{5/2} \\ \hat{w}_{5/2} \end{array} \right) + \epsilon^2 \left( \begin{array}{c} \hat{v}_3 \\ \hat{w}_3 \end{array} \right) + \mathcal{O}(\epsilon^{7/2}).
\label{expansion-1-P}
\end{align}
We get the same solutions for the first two corrections terms as in (\ref{level-one-solution}): 
\begin{align}
\left( \begin{array}{c} \hat{v}_1 \\ \hat{w}_1 \end{array} \right) = \lambda_1 
\left( \begin{array}{c} \tilde{v}_1 \\ \tilde{w}_1 \end{array} \right) + i \mu_1 \left( \begin{array}{c} \Upsilon \\ 0 \end{array} \right) 
 + a_1 \left( \begin{array}{c} 0 \\ 1 \end{array} \right),\label{level-one-solution-P}
\end{align}
and 
\begin{align}
\left( \begin{array}{c} \hat{v}_{3/2} \\ \hat{w}_{3/2} \end{array} \right) = \lambda_{3/2}
\left( \begin{array}{c} \tilde{v}_1 \\ \tilde{w}_1 \end{array} \right) + a_{3/2} \left( \begin{array}{c} 0 \\ 1 \end{array} \right),
\label{level-three-half-solution-P}
\end{align}
where 
\begin{align*}
 \left\{ \begin{array}{l} \tilde{v}_1 = -\partial_c \eta \pm i \langle 1, \partial_c \eta \rangle \eta', \\
    \tilde{w}_1 = \mathcal{H} \eta  \mp i c \langle 1, \partial_c \eta \rangle.
    \end{array} \right.
\end{align*}
We recall from  (\ref{a-first}) and (\ref{lambda-quad}) with $\Delta(c_0) = 0$ that $a_1 = \mp i \lambda_1 \mathcal{P}'(c_0)$ and 
$\lambda_1 = i \gamma_1 \mu_1$ with $\gamma_1 = \frac{\mathcal{N}'(c_0)}{2 \mathcal{P}'(c_0)}$. The parameter $a_{3/2} \in \mathbb{R}$ is arbitrary at this order. The linear inhomogeneous system (\ref{level-two}) is solvable under the two conditions on $a$ and $\lambda_1$ so that we can define the solution in the implicit form:
\begin{align}
\left( \begin{array}{c} \hat{v}_2 \\ \hat{w}_2 \end{array} \right) = \mu_1^2
\left( \begin{array}{c} \tilde{v}_2 \\ \tilde{w}_2 \end{array} \right) 
+ a_2 \left( \begin{array}{c} 0 \\ 1 \end{array} \right), 
\label{level-two-solution-P}
\end{align}
where the coefficient $a_2\in \mathbb{R}$ is arbitrary at this order and $(\tilde{v}_2,\tilde{w}_2) \in H^1_{\rm per} \times H^1_{\rm per}$ is uniquely determined from solutions of he linear inhomogeneous system 
\begin{equation}
\left\{ \begin{array}{ll}
\mathcal{K} \tilde{w}_2 &=  -\gamma_1^2 \mathcal{M}_{\pm} \tilde{v}_1 - \gamma_1 \mathcal{M}_{\pm} \Upsilon - \gamma_1 \mathcal{H}_{\pm} 
\left( \tilde{w}_1 \mp i \mathcal{P}'(c_0) \right), \\
\mathcal{L} \tilde{v}_2 &=  -\gamma_1^2 \left( \mathcal{M}_{\pm}^* \tilde{w}_1 \mp i \mathcal{P}'(c_0) \mathcal{M}^* 1 
-  2 c  \mathcal{H}_{\pm} \tilde{v}_1 \right) + 2 c \gamma_1 \mathcal{H}_{\pm} \Upsilon - \gamma_1 \mathcal{L}_{\pm} \tilde{v}_1 - \mathcal{L}_{\pm} \Upsilon,
\end{array}
\right. 
\label{level-two-P}
\end{equation}
under the orthogonality conditions $\langle 1, \hat{w}_2 \rangle = \langle \eta', \hat{v}_2 \rangle = 0$.

\underline{\bf Order $\mathcal{O}(\epsilon^{5/2})$:} We get the linear inhomogeneous system 
\begin{equation}
\left\{ \begin{array}{ll}
\mathcal{K} \hat{w}_{5/2} &=  2 \lambda_1 \lambda_{3/2} \mathcal{M}_{\pm} \tilde{v}_1 + 
i \mu_1 \left( \lambda_{3/2} \mathcal{H}_{\pm} \tilde{w}_1 + a_{3/2} \mathcal{H}_{\pm} 1 \right), \\
\mathcal{L} \hat{v}_{5/2} &= 2 \lambda_1 \lambda_{3/2} \left( \mathcal{M}_{\pm}^* \tilde{w}_{1} 
-  2 c  \mathcal{H}_{\pm} \tilde{v}_{1} \right) + (\lambda_1 a_{3/2} + \lambda_{3/2} a_1) \mathcal{M}_{\pm}^* 1  \\
& \qquad 
- 2 c i \mu_1 \lambda_{3/2}  \mathcal{H}_{\pm} \Upsilon + i \mu_1 \lambda_{3/2} \mathcal{L}_{\pm} \tilde{v}_{1}.
\end{array}
\right. 
\label{level-five-half}
\end{equation}
The Fredholm's condition for the first equation of the system (\ref{level-five-half}) yields 
$$
a_{3/2} = \pm i \lambda_{3/2} c \langle 1, \partial_c \eta \rangle,
$$
which simply implies that the correction term in (\ref{level-three-half-solution-P}) can be rewritten in the form 
\begin{align}
\left( \begin{array}{c} \hat{v}_{3/2} \\ \hat{w}_{3/2} \end{array} \right) = \lambda_{3/2}
\left( \begin{array}{c} \tilde{v}_1 \\ \mathcal{H} \eta \end{array} \right).
\label{level-three-half-solution-P-equiv}
\end{align}
The Fredholm's condition for the second equation of the system (\ref{level-five-half}) yields 
\begin{align*}
   0 &= \langle \eta', 2 \lambda_1 \lambda_{3/2} \left( \mathcal{M}_{\pm}^* \tilde{w}_{1} 
-  2 c  \mathcal{H}_{\pm} \tilde{v}_{1} \right) 
- 2 c i \mu_1 \lambda_{3/2}  \mathcal{H}_{\pm} \Upsilon + i \mu_1 \lambda_{3/2} \mathcal{L}_{\pm} \tilde{v}_{1} \rangle \\
&=  2 \lambda_1 \lambda_{3/2} \left[ \langle \eta', \mathcal{H} \eta \rangle + 2 c \langle \eta', \mathcal{H} \partial_c \eta \rangle \right] 
- i \mu_1 \lambda_{3/2} \left[ \langle \eta, \partial_c \eta \rangle + 2 c \langle \mathcal{K} \eta, \Upsilon \rangle \right]\\
&= \lambda_{3/2} \left[ 2 \lambda_1 \mathcal{P}'(c_0) - i \mu_1 \mathcal{N}'(c_0) \right].
\end{align*}
This condition is equivalent to the derivative of the quadratic characteristic equation (\ref{quad-eq}) in $\lambda_1$ 
and it is trivially satisfied since $\lambda_1 = i \gamma_1 \mu_1$ is a double 
root of the characteristic equation (\ref{quad-eq}) with $\Delta(c_0) = 0$. Hence, we write the solution in the implicit form:
\begin{align}
\left( \begin{array}{c} \hat{v}_{5/2} \\ \hat{w}_{5/2} \end{array} \right) = \mu_1 \lambda_{3/2}
\left( \begin{array}{c} \tilde{v}_{5/2} \\ \tilde{w}_{5/2} \end{array} \right) + a_{5/2} \left( \begin{array}{c} 0 \\ 1 \end{array} \right), 
\label{level-five-two-solution-P}
\end{align}
where the coefficient $a_{5/2} \in \mathbb{R}$ is arbitrary at this order and $(\tilde{v}_{5/2},\tilde{w}_{5/2}) \in H^1_{\rm per} \times H^1_{\rm per}$ is uniquely determined from solution of 
the linear inhomogeneous system
\begin{equation*}
\left\{ \begin{array}{ll}
\mathcal{K} \tilde{w}_{5/2} &=  2 i \gamma_1 \mathcal{M}_{\pm} \tilde{v}_1 + 
+ i \mathcal{H}_{\pm} \mathcal{H} \eta, \\
\mathcal{L} \tilde{v}_{5/2} &= 2 i \gamma_1 \left( \mathcal{M}_{\pm}^* \tilde{w}_{1} 
-  2 c  \mathcal{H}_{\pm} \tilde{v}_{1} \right) \pm \gamma_1  (\mathcal{P}'(c_0) - c \langle 1, \partial_c \eta \rangle) \mathcal{M}_{\pm}^* 1  \\
& \qquad 
- 2 c i \mathcal{H}_{\pm} \Upsilon + i \mathcal{L}_{\pm} \tilde{v}_{1},
\end{array}
\right. 
\end{equation*}
under the orthogonality conditions $\langle 1, \hat{w}_{5/2} \rangle = \langle \eta', \hat{v}_{5/2} \rangle = 0$.

\underline{\bf Order $\mathcal{O}(\epsilon^3)$:} We get the linear inhomogeneous system 
\begin{equation}
\left\{ \begin{array}{ll}
\mathcal{K} \hat{w}_3 &=  \lambda_1 \mathcal{M}_{\pm} \hat{v}_2 + \lambda_{3/2} \mathcal{M}_{\pm} \hat{v}_{3/2} + 
i \mu_1 \mathcal{H}_{\pm} \hat{w}_{2}, \\
\mathcal{L} \hat{v}_3 &= \lambda_1 \left( \mathcal{M}_{\pm}^* \hat{w}_{2} 
-  2 c  \mathcal{H}_{\pm} \hat{v}_{2} \right) + \lambda_{3/2} \left( \mathcal{M}_{\pm}^* \hat{w}_{3/2} 
-  2 c  \mathcal{H}_{\pm} \hat{v}_{3/2} \right) + i \mu_1 \mathcal{L}_{\pm} \hat{v}_{2}.
\end{array}
\right. 
\label{level-three-P}
\end{equation}
The Fredholm's condition for the first equation of the system (\ref{level-three-P}) yields a unique value for the coefficient $a_2$. Since 
it does not affect the computations of the second equation, we do not 
write it explicitly. The Fredholm's condition for the second equation of the system (\ref{level-three-P}) yields the characteristic equation for $\lambda_{3/2}$:
\begin{align}
    0 &= \langle \eta', \lambda_1 \left( \mathcal{M}_{\pm}^* \hat{w}_{2} 
-  2 c  \mathcal{H}_{\pm} \hat{v}_{2} \right) + \lambda_{3/2} \left( \mathcal{M}_{\pm}^* \hat{w}_{3/2} 
-  2 c  \mathcal{H}_{\pm} \hat{v}_{3/2} \right) + i \mu_1 \mathcal{L}_{\pm} \hat{v}_{2} \rangle \notag \\
&= \lambda_1 \mu_1^2 \left[ \langle \eta', \tilde{w}_2 \rangle - 2 c \langle \eta', \mathcal{H} \tilde{v}_2 \rangle \right]
+ \lambda_{3/2}^2 \left[ \langle \eta', \mathcal{H} \eta \rangle + 2 c \langle \eta', \mathcal{H} \partial_c \eta \rangle \right]
+ i \mu_1^3 \langle \eta, \tilde{v}_2 \rangle \notag  \\
&= \lambda_{3/2}^2 \mathcal{P}'(c_0) + i \mu_1^3 \left[ \gamma_1 \langle \eta', \tilde{w}_2 \rangle - 2 c \gamma_1 \langle \eta', \mathcal{H} \tilde{v}_2 
\rangle + \langle \eta, \tilde{v}_2 \rangle \right].\label{tech-eq-2}
\end{align}
In order to get (\ref{normal-form-one}) with an explicit expression for the numerical coefficient $\mathcal{D}$, we need to extract the odd part of $\tilde{w}_2$ and the even part of $\tilde{v}_2$ from solutions of the linear inhomogeneous equation (\ref{level-two-P}). We compute
\begin{align*}
    \mathcal{M}_{\pm} \tilde{v}_1 = -\mathcal{M} \partial_c \eta, \quad
    \mathcal{M}_{\pm} \Upsilon &= \mathcal{M} \Upsilon \pm i \langle 1, \Upsilon \rangle \eta', 
\end{align*}
and
\begin{align*}
    \mathcal{H}_{\pm} \left( \tilde{w}_1 \mp i \mathcal{P}'(c_0) \right) &= \mathcal{H} \mathcal{H} \eta + \mathcal{P}'(c_0) + c \langle 1, \partial_c \eta \rangle.
\end{align*}
Since $\mathcal{K}$, $\mathcal{M}$, and $\mathcal{H} \mathcal{H}$ are parity preserving, the odd component of $\tilde{w}_2$ denoted as $\tilde{w}^{\rm odd}$ is given by a solution of the linear inhomogeneous equation 
$$
\mathcal{K} \tilde{w}_2^{\rm odd} =  \mp i \gamma_1 \langle 1, \Upsilon \rangle \eta', 
$$
which yields $\tilde{w}_2^{\rm odd} = \mp i \gamma_1 \langle 1, \Upsilon \rangle \mathcal{H} \eta$. Similarly, we compute 
\begin{align*}
    \mathcal{M}^*_{\pm} \tilde{w}_1 &= \mathcal{M}^* \mathcal{H} \eta \mp i c \langle 1, \partial_c \eta \rangle (1 + 2 \mathcal{K} \eta) 
    \pm i \langle \mathcal{K} \eta, \eta \rangle, \\
       \mathcal{H}^*_{\pm} \tilde{v}_1 &= - \mathcal{H} \partial_c \eta \mp i \langle 1, \partial_c \eta \rangle (1 + \mathcal{K} \eta), \\
       \mathcal{L}_{\pm} \tilde{v}_1 &= -c^2 \mathcal{H} \partial_c \eta \mp i c^2 \langle 1, \partial_c \eta \rangle (1 + \mathcal{K} \eta) 
       + \eta \mathcal{H} \partial_c \eta \pm i \langle 1, \partial_c \eta \rangle \eta (1 + \mathcal{K} \eta) \\
       & \qquad 
       + \mathcal{H} (\eta \partial_c \eta) \mp i \langle 1, \partial_c \eta \rangle \mathcal{H} (\eta \eta') \pm i \langle \eta, \partial_c \eta \rangle, 
\end{align*}
and
\begin{align*}
       \mathcal{L}_{\pm} \Upsilon &= c^2 \mathcal{H} \Upsilon \pm i c^2 \langle 1, \Upsilon \rangle - \eta \mathcal{H} \Upsilon \mp i \langle 1, \Upsilon \rangle \eta 
       - \mathcal{H} (\eta \Upsilon) \mp i \langle \eta, \Upsilon \rangle.
\end{align*}
We note again that $\mathcal{K}$, $\mathcal{L}$, and $\mathcal{M}^*$ are parity preserving, and that the even component in the right-hand sides alternate signs. Hence, the even component of $\tilde{v}_2$ denoted as $\tilde{v}^{\rm even}$ is given by $\tilde{v}_2^{\rm even} = \pm i \check{v}_2^{\rm even}$, where $\check{v}_2^{\rm even}$ is real, even, and uniquely defined from 
a solution of the linear inhomogeneous equation 
\begin{align*}
\mathcal{L} \check{v}_2^{\rm even} &=  \gamma_1^2 c \langle 1, \partial_c \eta \rangle (1 + 2 \mathcal{K} \eta) 
- \gamma_1^2 \langle \mathcal{K} \eta, \eta \rangle + \gamma_1^2  \mathcal{P}'(c_0)  (1 + 2 \mathcal{K} \eta) 
-  2 c \gamma_1^2 \langle 1, \partial_c \eta \rangle (1 + \mathcal{K} \eta) \\
& \quad + 2 c \gamma_1 \langle 1, \Upsilon \rangle + \gamma_1 c^2 \langle 1, \partial_c \eta \rangle (1 + \mathcal{K} \eta) 
- \gamma_1  \langle 1, \partial_c \eta \rangle \eta (1 + \mathcal{K} \eta) 
+ \gamma_1 \langle 1, \partial_c \eta \rangle \mathcal{H} (\eta \eta')  \\
& \quad - \gamma_1 \langle \eta, \partial_c \eta \rangle - c^2 \langle 1, \Upsilon \rangle - \langle 1, \Upsilon \rangle \eta + \langle \eta, \Upsilon \rangle.
\end{align*}
By using (\ref{tech-eq-2}), we obtain 
$$
\lambda_{3/2}^2 \mathcal{P}'(c_0) \pm \mu_1^3 \mathcal{D} = 0, 
$$
with 
\begin{equation}
    \label{num-coeff-D}
\mathcal{D} = \gamma_1^2 \langle \mathcal{K} \eta, \eta\rangle \langle 1, \Upsilon \rangle + 2 c \gamma_1 \langle \mathcal{K} \eta, \check{v}_2^{\rm even}
\rangle - \langle \eta, \tilde{v}_2^{\rm even}\rangle.
\end{equation}
In view of the expansion (\ref{formal-exp-1}), we obtain the normal form (\ref{normal-form-one}). The asymptotic expansions (\ref{expansion-1-P}) is 
given by (\ref{level-one-solution-P}), (\ref{level-two-solution-P}), (\ref{level-three-half-solution-P-equiv}), and (\ref{level-five-two-solution-P}). 
If ${\rm sgn}\mathcal{D} = {\rm sgn} \mathcal{P}'(c_0)$, the two spectral bands are non-degenerate and persist on subsets of $i \R$ due to the Hamiltonian symmetry of Lemma \ref{lem-mod-spectrum}. If 
${\rm sgn}\mathcal{D} = -{\rm sgn} \mathcal{P}'(c_0)$, the two spectral bands are non-degenerate and persist in all four complex quadrants of the complex $\lambda$-plane.
\end{proof}

\begin{remark}
	Combining (\ref{lambda-quad}) and (\ref{normal-form-one}) yields 
    \begin{align}
    \label{tech-eq}
\mathcal{P}'(c) \left( \lambda - \frac{i \mu \mathcal{N}'(c)}{2 \mathcal{P}'(c)} \right)^2 + \frac{\mu^2 \Delta(c)}{4 \mathcal{P}'(c)} 
+ \mathcal{D} |\mu|^3 = \mathcal{O}(\mu^4),
\end{align}		
	where $\mathcal{D}$ is only defined at $c = c_0$ and $c_0$ is a root of 
	$\Delta(c)$ in Theorem \ref{theorem-bif-1-normal-form}. 
	Assume that $c_0$ is a simple root of $\Delta(c)$ with $\Delta'(c_0) \neq 0$, see Figure \ref{fig:delta}. Expanding (\ref{tech-eq}) in $c$ near $c_0$ by using the smoothness of the mapping $c \mapsto \eta \in C^{\infty}_{\rm per}$, we obtain the extended normal form: 
    \begin{align}
		\label{normal-form-one-extended}
\mathcal{P}'(c_0) \left( \lambda - \frac{i \mu \mathcal{N}'(c)}{2 \mathcal{P}'(c)} \right)^2 + \frac{\mu^2 \Delta'(c_0) (c - c_0)}{4 \mathcal{P}'(c_0)} + \mathcal{D} |\mu|^3 = \mathcal{O}(\mu^4),	
	\end{align}	
where the leading-order balance implies that $c - c_0 = \mathcal{O}(\mu)$.
\end{remark}

\subsection{Numerical results}
\label{sec-3-3}
    
The criterion for the bifurcation is a simple zero of $\Delta(c)$ defined by~\eqref{Delta}. The graph of $\Delta(c)$ versus wave steepness $s$ is shown in Figure \ref{fig:delta} by green solid line. Newton's method is further used to
improve the accuracy of finding $c_0$ at which $\Delta(c)$ vanishes. The corresponding values of $s_0$ and $c_0$ for the first two zeros of $\Delta$ are recorded in Table \ref{tab:hourglass}, where we also show the values of $\mathcal{P}'(c_0)$, $\mathcal{N}'(c_0)$, $\mathcal{D}$, and $\Delta'(c_0)$  in the normal form (\ref{normal-form-one}).
We note that ${\rm sgn} \mathcal{D} = -{\rm sgn}\mathcal{P}'(c_0)$ so that 
the two spectral bands with the vertical slope cross $\lambda = 0$ along the straight lines in all four quadrants of the complex $\lambda$-plane, see 
Theorem \ref{theorem-bif-1-normal-form}. The sign alternation of $\Delta'(c_0)$ is due to the fact that the mapping $s \mapsto c(s)$ is increasing at the first hourglass bifurcation and decreasing at the second hourglass bifurcation.


\begin{table}[htb!]
	\centering
	\renewcommand{\arraystretch}{1.3}
	\begin{tabular}{c|c|c|c|c|c}
		$s_0$ & $c_0$ & $\mathcal{P}'(c_0)$ & $\mathcal{N}'(c_0)$ & $\mathcal D$ & $\Delta'(c_0)$ \\ \hline
		$0.10921308$  & $1.06052068$  &  $4.618$ & $3.2142$ & $-1.804\times10^{-2}$ & $42.51$ \\      
		$0.14010224$  & $1.09256435$  &  $4.653$ & $2.8766$ & $-2.562\times10^{-2}$ & $-8.818\times10^{3}$ \\ 
	\end{tabular}
	\caption{Parameters of the first two hourglass bifurcations.}
	\label{tab:hourglass}
\end{table}

Figures \ref{fig:bcd1} and \ref{fig:bcd2} show 
the instability bands of the spectral problem (\ref{mod-Bab-eq}) near the origin in the spectral $\lambda$-plane obtained numerically.  
The solid blue, green and red curves in each figure correspond to three consequent values of steepness before, at, and after the critical steepness, for which $\Delta(c_0) = 0$.  The dashed lines display the spectral bands obtained from the normal form~\eqref{normal-form-one-extended}, where the value of $\mathcal{D}$ was computed by fitting the real part of $\lambda$ to the numerical data at $\Delta(c_0) = 0$ from equation (\ref{normal-form-one}), instead of using the projection formulas in (\ref{num-coeff-D}). A good agreement between the spectral bands given by the normal form (\ref{normal-form-one-extended}) and the numerical approximations of the stability spectra is evident for the first two cycles of bifurcations. 

\begin{figure}[htb!]
	\centering
	\includegraphics[width=0.6\linewidth]{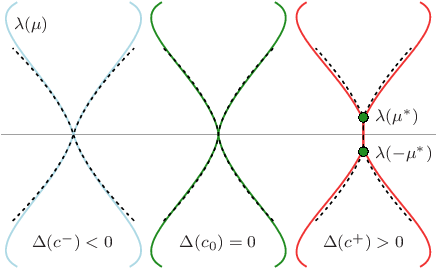}
	\caption{The first instance of the hourglass bifurcation,
		which corresponds to panels (b)--(d) of Figure \ref{fig-bif} zoomed near the origin in the $\lambda$-plane. The figure-$8$ for $c^- = 1.05970 < c_0$ (left panel) gets vertical slopes at the origin when $\Delta(c_0) = 0$ for $c_0 \approx 1.06052068$  (middle panel), and splits into a pair of isolas along $i \mathbb{R}$ for $c^+ = 1.06135 > c_0$ (right panel). The dashed lines show the approximations obtained from the normal form~\eqref{normal-form-one-extended}.}
	\label{fig:bcd1}
\end{figure}

\begin{figure}[htb!]
	\centering
	\includegraphics[width=0.6\linewidth]{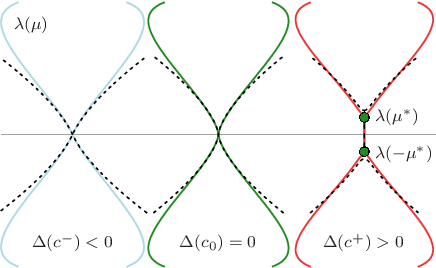}
	\caption{The same as Figure \ref{fig:bcd1} but for the second instance 
		of the hourglass bifurcation, which happens when $\Delta(c_0) = 0$ for $c_0 \approx 1.09256435$.}
	\label{fig:bcd2}
\end{figure}

\section{New circular bands around the origin}
\label{sec-4}

The corresponding  bifurcation is seen in panels (e)--(g) of Figure \ref{fig-bif}.
Due to its shape, it can be called the ellipse bifurcation.
Before the bifurcation [panel (e)], there are no circular bands around the origin. At the bifurcation [panel (f)], a new circular band appears from the origin and surrounds it.  After the bifurcation [panel (g)], the new circular band grows in size.

Section \ref{sec-4-1} develops the criterion for this bifurcation. Section \ref{sec-4-2} gives the derivation of the normal form, 
which explains how the bifurcation is unfolded. Section \ref{sec-4-3} describes 
numerical approximations of the coefficients of the normal form, 
from which we are able to reproduce the bifurcation pattern seen in panels (e)--(g) of Figure \ref{fig-bif}.

\subsection{Criterion of the bifurcation}
\label{sec-4-1}

This bifurcation arises due to the zero eigenvalue of the linearized operator $\mathcal{L}$ extended to the space of double-periodic (antiperiodic) perturbations and denoted as $\mathcal{L}_{\rm aper}$. This linearized operator has no zero eigenvalues generally. However, zero eigenvalues appear in the continuation of the traveling wave with respect to the steepness parameter $s$, as was shown in \cite{BDT2000b}. Hence, we add the following non-generic assumption for the bifurcation. 

\begin{assumption}
	\label{ass-second-bif}
For the given value of $c \in (1,c_*)$, there exists an eigenfunction $\psi_0 \in H^1_{\rm aper}$ (defined up to the scalar multiplication) such that $\mathcal{L}_{\rm aper} \psi_0 = 0$, where 
$$
\mathcal{L}_{\rm aper} : H^1_{\rm aper} \subset L^2_{\rm aper} \to L^2_{\rm aper}
$$ 
is the extension of $\mathcal{L}$ given by (\ref{linBop}) to the $2\pi$-antiperiodic ($4\pi$-periodic) perturbations.  
\end{assumption}

To proceed with analysis, we first reformulate the spectral problem (\ref{mod-Bab-eq}) for $(\mu,\lambda)$ near $\left(\frac{1}{2},0\right)$. 
Since $\mu = \frac{1}{2}$ corresponds to the antiperiodic functions 
represented by the Fourier series 
\begin{equation}
\label{Fourier-series-anti}
f(u) = \sum_{n \in \mathbb{Z}} \hat{f}_n e^{i \left( n + \frac{1}{2} \right) u} : \quad f(u+2 \pi) = -f(u),
\end{equation}
it is in fact easier to substitute $\mu = \frac{1}{2} - \tilde{\mu}$ with 
$\tilde{\mu}$ defined near $0$. It follows from (\ref{Fourier-series-anti}) that 
$\mathcal{H}_{\mu} = \mathcal{H}_{\rm aper}$ if $|\tilde{\mu}| < \frac{1}{2}$, where $\mathcal{H}_{\rm aper}$ is the Hilbert transform acting on anti-periodic functions. Similarly, instead of (\ref{K-mu}), (\ref{M-mu}), 
(\ref{M-star-mu}), and (\ref{L-mu}), we have 
\begin{align*}
\mathcal{K}_{\mu} &= \mathcal{K}_{\rm aper} + i \tilde{\mu} \mathcal{H}_{\rm aper}, \\ 
\mathcal{M}_{\mu} &= \mathcal{M}_{\rm aper}, \quad 
\mathcal{M}_{\mu}^* = \mathcal{M}^*_{\rm aper}, \\
\mathcal{L}_{\mu} &= \mathcal{L}_{\rm aper} + i \tilde{\mu} \mathcal{S}_{\rm aper},
\end{align*}
where the definitions of $\mathcal{K}_{\rm aper}$, $\mathcal{M}_{\rm aper}$, $\mathcal{M}^*_{\rm aper}$, and $\mathcal{L}_{\rm aper}$ acting  on antiperiodic functions follow from (\ref{operator-K}), (\ref{operator-M}), and (\ref{linBop}), and 
\begin{equation*}
\label{operator-S}
\mathcal{S}_{\rm aper} = c^2 \mathcal{H}_{\rm aper} - \eta \mathcal{H}_{\rm aper} - \mathcal{H}_{\rm aper} (\eta \cdot).
\end{equation*}
By using  $\mu = \frac{1}{2} - \tilde{\mu}$ and the linear operators 
introduced above, we can rewrite the stability problem (\ref{mod-Bab-eq}) in the equivalent form
\begin{equation}
\left\{ \begin{array}{ll}
(\mathcal{K}_{\rm aper} + i \tilde{\mu} \mathcal{H}_{\rm aper}) \hat{w} &= \lambda \mathcal{M}_{\rm aper} \hat{v}, \\
(\mathcal{L}_{\rm aper} + i \tilde{\mu} \mathcal{S}_{\rm aper}) \hat{v} &= \lambda (\mathcal{M}^*_{\rm aper} \hat{w} 
- 2 c \mathcal{H}_{\rm aper} \hat{v}),
\end{array}
\right. 
\label{mod-Bab-eq-anti}
\end{equation}
where $(\hat{v},\hat{w}) \in H^1_{\rm aper} \times H^1_{\rm aper}$ and $|\tilde{\mu}| < \frac{1}{2}$.

The following theorem gives the criterion for the corresponding bifurcation. Compared to the case of the co-periodic perturbations in Sections \ref{sec-3}, \ref{sec-5}, and \ref{sec-6}, 
the numerical coefficients of the normal form are not related to the conserved quantities computed at the traveling wave profile $\eta \in C^{\infty}_{\rm per}$. We also do not use Assumptions \ref{ass-kernel} and \ref{ass-first-bif} since they 
determine the dimensions of the kernel and the generalized kernel of the spectral problem (\ref{spec-Bab-eq}) for co-periodic perturbations in $H^1_{\rm per} \times H^1_{\rm per}$. 

\begin{theorem}
	\label{theorem-bif-2}
	Let Assumption \ref{ass-second-bif} be true and define numerical coefficients 
	$\mathcal{A}_1$, $\mathcal{A}_2$, and $\mathcal{A}_3$ by (\ref{coefficients-A}) computed from the wave profile $\eta \in C^{\infty}_{\rm per}$ and the eigenfunction profile $\psi_0 \in H^1_{\rm aper}$. If $\mathcal{A}_1 \neq 0$ and $\mathcal{A}_1 \mathcal{A}_3 + c^2 \mathcal{A}_2^2 > 0$, then there exist two spectral bands of the modulational stability problem (\ref{mod-Bab-eq-anti})  intersecting $\lambda = 0$, which are subsets of $i \R$ and non-degenerate.
\end{theorem}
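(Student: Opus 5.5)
At $\tilde\mu = 0$ the zero eigenvalue of (\ref{mod-Bab-eq-anti}) comes from a single double Jordan block, as in Section \ref{sec-case-2}. The plan is to run direct Puiseux expansions with the scaling $\tilde\mu = \epsilon\tilde\mu_1$, $\lambda = \epsilon\lambda_1 + \mathcal{O}(\epsilon^2)$.

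\emph{Kernel and generalized kernel.} On antiperiodic functions the lowest Fourier modes are $e^{\pm iu/2}$, so $\mathcal{K}_{\rm aper}$ has trivial kernel. Hence the kernel of (\ref{mod-Bab-eq-anti}) at $(\tilde\mu,\lambda) = (0,0)$ is spanned by $(\psi_0,0)^T$, by Assumption \ref{ass-second-bif}. For the generalized eigenvector, I solve
\[
\mathcal{K}_{\rm aper}\hat w_g = \mathcal{M}_{\rm aper}\psi_0, \qquad \mathcal{L}_{\rm aper}\hat v_g = -2c\,\mathcal{H}_{\rm aper}\psi_0 .
\]
The first equation is always solvable because $\mathcal{K}_{\rm aper}$ is invertible. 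The second requires $\langle\psi_0,\mathcal{H}_{\rm aper}\psi_0\rangle = 0$. I will check this via the parity argument of Lemma \ref{lem-spectrum}. Since $\eta$ is even, $\mathcal{L}_{\rm aper}$ commutes with $\Pi$, so the simple eigenfunction $\psi_0$ can be taken even or odd and real. Then $\mathcal{H}_{\rm aper}\psi_0$ has the opposite parity, and the inner product vanishes. So the zero eigenvalue has (at least) algebraic multiplicity two. At this point the projection of the $\mathcal{O}(\tilde\mu)$ perturbation, $\langle\psi_0,\mathcal{S}_{\rm aper}\psi_0\rangle$, also vanishes: $\mathcal{S}_{\rm aper}$ is built from $\mathcal{H}_{\rm aper}$ and even multipliers, so it flips parity. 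This is the situation $G_1 = 0$ of Section \ref{sec-case-2}, and it is why $\lambda = \mathcal{O}(\tilde\mu)$ rather than $\mathcal{O}(\tilde\mu^{1/2})$.

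\emph{Expansion.} I insert $(\hat v,\hat w) = (\psi_0,0) + \epsilon(\hat v_1,\hat w_1) + \epsilon^2(\hat v_2,\hat w_2) + \cdots$. At order $\epsilon$ the solution is
\[
(\hat v_1,\hat w_1) = \lambda_1(\hat v_g,\hat w_g) + i\tilde\mu_1\,(\chi,\, 0),
\]
where $\chi$ solves $\mathcal{L}_{\rm aper}\chi = -\mathcal{S}_{\rm aper}\psi_0$; this is solvable by the parity argument above. The $\hat w$-part carries no $\tilde\mu_1$ term because $\hat w_0 = 0$. At order $\epsilon^2$, the only constraint is Fredholm solvability of the second equation against $\psi_0$; the first equation is always solvable. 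This gives a quadratic
\[
\lambda_1^2\mathcal{A}_1 + 2ic\,\tilde\mu_1\lambda_1\mathcal{A}_2 + \tilde\mu_1^2\mathcal{A}_3 = 0 .
\]
I will identify the coefficients as follows. $\mathcal{A}_1 = \langle\psi_0,\mathcal{M}^*_{\rm aper}\hat w_g - 2c\mathcal{H}_{\rm aper}\hat v_g\rangle$ is the Jordan-chain coefficient. The cross coefficient collects $\langle\psi_0,\mathcal{S}_{\rm aper}\hat v_g\rangle$ and $\langle\psi_0, -2c\mathcal{H}_{\rm aper}\chi\rangle$ (together with $\mathcal{H}_{\rm aper}$ applied to $\hat w_g$ through the first equation). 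The last is $\mathcal{A}_3 = -\langle\psi_0,\mathcal{S}_{\rm aper}\chi\rangle$. I will match these to (\ref{coefficients-A}) and, using self-adjointness and the parity structure, show they are real.

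\emph{Roots and persistence.} The roots are
\[
\lambda_1 = \frac{i\tilde\mu_1}{\mathcal{A}_1}\left(-c\mathcal{A}_2 \pm \sqrt{c^2\mathcal{A}_2^2 + \mathcal{A}_1\mathcal{A}_3}\right).
\]
They are purely imaginary and distinct for $\tilde\mu_1 \neq 0$ exactly when $\mathcal{A}_1 \neq 0$ and $\mathcal{A}_1\mathcal{A}_3 + c^2\mathcal{A}_2^2 > 0$. Justification goes through the Newton polytope of the $2\times 2$ reduced characteristic polynomial (Section \ref{sec-case-2}) and the Weierstrass preparation argument of Section \ref{sec-1-4}, using analyticity of (\ref{mod-Bab-eq-anti}) in $(\tilde\mu,\lambda)$. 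Persistence on $i\R$ at higher orders then follows from simplicity of the roots together with the Hamiltonian symmetry $\lambda\mapsto -\bar\lambda$ of Lemma \ref{lem-mod-spectrum}. That symmetry maps each eigenvalue at fixed $\mu$ to another eigenvalue at the same $\mu$, so a simple eigenvalue near a simple imaginary leading root must stay on $i\R$.

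\emph{Main obstacle.} The delicate step is the parity bookkeeping: showing that $\langle\psi_0,\mathcal{H}_{\rm aper}\psi_0\rangle = 0$ and $\langle\psi_0,\mathcal{S}_{\rm aper}\psi_0\rangle = 0$, and that the resulting coefficients are real so the discriminant condition has the stated meaning. The key facts are that $\psi_0$ has definite parity and that $\mathcal{H}_{\rm aper}$ and $\mathcal{S}_{\rm aper}$ are real and anticommute with $\Pi$. I will check these carefully on the antiperiodic Fourier basis $e^{i(n+\frac12)u}$.
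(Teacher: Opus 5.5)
Your proposal is correct and follows the paper's proof essentially step by step: the same scaling $\tilde\mu=\epsilon\tilde\mu_1$, $\lambda=\epsilon\lambda_1+\mathcal{O}(\epsilon^2)$, a single Fredholm condition against $\psi_0$ at order $\epsilon^2$ yielding (\ref{quad-eq-anti}), and persistence on $i\R$ from $\lambda\mapsto-\bar\lambda$. Your parity argument for $\langle\psi_0,\mathcal{H}_{\rm aper}\psi_0\rangle=\langle\psi_0,\mathcal{S}_{\rm aper}\psi_0\rangle=0$ is a slightly more careful version of the paper's appeal to skew-adjointness, which by itself only makes these values purely imaginary unless $\psi_0$ is taken real. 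There are two bookkeeping slips to fix when you write it out: $\hat w_2$, and hence $\mathcal{H}_{\rm aper}\hat w_g$, never enters the order-$\epsilon^2$ solvability condition; and with $\mathcal{L}_{\rm aper}\chi=-\mathcal{S}_{\rm aper}\psi_0$ the $\tilde\mu_1^2$ coefficient is $+\langle\psi_0,\mathcal{S}_{\rm aper}\chi\rangle=\mathcal{A}_3$, not $-\langle\psi_0,\mathcal{S}_{\rm aper}\chi\rangle$.
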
 

\begin{proof}
To expand solutions of the spectral problem (\ref{mod-Bab-eq-anti}) near $(\tilde{\mu},\lambda) = (0,0)$, we introduce the asymptotic expansion 
\begin{equation}
\label{expansion-2-eps}
\tilde{\mu} = \epsilon \tilde{\mu}_1, \quad \lambda = \epsilon \lambda_1 + \mathcal{O}(\epsilon^2)
\end{equation}
and 
\begin{equation}
\label{expansion-2}
\left( \begin{array}{c} \hat{v} \\ \hat{w} \end{array} \right) = \left( \begin{array}{c} \psi_0 \\ 0 \end{array} \right) + \epsilon \left( \begin{array}{c} \hat{v}_1 \\ \hat{w}_1 \end{array} \right) + \epsilon^2 \left( \begin{array}{c} \hat{v}_2 \\ \hat{w}_2 \end{array} \right) + \mathcal{O}(\epsilon^3),
\end{equation}
where $\epsilon$ is a small parameter, $\tilde{\mu}_1$ is fixed, and the correction terms $(\hat{v}_1,\hat{w}_1)$ and $(\hat{v}_2,\hat{w}_2)$ at the $\mathcal{O}(\epsilon)$ and $\mathcal{O}(\epsilon^2)$ orders are to be computed in function space $H^1_{\rm aper} \times H^1_{\rm aper}$. The expansions (\ref{expansion-2-eps}) and (\ref{expansion-2}) are justified due to the orthogonality of $\mathcal{S}_{\rm aper} \psi_0$ to $\psi_0$ in $L^2_{\rm aper}$ since $\mathcal{S}_{\rm aper}$ is skew-adjoint in $L^2_{\rm aper}$. This ensures that the term of the order of $\mathcal{O}(\mu^{1/2})$ in the expansion of $\lambda$ in $\mu$ is identically zero and the next-order term is $\mathcal{O}(\mu)$, see Section  \ref{sec-case-2}.

\underline{\bf Order $\mathcal{O}(\epsilon)$:} We get the linear inhomogeneous system 
\begin{equation}
\left\{ \begin{array}{ll}
\mathcal{K}_{\rm aper} \hat{w}_1 &=  \lambda_1 \mathcal{M}_{\rm aper} \psi_0, \\
\mathcal{L}_{\rm aper} \hat{v}_1 &= -  2 c \lambda_1 \mathcal{H}_{\rm aper} \psi_0  + i \tilde{\mu}_1 \mathcal{S}_{\rm aper} \psi_0.
\end{array}
\right. 
\label{second-level-one}
\end{equation}
Since ${\rm Ker}(\mathcal{L}_{\rm aper}) = {\rm span}(\psi_0)$ in $H^1_{\rm aper}$ by Assumption \ref{ass-second-bif}, whereas 
$\mathcal{H}_{\rm aper}$ and $\mathcal{S}_{\rm aper}$ are skew-adjoint in $L^2_{\rm aper}$, the Fredholm condition is satisfied for the linear inhomogeneous system (\ref{second-level-one}) and we obtain the solution in the form 
\begin{align}
\left( \begin{array}{c} \hat{v}_1 \\ \hat{w}_1 \end{array} \right) =  \lambda_1 
\left( \begin{array}{c} -2c \mathcal{L}_{\rm aper}^{-1} \mathcal{H}_{\rm aper} \psi_0 \\ \mathcal{K}_{\rm aper}^{-1} \mathcal{M}_{\rm aper} \psi_0 \end{array} \right) 
+ i \tilde{\mu}_1 \left( \begin{array}{c} \mathcal{L}_{\rm aper}^{-1} \mathcal{S}_{\rm aper} \psi_0 \\ 0 \end{array} \right).
\label{second-level-one-solution}
\end{align}

\underline{\bf Order $\mathcal{O}(\epsilon^2)$:} We get the linear inhomogeneous system 
\begin{equation}
\left\{ \begin{array}{ll}
\mathcal{K}_{\rm aper} \hat{w}_2 &=  \lambda_1 \mathcal{M}_{\rm aper} \hat{v}_1  + i \tilde{\mu}_1  \mathcal{H}_{\rm aper} \hat{w}_1, \\
\mathcal{L}_{\rm aper} \hat{v}_2 &= \lambda_1 \left( \mathcal{M}^*_{\rm aper} \hat{w}_1 
-  2 c  \mathcal{H}_{\rm aper} \hat{v}_1 \right) + i \tilde{\mu}_1 \mathcal{S}_{\rm aper} \hat{v}_1.
\end{array}
\right. 
\label{second-level-two}
\end{equation}
By Fredholm's theorem, there exists the anti-periodic solution of the linear inhomogeneous system (\ref{second-level-two}) if and only if the following constraint is satisfied:
\begin{align}
\label{lin-alg}
\lambda_1 \langle \psi_0, \mathcal{M}^*_{\rm aper} \hat{w}_1 
-  2 c  \mathcal{H}_{\rm aper} \hat{v}_1 \rangle + i \tilde{\mu}_1 \langle \psi_0, \mathcal{S}_{\rm aper} \hat{v}_1 \rangle = 0.
\end{align} 
With the account of (\ref{second-level-one-solution}), the constraint (\ref{lin-alg}) yields the following quadratic characteristic equation 
\begin{equation}
\label{quad-eq-anti}
\lambda_1^2 \mathcal{A}_{1} + 2 i c \tilde{\mu}_1 \lambda_1 \mathcal{A}_{2} + \tilde{\mu}_1^2 \mathcal{A}_{3} = 0,
\end{equation} 
with the following coefficients:
\begin{align}
\left\{ \begin{array}{l}
\mathcal{A}_1 = \langle \mathcal{M}_{\rm aper} \psi_0, \mathcal{K}_{\rm aper}^{-1} \mathcal{M}_{\rm aper} \psi_0  \rangle - 4 c^2 \langle \mathcal{H}_{\rm aper} \psi_0, \mathcal{L}_{\rm aper}^{-1} \mathcal{H}_{\rm aper} \psi_0 \rangle, \\
\mathcal{A}_2 = \langle \mathcal{H}_{\rm aper} \psi_0,  \mathcal{L}_{\rm aper}^{-1} \mathcal{S}_{\rm aper} \psi_0 \rangle + \langle \mathcal{S}_{\rm aper} \psi_0,  \mathcal{L}_{\rm aper}^{-1} \mathcal{H}_{\rm aper} \psi_0 \rangle, \\
\mathcal{A}_3 = \langle  \mathcal{S}_{\rm aper}  \psi_0, \mathcal{L}_{\rm aper}^{-1} \mathcal{S}_{\rm aper} \psi_0 \rangle, \end{array} \right.
\label{coefficients-A}
\end{align}
where we have used the skew-adjointness of $\mathcal{H}_{\rm aper}$ and $\mathcal{S}_{\rm aper}$ in $L^2_{\rm per}$. 

If $\mathcal{A}_1 \neq 0$, two solutions of the quadratic equation (\ref{quad-eq-anti}) are given by 
\begin{equation}
\label{two-solutions-anti}
\lambda_1^{\pm}(\tilde{\mu}_1) = - \frac{ic \tilde{\mu}_1 \mathcal{A}_2}{\mathcal{A}_1} \pm \frac{i \tilde{\mu}_1}{\mathcal{A}_1} \sqrt{\mathcal{A}_1 \mathcal{A}_3 + c^2 \mathcal{A}_2^2},
\end{equation}
with the discriminant
\begin{align*}
\mathcal{A}_1 \mathcal{A}_3 + c^2 \mathcal{A}_2^2 &= \langle  \mathcal{M}_{\rm aper} \psi_0, \mathcal{K}_{\rm aper}^{-1} \mathcal{M}_{\rm aper} \psi_0  \rangle  \langle  \mathcal{S}_{\rm aper}  \psi_0, \mathcal{L}_{\rm aper}^{-1} \mathcal{S}_{\rm aper} \psi_0 \rangle \\
& \quad + 4 c^2 \left[ |\langle \mathcal{H}_{\rm aper} \psi_0,  \mathcal{L}_{\rm aper}^{-1} \mathcal{S}_{\rm aper} \psi_0 \rangle|^2  \right. \\
& \left. \qquad - \langle \mathcal{H}_{\rm aper} \psi_0, \mathcal{L}_{\rm aper}^{-1} \mathcal{H}_{\rm aper} \psi_0 \rangle \langle  \mathcal{S}_{\rm aper}  \psi_0, \mathcal{L}_{\rm aper}^{-1} \mathcal{S}_{\rm aper} \psi_0 \rangle
\right].
\end{align*}
If $\mathcal{A}_1 \neq 0$ and $\mathcal{A}_1 \mathcal{A}_3 + c^2 \mathcal{A}_2^2 > 0$, then $\lambda_1^{\pm}(\tilde{\mu}_1) \in i \R$ for $\tilde{\mu}_1 \in \R$ and $\lambda_1^+(\tilde{\mu}_1) \neq \lambda_1^-(\tilde{\mu}_1)$ for $\tilde{\mu}_1 \neq 0$. 
Persistence of the two spectral bands crossing $\lambda = 0$ on $i \mathbb{R}$ in the asymptotic expansions (\ref{expansion-2-eps}) and (\ref{expansion-2}) follows from the Hamiltonian symmetry of Lemma \ref{lem-mod-spectrum}. 
\end{proof}

\subsection{Normal form of the bifurcation}
\label{sec-4-2}

The bifurcation induced by Assumption \ref{ass-second-bif} is the period-doubling (subharmonic) bifurcation \cite{BDT2000b}. In addition, it is also related to the 
stability bifurcation for the spectral problem (\ref{mod-Bab-eq-anti}) for $\tilde{\mu} = 0$ in the space of antiperiodic perturbations in $H^1_{\rm aper} \times H^1_{\rm aper}$. The bifurcation is induced by the small eigenvalue of the linearized operator $\mathcal{L}_{\rm aper}$ which crosses $0$ as $c$ changes through the bifurcation point. 

The following lemma describes the transformation of the small eigenvalue 
of $\mathcal{L}_{\rm aper}$. 

\begin{lemma}
	\label{lem-period-doubling}
Let $c_0 \in (1,c_*)$ be the bifurcation point, under which Assumption \ref{ass-second-bif} holds with the $L^2$-normalized eigenfunction $\psi_0 \in H^1_{\rm aper}$. There exists an eigenvalue $\Lambda(c)$ of $\mathcal{L}_{\rm aper}$ with the $L^2$-normalized eigenfunction $\psi(\cdot,c)$, which are 
$C^{\infty}$ functions of $c$ at $c_0$ satisfying $\Lambda(c_0) = 0$ and $\psi(\cdot,c_0) = \psi_0$.
\end{lemma}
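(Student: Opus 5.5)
This is analytic perturbation theory for an isolated simple eigenvalue of a self-adjoint family; I would prove it in four steps.

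\emph{Step 1: smooth dependence of the operator on $c$.} At $c_0$, Assumption \ref{ass-second-bif} combined with Assumption \ref{ass-kernel} for co-periodic perturbations means $\mathcal{L} : H^1_{\rm per} \to L^2_{\rm per}$ has kernel ${\rm span}(\eta')$. By Lemma \ref{lem-diff}, the map $c \mapsto \eta \in C^{\infty}_{\rm per}$ is $C^1$ near $c_0$. The implicit function argument there works in any $H^s_{\rm per,e}$, and Babenko's equation (\ref{trav-Bab-eq}) is analytic (polynomial) in $(\eta,c)$. Therefore the analytic implicit function theorem upgrades this to a $C^\infty$ (indeed real-analytic) map $c \mapsto \eta(\cdot,c)$, with all derivatives in $C^\infty_{\rm per}$ by bootstrapping.

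\emph{Step 2: regularity of $c \mapsto \mathcal{L}_{\rm aper}(c)$.} From (\ref{linBop}),
$$
\mathcal{L}_{\rm aper}(c) = c^2 \mathcal{K}_{\rm aper} - (1+\mathcal{K}\eta) - \eta \mathcal{K}_{\rm aper} - \mathcal{K}_{\rm aper}(\eta\,\cdot).
$$
Consequently $\mathcal{L}_{\rm aper}(c) - \mathcal{L}_{\rm aper}(c_0)$ is a relatively bounded perturbation with respect to $\mathcal{K}_{\rm aper}$. Its coefficients $c^2$ and $\eta(\cdot,c)$ are smooth in $c$, so $c \mapsto \mathcal{L}_{\rm aper}(c)$ is a $C^\infty$ (analytic) family of type (A) of self-adjoint operators on the fixed domain $H^1_{\rm aper}$. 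The domain is fixed because the principal part $c^2\mathcal{K}_{\rm aper} - \eta\mathcal{K}_{\rm aper} - \mathcal{K}_{\rm aper}(\eta\,\cdot)$ stays elliptic: $c^2 - 2\eta > 0$ for the smooth waves (as in \cite{BDT2000a,BDT2000b}).

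\emph{Step 3: isolate the eigenvalue and apply the Riesz projection.} The resolvent is compact, so the spectrum of $\mathcal{L}_{\rm aper}(c_0)$ is discrete. Zero is an isolated eigenvalue, and Assumption \ref{ass-second-bif} states that it is simple with eigenfunction $\psi_0$. Take a small circle $\Gamma$ around $0$ that encloses no other eigenvalue and define
$$
P(c) = -\frac{1}{2\pi i}\oint_\Gamma (\mathcal{L}_{\rm aper}(c)-z)^{-1}\,dz.
$$
For $c$ near $c_0$ this is a $C^\infty$ family of rank-one orthogonal projections, by continuity of the resolvent in $c$, and self-adjointness makes it orthogonal.

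\emph{Step 4: the eigenvalue and the normalized eigenfunction.} Set
$$
\psi(\cdot,c) = \frac{P(c)\psi_0}{\|P(c)\psi_0\|_{L^2}}, \qquad \Lambda(c) = \langle \psi(\cdot,c), \mathcal{L}_{\rm aper}(c)\psi(\cdot,c)\rangle .
$$
Since $P(c_0)\psi_0 = \psi_0 \neq 0$, the norm $\|P(c)\psi_0\|_{L^2}$ is nonzero near $c_0$, so both objects are smooth, real, and satisfy $\Lambda(c_0)=0$ and $\psi(\cdot,c_0)=\psi_0$. Rellich–Kato theory then gives the eigenvalue problem $\mathcal{L}_{\rm aper}(c)\psi = \Lambda(c)\psi$.

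The main obstacle is Step 1: smoothness in $c$ of the wave profile, not only its $C^1$ dependence from Lemma \ref{lem-diff}. I would handle it by noting that the Fréchet derivative of Babenko's map at $(\eta,c_0)$ is $\mathcal{L}$ restricted to even functions. That restriction is invertible under Assumption \ref{ass-kernel}, so the implicit function theorem in the analytic category gives analyticity. Should $c_0$ coincide with a fold, I would instead parametrize by steepness $s$ and obtain smoothness in $s$.
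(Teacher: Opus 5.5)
Your route is the paper's route with more detail. You get smooth dependence of $\eta$ on $c$, hence a smooth self-adjoint family $\mathcal{L}_{\rm aper}(c)$ on the fixed domain $H^1_{\rm aper}$. You then apply perturbation theory to the simple isolated eigenvalue $0$. Your Riesz-projection construction in Steps 2--4 is a correct, explicit version of the paper's appeal to analytic perturbation theory, and your analytic implicit function theorem is an acceptable substitute for the paper's bootstrapping.

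The gap is at the start of Step 1. You use ${\rm Ker}(\mathcal{L}) = {\rm span}(\eta')$ in $H^1_{\rm per}$ at $c_0$, i.e.\ Assumption~\ref{ass-kernel}, but the lemma only assumes Assumption~\ref{ass-second-bif}. That assumption concerns the antiperiodic operator $\mathcal{L}_{\rm aper}$ and says nothing about the co-periodic kernel. Without Assumption~\ref{ass-kernel}, Lemma~\ref{lem-diff} is unavailable and the chain (smooth profile, smooth coefficients, smooth eigenpair) does not start. The paper supplies exactly this input by citing \cite{BDT2000b}: at a subharmonic (period-doubling) bifurcation point the co-periodic kernel is ${\rm span}(\eta')$, so $c_0$ is not a fold in the sense of Remark~\ref{rem-fold}.

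Your fallback does not repair this. If $c_0$ were a fold, Lemma~\ref{lem-fold} gives two solutions near $\eta_0$ for $c$ on one side of $c_0$ only, with $\eta - \eta_0$ of order $\sqrt{|c-c_0|}$, and no solutions on the other side. Parametrizing by the steepness $s$ would make $\Lambda$ smooth in $s$. As a function of $c$, however, $\Lambda$ would be two-valued and generally of order $\pm\sqrt{|c-c_0|}$, hence not differentiable at $c_0$. This contradicts the statement and would invalidate the expansion \eqref{expansion-2-c} and the term $\Lambda'(c_0)(c-c_0)$ used afterwards.

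To close the gap, either invoke the result of \cite{BDT2000b} that a subharmonic bifurcation point is not a fold, or state Assumption~\ref{ass-kernel} at $c_0$ as an explicit hypothesis. With that in place, the rest of your argument goes through.
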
 

\begin{proof}
	By the theory in \cite{BDT2000b}, the wave profile $\eta \in C^{\infty}_{\rm per}$  satisfies Assumption \ref{ass-kernel} if $c_0 \in (1,c_*)$ is the subharmonic bifurcation point, for which Assumption \ref{ass-second-bif} holds. Hence, the wave profile is continuously differentiable in $c$ at $c_0$ by Lemma \ref{lem-diff}. By the bootstrapping argument, it is actually $C^{\infty}$, so that the linearized operator $\mathcal{L}_{\rm aper}$ has smooth coefficients in $c$ at $c_0$ and we can write 
	\begin{equation}
\label{expansion-2-c}
	\mathcal{L}_{\rm aper} = 	\mathcal{L}_{\rm aper}^{(0)} + (c-c_0) 	\mathcal{L}_{\rm aper}^{(1)} + \mathcal{O}((c-c_0)^2), 
	\end{equation}
	where 
	\begin{align*}
	\mathcal{L}_{\rm aper}^{(0)} &= c_0^2 \mathcal{K}_{\rm aper} - (1 + \mathcal{K} \eta) - \eta \mathcal{K}_{\rm aper} - \mathcal{K}_{\rm aper} (\eta \; \cdot ), \\
	\mathcal{L}_{\rm aper}^{(1)} &= 2 c_0 \mathcal{K}_{\rm aper} -  \left( \mathcal{K} \partial_c \eta |_{c = c_0} \right) - \partial_c \eta |_{c = c_0} \mathcal{K}_{\rm aper} - \mathcal{K}_{\rm aper} \left( \partial_c \eta |_{c = c_0} \; \cdot \right).
	\end{align*}
As follows from the analytic perturbation theory \cite{Kir92}, the eigenvalue--eigenfunction pair $(\Lambda,\psi)$ of $\mathcal{L}_{\rm aper}$ is also smooth near $(0,\psi_0)$, since the kernel of $\mathcal{L}_{\rm aper}$ is simple by Assumption \ref{ass-second-bif}. This gives the assertion of the lemma.
\end{proof}

\begin{remark}
By using the decomposition 
\begin{equation*}
	\Lambda(c) = \Lambda'(c_0) (c-c_0) + \mathcal{O}((c-c_0)^2), \quad 
	\psi(\cdot,c) = \psi_0 + \psi_1 (c - c_0)  + \mathcal{O}((c-c_0)^2)
\end{equation*}
in the solutions of $\mathcal{L}_{\rm aper} \psi(\cdot,c) = \Lambda(c) \psi(\cdot,c)$, we obtain the linear inhomogeneous equation for $\psi_1 \in H^1_{\rm aper}$:
	$$
	\mathcal{L}_{\rm aper}^{(0)} \psi_1 + 	\mathcal{L}_{\rm aper}^{(1)} \psi_0 = \Lambda'(c_0) \psi_0. 
	$$
Projecting this equation to $\psi_0$ yields the Fredholm condition for the existence of $\psi_1 \in H^1_{\rm aper}$:
\begin{equation}
\label{Lambda-der}
	\Lambda'(c_0) = \frac{\langle \mathcal{L}_{\rm aper}^{(1)} \psi_0, \psi_0 \rangle}{\langle \psi_0,\psi_0 \rangle},
\end{equation}
	which is generally nonzero.
\end{remark}

The following theorem gives the normal form for the stability bifurcation in the spectral problem (\ref{mod-Bab-eq-anti}).

\begin{theorem}
	\label{theorem-bif-2-normal-form}
	Under the assumptions of Theorem \ref{theorem-bif-2}, let $c_0 \in (1,c_*)$ be the root of $\Lambda(c)$, where $\Lambda(c)$ is the eigenvalue of $\mathcal{L}_{\rm aper}$ in Lemma \ref{lem-period-doubling}. Assume that $\mathcal{A}_1 \neq 0$ and $\Lambda'(c_0) \neq 0$ in (\ref{coefficients-A}) and (\ref{Lambda-der}). The two spectral bands of the modulational stability problem (\ref{mod-Bab-eq-anti}) near $\lambda = 0$ in Theorem \ref{theorem-bif-2} are obtained from roots of the characteristic 
	equation
	\begin{equation}
	\label{normal-form-two}
	\lambda^2\mathcal{A}_{1} + 2 i c_0 \tilde{\mu} \lambda \mathcal{A}_{2} + \tilde{\mu}^2 \mathcal{A}_{3} 
	- \Lambda'(c_0) (c-c_0) \| \psi_0 \|^2_{L^2} = \mathcal{O}\left( |\tilde{\mu}|^3 \right),
	\end{equation}
	where the asymptotic balance is justified if 
	$c - c_0 = \mathcal{O}( \tilde{\mu}^2)$.
\end{theorem}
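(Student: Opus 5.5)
We will repeat the Puiseux expansion from the proof of Theorem \ref{theorem-bif-2}, now letting the speed $c$ move away from $c_0$ as a second small parameter. Following the consistency condition in the statement, we scale
$$
\tilde{\mu} = \epsilon \tilde{\mu}_1, \qquad c - c_0 = \epsilon^2 \gamma, \qquad \lambda = \epsilon \lambda_1 + \mathcal{O}(\epsilon^2).
$$
By Lemma \ref{lem-period-doubling}, the coefficients of (\ref{mod-Bab-eq-anti}) are smooth in $c$ near $c_0$. Using (\ref{expansion-2-c}), we write $\mathcal{L}_{\rm aper} = \mathcal{L}^{(0)}_{\rm aper} + \epsilon^2 \gamma \mathcal{L}^{(1)}_{\rm aper} + \mathcal{O}(\epsilon^4)$. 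The operators $\mathcal{M}_{\rm aper}$, $\mathcal{M}^*_{\rm aper}$, $\mathcal{S}_{\rm aper}$ and the factor $2c$ likewise change only at order $\epsilon^2$. Since each of these already multiplies $\lambda$ or $\tilde{\mu}$, their $c$-dependence first enters at order $\epsilon^3$. The eigenfunction expansion (\ref{expansion-2}) is kept, with $\psi_0$ the kernel element of $\mathcal{L}^{(0)}_{\rm aper}$.

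At order $\mathcal{O}(\epsilon)$ nothing changes. The system is (\ref{second-level-one}) with $c = c_0$, and its solution is (\ref{second-level-one-solution}) with $\mathcal{L}_{\rm aper}^{-1}$ understood as the inverse of $\mathcal{L}^{(0)}_{\rm aper}$ on $\{\psi_0\}^\perp$. This inverse is well defined because $\mathcal{H}_{\rm aper}\psi_0$ and $\mathcal{S}_{\rm aper}\psi_0$ are orthogonal to $\psi_0$ by skew-adjointness. The operator $\mathcal{K}_{\rm aper}$ is invertible on antiperiodic functions, since its symbol is $|n+\frac12| \ge \frac12$.

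At order $\mathcal{O}(\epsilon^2)$, the second equation of (\ref{second-level-two}) gains the extra term $-\gamma \mathcal{L}^{(1)}_{\rm aper} \psi_0$. The first equation needs no solvability condition, again because $\mathcal{K}_{\rm aper}$ is invertible. Projecting the second equation onto $\psi_0$ reproduces (\ref{lin-alg}) plus the new term $-\gamma \langle \psi_0, \mathcal{L}^{(1)}_{\rm aper}\psi_0\rangle$. By (\ref{Lambda-der}) this term equals $-\gamma \Lambda'(c_0)\|\psi_0\|^2_{L^2}$. The quadratic equation (\ref{quad-eq-anti}) therefore becomes
$$
\lambda_1^2 \mathcal{A}_1 + 2 i c_0 \tilde{\mu}_1 \lambda_1 \mathcal{A}_2 + \tilde{\mu}_1^2 \mathcal{A}_3 - \gamma \Lambda'(c_0) \|\psi_0\|^2_{L^2} = 0.
$$
The coefficients $\mathcal{A}_{1,2,3}$ are evaluated at $c_0$. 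Multiplying by $\epsilon^2$ and undoing the scaling gives (\ref{normal-form-two}) at leading order. The hypothesis $\mathcal{A}_1 \neq 0$ ensures that both roots $\lambda$ stay of order $\epsilon$. The hypothesis $\Lambda'(c_0)\neq 0$ ensures that the $c$-term is a genuine unfolding parameter, so the balance is justified exactly when $c-c_0 = \mathcal{O}(\tilde{\mu}^2)$.

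The main obstacle is controlling the remainder as $\mathcal{O}(|\tilde{\mu}|^3)$, rather than carrying out the formal computation. The plan is to reduce to a rigorous two-dimensional problem, following Section \ref{sec-case-2}. We apply a Lyapunov--Schmidt reduction onto the kernel ${\rm span}\{(\psi_0,0)\}$ together with its generalized partner. This reduction is allowed because (\ref{mod-Bab-eq-anti}) is analytic in $(\tilde{\mu},\lambda)$ and smooth in $c$, and because the zero eigenvalue is isolated with a one-dimensional kernel by Assumption \ref{ass-second-bif}. The result is a scalar function $F(\lambda,\tilde{\mu},c)$ whose zeros are the eigenvalues. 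Its Taylor expansion has no terms of order $\lambda$, $\tilde{\mu}$, or $(c-c_0)\lambda$ at the relevant weights: skew-adjointness kills the first two projections, and the first-order $c$-term is exactly $-\Lambda'(c_0)(c-c_0)\|\psi_0\|^2$. The Weierstrass preparation theorem \cite{Hor73} then factors out a quadratic in $\lambda$. With the weights $\lambda,\tilde{\mu} \sim \epsilon$ and $c-c_0 \sim \epsilon^2$, every remaining term is $\mathcal{O}(\epsilon^3)$. The Newton polytope argument of Section \ref{sec-1-4} confirms that no lower-order monomials are missing, which yields the stated remainder.
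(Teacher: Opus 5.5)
Your proposal is correct and follows the paper's proof essentially step for step. The paper also sets $c = c_0 + \epsilon^2 c_2$ (your $\gamma$) and keeps the $\mathcal{O}(\epsilon)$ solution (\ref{second-level-one-solution}) at $c=c_0$. At $\mathcal{O}(\epsilon^2)$ it picks up the single new term $c_2\mathcal{L}^{(1)}_{\rm aper}\psi_0$ in the second equation, and it converts $\langle \psi_0, \mathcal{L}^{(1)}_{\rm aper}\psi_0\rangle$ into $\Lambda'(c_0)\|\psi_0\|^2_{L^2}$ via (\ref{Lambda-der}) before multiplying by $\epsilon^2$.

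Your closing Lyapunov--Schmidt/Weierstrass paragraph goes beyond what the paper writes, since the paper simply asserts the $\mathcal{O}(\epsilon^3)$ remainder. That paragraph is sound, with one correction: the terms $(c-c_0)\lambda$ and $(c-c_0)\tilde{\mu}$ are not absent from the Taylor expansion. They have weight three and are therefore absorbed into the remainder.
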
 

\begin{proof}
We combine the asymptotic expansions (\ref{expansion-2-eps}), (\ref{expansion-2}), and (\ref{expansion-2-c}) at the asymptotic balance $c - c_0 = \mathcal{O}(\tilde{\mu}^2)$. To incorporate the balance, we write 
$$
c = c_0 + \epsilon^2 c_2 + \mathcal{O}(\epsilon^3),
$$
where $c_2$ is fixed. We get the same solution (\ref{second-level-one-solution}) 
with $c = c_0$ at the order $\mathcal{O}(\epsilon)$ and we get the following modification of the linear inhomogeneous system (\ref{second-level-two}) at the order $\mathcal{O}(\epsilon^2)$:
\begin{equation}
\left\{ \begin{array}{ll}
\mathcal{K}_{\rm aper} \hat{w}_2 &=  \lambda_1 \mathcal{M}_{\rm aper} \hat{v}_1  + i \tilde{\mu}_1  \mathcal{H}_{\rm aper} \hat{w}_1, \\
\mathcal{L}_{\rm aper}^{(0)} \hat{v}_2 + c_2 \mathcal{L}_{\rm aper}^{(1)} \psi_0 &= \lambda_1 \left( \mathcal{M}^*_{\rm aper} \hat{w}_1 
-  2 c_0  \mathcal{H}_{\rm aper} \hat{v}_1 \right) + i \tilde{\mu}_1 \mathcal{S}_{\rm aper} \hat{v}_1.
\end{array}
\right. 
\label{second-level-two-mod}
\end{equation}
By Fredholm's theorem, there exists a periodic solution of the linear inhomogeneous system (\ref{second-level-two-mod}) if and only if the following constraint is satisfied:
\begin{align}
\label{lin-alg-mod}
\lambda_1 \langle \psi_0, \mathcal{M}^*_{\rm aper} \hat{w}_1 
-  2 c_0  \mathcal{H}_{\rm aper} \hat{v}_1 \rangle + i \tilde{\mu}_1 \langle \psi_0, \mathcal{S}_{\rm aper} \hat{v}_1 \rangle = c_2 \langle \psi_0, \mathcal{L}_{\rm aper}^{(1)} \psi_0 \rangle.
\end{align} 
With the account of (\ref{quad-eq-anti}) and (\ref{Lambda-der}), the constraint (\ref{lin-alg-mod}) yields the following quadratic characteristic equation 
\begin{equation}
\label{quad-eq-anti-mod}
\lambda_1^2 \mathcal{A}_{1} + 2 i c_0 \tilde{\mu}_1 \lambda_1 \mathcal{A}_{2} + \tilde{\mu}_1^2 \mathcal{A}_{3} = c_2 \Lambda'(c_0) \| \psi_0 \|^2_{L^2},
\end{equation} 
where the coefficients $\mathcal{A}_{1,2,3}$ are defined by (\ref{coefficients-A}). Multiplying (\ref{quad-eq-anti-mod}) by $\epsilon^2$, keeping in mind the remainder term of the order $\mathcal{O}(\epsilon^3)$, and returning to the original variables $\mu$, $\lambda$, and $c$, we obtain (\ref{normal-form-two}).
\end{proof}

\begin{remark}
	\label{rem-anti-per}
If $\mathcal{A}_1 \neq 0$ and $\mathcal{A}_1 \mathcal{A}_3 + c_0^2 \mathcal{A}_2^2 > 0$, then $\lambda^2\mathcal{A}_{1} + 2 i c_0 \tilde{\mu} \lambda \mathcal{A}_{2} + \tilde{\mu}^2 \mathcal{A}_{3} = 0$ has two roots $\lambda^{\pm}(\tilde{\mu}) \in i \mathbb{R}$ such that $\lambda^+(\tilde{\mu}) \neq \lambda^-(\tilde{\mu})$ for $\tilde{\mu} \neq 0$, see (\ref{two-solutions-anti}). If ${\rm sgn}(\Lambda'(c_0)) = {\rm sgn}(\mathcal{A}_1)$, then the spectral problem (\ref{mod-Bab-eq-anti}) with $\tilde{\mu}_1 = 0$ admits two real eigenvalues for $c > c_0$ and two purely imaginary eigenvalues for $c < c_0$ near the origin. Rewriting (\ref{normal-form-two}) as 
	$$
\left( \lambda + \frac{i c_0 \tilde{\mu} \mathcal{A}_2}{\mathcal{A}_1} \right)^2 + 
\tilde{\mu}^2 \frac{\mathcal{A}_1 \mathcal{A}_{3} + c^2 \mathcal{A}_2^2}{\mathcal{A}_1^2}
	- \frac{\Lambda'(c_0) \| \psi_0 \|^2_{L^2}}{\mathcal{A}_1} (c-c_0) = \mathcal{O}\left( |\tilde{\mu}|^3 \right),
	$$
we can see that the roots define two spectral bands, which are subsets of $i \mathbb{R}$ and non-degenerate for $c < c_0$. Compared to the case of $c = c_0$, these two spectral bands do not intersect $\lambda = 0$ for $c < c_0$. If $c > c_0$, the roots define a new circular band around the origin, the radius of which grows with further deviation of $c$ from $c_0$.
\end{remark}

\subsection{Numerical results}
\label{sec-4-3}

The criterion for the ellipse bifurcation is a zero of $\Lambda(c)$, the eigenvalue of $\mathcal{L}_{\rm aper}$ for anti-periodic eigenfunction (see red circles in Figure~\ref{fig:beigs}). The corresponding steepness $s_0$ and speed $c_0$ for the first two zeros of $\Lambda$ are recorded in Table \ref{tab:ellipse}, where we also show the values of $\mathcal{A}_1$, $\mathcal{A}_2$, $\mathcal{A}_1 \mathcal{A}_3 + c_0^2 \mathcal{A}_2^2$, and $\Lambda'(c_0)$ in the normal form (\ref{normal-form-two}).
 
\begin{table}[htb!]
	\centering
	\renewcommand{\arraystretch}{1.3}
	\begin{tabular}{c|c|c|c|c|c}
		$s_0$ & $c_0$ &  $\mathcal{A}_1$ &  $\mathcal{A}_2$ & $\mathcal{A}_1\mathcal{A}_3 + c^2 \mathcal{A}_2^2$ & $\Lambda'(c_0)$ \\ \hline
		$0.12890310$  & $1.08414013$ & $-4.7027$ & $1.3601$ & $0.99841$ & $    -28.87$ \\
		$0.14048675$  & $1.09238125$ & $-4.4533$ & $1.2802$ & $0.89648$ & $2391.7$\\ 
	\end{tabular}
	\caption{Parameters of the first two ellipse bifurcations.}
	\label{tab:ellipse}
\end{table}

We note that  $\mathcal{A}_1 \neq 0$ and $\mathcal{A}_1 \mathcal{A}_3 + c^2 \mathcal{A}_2^2 > 0$ so that the splitting of two spectral bands near $\lambda = 0$ occurs similarly to the description in Remark \ref{rem-anti-per}. The only difference is that the mapping $s \mapsto c(s)$ is increasing at the first 
ellipse bifurcation and decreasing at the second ellipse bifurcation. Therefore, 
the values of $\Lambda'(c_0)$ change the sign in between the two bifurcations, which correspond to the real eigenvalues of the spectral problem  (\ref{mod-Bab-eq-anti}) with $\tilde{\mu}_1 = 0$ for $c > c_0$ for the first bifurcation and $c < c_0$ for the second bifurcation, both regions correspond to 
the steepness $s$ above the critical steepness $s_0$.
	 
There are two equivalent ways to determine the eigenfunction $\psi_0 \in H^1_{\rm aper}$. The first method is to solve the eigenvalue problem for 
$\mathcal{L}_{\rm mu}$ using Floquet multiplier with $\mu = \frac{1}{2}$. The second is to extend the interval for $u$ to $[0,4\pi]$, and
seek the unique odd $4\pi$-periodic eigenfunction of $\mathcal{L}_{\rm aper}$. We chose to follow the second route, where we also use the symmetry of Babenko's equation (\ref{trav-Bab-eq}): if $\eta \in C^{\infty}_{\rm per}$ is a solution for speed $c$, then $\tilde{\eta}(u) = \frac{1}{2} \eta(2u)$ is also a solution for speed $\tilde{c} = \frac{c}{\sqrt{2}}$. This scaling symmetry allows us to recast a $4\pi$-periodic function back to a $2\pi$-periodic function. We obtain $\psi_0$ from the shift-and-invert numerical method applied to $\mathcal{L}_{\rm aper}$, see ~\cite{dyachenko2023quasiperiodic}. The value of $\Lambda'(c_0)$ is determined from the equation~\eqref{Lambda-der}. After $\psi_0 \in H^1_{\rm aper}$ is found, 
we compute the three inner products $A_{1,2,3}$ in relations~\eqref{coefficients-A}, with the inverse operator $\mathcal{L}^{-1}$ applied via the MINRES method. 

Figures \ref{fig:efg1} and \ref{fig:efg2} show 
the instability bands of the spectral problem (\ref{mod-Bab-eq}) near the origin in the spectral $\lambda$-plane obtained numerically (blue dots).  
The three panels in each figure correspond to three consequent values of 
steepness before, at, and after the critical steepness, for which $\Lambda(c_0) = 0$.  The theoretical curves obtained from the normal form~\eqref{normal-form-two} are  shown by red color. A good agreement between the spectral bands given by the normal form (\ref{normal-form-two}) and the numerical approximations of the stability spectra is evident for the first two cycles of bifurcations.

\begin{figure}[htb!]
    \centering
    \includegraphics[width=0.95\linewidth]{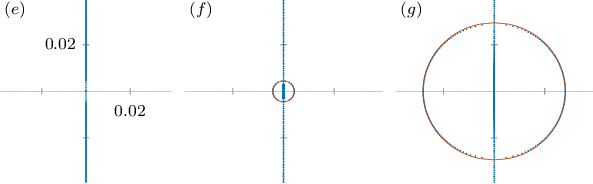}
    \caption{The first instance of the ellipse bifurcation, 
    	which corresponds to panels (e)--(g) of Figure \ref{fig-bif} zoomed near the origin in the $\lambda$-plane. The spectral bands remain on $i \mathbb{R}$ for $c < c_0$ (left panel), remain on $i \mathbb{R}$ and intersect the origin for $c = c_0$ (middle panel), 
    	and form a circular band around the origin for $c > c_0$ (right panel), 
where $c_0 \approx 1.08414013$ is found from the root of $\Lambda$. The dashed lines show the approximations obtained from the normal form~\eqref{normal-form-two}.}
    \label{fig:efg1}
\end{figure}

\begin{figure}[htb!]
	\centering
	\includegraphics[width=0.95\linewidth]{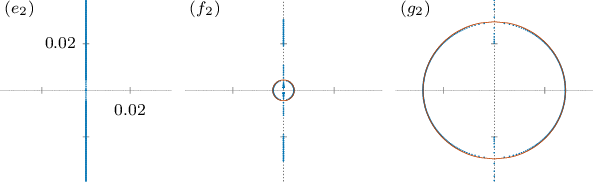}
	\caption{The same as Figure \ref{fig:efg1} but for the second instance 
		of the ellipse bifurcation, which happens when $\Lambda(c_0) = 0$ for $c_0 \approx 1.09238125$.}
	\label{fig:efg2}
\end{figure}

\section{Reconnection of figure-$\infty$ bands}
\label{sec-5}

The corresponding  bifurcation is seen in panels (j)--(l) of Figure \ref{fig-bif}.
For simplicity, it can be called the figure-$\infty$ bifurcation.
Before the bifurcation [panel (j)], the instability band surrounding the origin 
has a peanut shape.  At the bifurcation [panel (k)], the instability 
band reconnects at the origin and exhibits the figure-$\infty$ shape.  After the bifurcation [panel (l)], the figure-$\infty$ band breaks into two instability bubbles along the real axis.

Section \ref{sec-5-1} develops the criterion and the normal form for this bifurcation. Section \ref{sec-5-2} describes 
numerical approximations of the coefficients of the normal form, 
from which we are able to reproduce the bifurcation pattern seen in panels (j)--(l) of Figure \ref{fig-bif}.

\subsection{Criterion and the normal form of the bifurcation}
\label{sec-5-1}

This bifurcation is due to the zero-eigenvalue bifurcation in the spectral problem (\ref{spec-Bab-eq}) for co-periodic perturbations, which was considered in our previous work \cite{DP2025}. The geometric multiplicity of the zero eigenvalue is controlled by Assumption \ref{ass-kernel}, which is still considered to be true. However, the algebraic multiplicity of the zero eigenvalue is increased from four to six, since Assumption \ref{ass-first-bif} is no longer considered to be true. Hence, we replace Assumption \ref{ass-first-bif} with the opposite non-generic assumption.

\begin{assumption}
	\label{ass-third-bif}
	For the given value of $c \in (1,c_*)$, we have $\mathcal{P}'(c) = 0$.
\end{assumption}

Because of Assumption \ref{ass-third-bif}, the generalized kernel of the spectral problem (\ref{mod-Bab-eq}) for $\mu = 0$ is at least six-dimensional. 
It was shown in \cite{DP2025} that it is exactly six-dimensional if $\mathcal{B} \neq 0$, where the numerical coefficient $\mathcal{B}$ is defined by (\ref{B-coef}). 

The following theorem gives the criterion for the corresponding bifurcation. 

\begin{theorem}
	\label{theorem-bif-3}
	Let Assumptions  \ref{ass-kernel} and \ref{ass-third-bif} be true and assume that $\mathcal{B} \neq 0$ in (\ref{B-coef}). There exist six spectral bands of the modulational stability problem (\ref{mod-Bab-eq}) intersecting $\lambda = 0$. Two spectral bands cross the origin along the straight lines in all four complex quadrants of the complex $\lambda$-plane and four spectral bands 
	are subsets of $i \mathbb{R}$ and non-degenerate. 
\end{theorem}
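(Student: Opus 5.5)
The plan is to follow the structure of Section \ref{sec-case-3}: one Jordan block of length four together with one of length two, and the degenerate case $a_1=a_2=0$. At $\mu=0$, Assumption \ref{ass-kernel} still gives the two-dimensional kernel (\ref{kernel}) and the first generalized eigenfunctions (\ref{kernel-generalized}). Assumption \ref{ass-third-bif} changes the chain structure. The constraint (\ref{a-first}) at the next level can no longer be solved for the $(\eta',0)^T$ direction, since $\mathcal{P}'(c)=0$. Following \cite{DP2025}, the chain starting from $(\eta',0)^T$ therefore extends to length four (with $\mathcal{B}\neq 0$ terminating it), while the chain from $(0,1)^T$ keeps length two. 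The projection computation (\ref{matrix-projections}) is unchanged and gives $b_{41}=b_{61}=0$ in the notation of Section \ref{sec-case-3}. The predicted splitting is therefore (\ref{eigenvalue-split-5}): three eigenvalues of order $\mu^{1/3}$, two of order $\mu^{1/2}$, and one of order $\mu$. Six eigenvalues in total means the statement should be read as three scalings arising from two subspaces.

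\underline{Subspace $(a_1,a_2)=(0,1)$.} I would repeat the expansion (\ref{expansion-1-second-eps})--(\ref{single-eq}) verbatim. None of it used $\mathcal{P}'(c)\neq0$: the Fredholm condition for the first equation gives $\lambda_1^2=-|\mu_2|$, and the one for the second equation is trivially satisfied. This produces two non-degenerate bands on $i\mathbb{R}$ of order $|\mu|^{1/2}$, distinct for $\mu\neq0$, and they persist on $i\R$ by Lemma \ref{lem-mod-spectrum}.

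\underline{Subspace $(a_1,a_2)=(1,0)$.} Here the leading quadratic (\ref{quad-eq}) degenerates because $\mathcal{P}'(c)=0$. It becomes $-i\mu_1\lambda_1\mathcal{N}'(c)+\mu_1^2[\mathcal{N}(c)-\frac{c}{4}\mathcal{N}'(c)]=0$, where $\mathcal{N}'(c)=-5\mathcal{P}(c)\neq0$ by (\ref{identity1}). This linear equation gives one root $\lambda=\mathcal{O}(\mu)$, and it is purely imaginary. The remaining three roots must come from a scaling in which $\lambda^4$ balances $\mu\lambda$, consistent with the normal form $\lambda^4\mathcal{B}-i\mu\lambda\mathcal{N}'(c_0)=\mathcal{O}(\mu^2)$. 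I would set $\mu=\epsilon^3\mu_1$ and $\lambda=\epsilon\lambda_1+\dots$, with the eigenfunction starting at $(\eta',0)^T$. The corrections at orders $\epsilon,\epsilon^2,\epsilon^3$ are then built from the generalized eigenfunctions of \cite{DP2025}: $\lambda_1^k$ times the $k$-th chain element, plus $i\mu_1(\Upsilon,0)^T$ at order $\epsilon^3$, with free multiples of $(0,1)^T$ fixed by the $\langle1,\cdot\rangle$ constraints. The Fredholm condition against $\eta'$ at order $\epsilon^4$ should give $\mathcal{B}\lambda_1^4-i\mu_1\lambda_1\mathcal{N}'(c)=0$, using the same identities as in (\ref{some-tech}). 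The nonzero roots satisfy $\lambda_1^3=i\mu_1\mathcal{N}'/\mathcal{B}$. One cube root is purely imaginary. The other two lie at angles $\pm\pi/3$ from it, so one is in each half-plane and they are not on $i\R$. Varying $\mu_1$ over both signs, and using the symmetry $\lambda(\mu)\mapsto-\bar\lambda(\mu)$ of Lemma \ref{lem-mod-spectrum}, the non-imaginary pair traces straight lines through the origin in all four quadrants. The imaginary cube root together with the $\mathcal{O}(\mu)$ root gives further bands on $i\R$, so that four bands lie on $i\mathbb{R}$ in total. Justification follows the Newton polytope of Figure \ref{fig:newtpoly42deg2}, with $a_3\propto\mathcal{B}$, $a_4$, and $a_5$ nonzero.

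The main obstacle will be the order-$\epsilon^4$ projection. I must verify that the accumulated terms reduce exactly to $\mathcal{B}\lambda_1^4$, matching (\ref{B-coef}) from \cite{DP2025}. I also need the cross terms in $\mu_1\lambda_1$ to combine into $\mathcal{N}'(c)$ with no surviving $\pm$-sign dependence from $\mathcal{H}_\pm$. I would handle this by reusing the parity bookkeeping of Theorem \ref{theorem-bif-1-normal-form}, which tracks the even and odd parts and the $\langle1,\cdot\rangle$ jumps.
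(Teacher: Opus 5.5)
Your proposal is correct and follows the paper's proof essentially step by step. It uses the same split into the subspaces $(a_1,a_2)=(1,0)$ and $(0,1)$, the same scaling $\mu=\epsilon^3\mu_3$ carried to order $\mathcal{O}(\epsilon^4)$ to get $\lambda_1^4\mathcal{B}-i\mu_3\lambda_1\mathcal{N}'(c)=0$, the verbatim reuse of the $(0,1)$ computation from Theorem \ref{theorem-bif-1}, and the Newton polytope of Section \ref{sec-case-3} for justification. The paper instead locates the $\mathcal{O}(\mu)$ root via the polytope and places it on $i\R$ by Hamiltonian symmetry, while you read it off the degenerate form of (\ref{quad-eq}); also, the two non-imaginary cube roots lie at $\pm 2\pi/3$ from the imaginary one (i.e.\ at angle $\pi/3$ to $i\R$), not $\pm\pi/3$, which does not affect your conclusion.
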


\begin{proof}
	We revisit the proof of Theorem \ref{theorem-bif-1} after replacing Assumption \ref{ass-first-bif} with Assumption \ref{ass-third-bif}. If $\mathcal{P}'(c) = 0$, then the characteristic equation (\ref{quad-eq}) 
	fails because the coefficient in front of $\lambda_1^2$ vanishes. 
	If $\mathcal{B} \neq 0$ in (\ref{B-coef}), then the characteristic 
	equation can be extended to the next term of $\lambda_1^4$. 
According to Section \ref{sec-case-3}, we get the splitting formula for eigenvalues in the form (\ref{eigenvalue-split-5}), where three eigenvalues with 
$\lambda = \mathcal{O}(\mu^{1/3})$ and one eigenvalue with $\lambda = \mathcal{O}(\mu)$ are obtained in the subspace $(a_1,a_2) = (1,0)$ 
and two eigenvalues with $\lambda = \mathcal{O}(\mu^{1/2})$ are obtained 
in the subspace $(a_1,a_2) = (0,1)$. Since the two balances 
	are different, we can still consider these two subspaces separately from each other, similarly to the proof of Theorem \ref{theorem-bif-1}, with two different expansions near $(\mu,\lambda) = (0,0)$.
	
	\vspace{0.2cm}

\underline{Subspace $(a_1,a_2) = (1,0)$:} We introduce the asymptotic expansions
	\begin{equation}
\label{expansion-3-eps}
	\mu = \epsilon^3 \mu_3, \quad \lambda = \epsilon \lambda_1 + \mathcal{O}(\epsilon^2)
	\end{equation}
and 
	\begin{equation}
	\label{expansion-3}
	\left( \begin{array}{c} \hat{v} \\ \hat{w} \end{array} \right) = \left( \begin{array}{c} \eta' \\ 0 \end{array} \right) + \epsilon \left( \begin{array}{c} \hat{v}_1 \\ \hat{w}_1 \end{array} \right) + \epsilon^2 \left( \begin{array}{c} \hat{v}_2 \\ \hat{w}_2 \end{array} \right) + \epsilon^3 \left( \begin{array}{c} \hat{v}_3 \\ \hat{w}_3 \end{array} \right) + \epsilon^4 \left( \begin{array}{c} \hat{v}_4 \\ \hat{w}_4 \end{array} \right) + \mathcal{O}(\epsilon^5),
	\end{equation}
where $\epsilon > 0$ is a small parameter, $\mu_3 \neq 0$ is fixed, and the correction terms up to the order of $\mathcal{O}(\epsilon^4)$ are to be computed recursively in function 
space $H^1_{\rm per} \times H^1_{\rm per}$. 
	
\underline{\bf Order $\mathcal{O}(\epsilon)$:} We get the linear inhomogeneous system 
(\ref{level-one}) without the $\mathcal{O}(\mu_1)$ terms. Hence, instead of the solution (\ref{level-one-solution}), we obtain the truncated solution 
	\begin{align}
	\left( \begin{array}{c} \hat{v}_1 \\ \hat{w}_1 \end{array} \right) = \lambda_1 
	\left[ \left( \begin{array}{c} -\partial_c \eta \\ \mathcal{H} \eta \end{array} \right) \pm i \langle 1, \partial_c \eta \rangle 
	\left( \begin{array}{c} \eta' \\ -c \end{array} \right) 
	\right], \label{third-level-one}
	\end{align}
	where no projections to $(\eta',0)^T$ and $(0,1)^T$ are needed to satisfy the Fredholm conditions in the higher orders.
	
\underline{\bf Order $\mathcal{O}(\epsilon^2)$:} We get the linear inhomogeneous system (\ref{level-two}) without the $\mathcal{O}(\mu_1)$ terms. Moreover, the second term in (\ref{third-level-one}) cancels all the plus/minus projection terms due to Lemma \ref{lem-transformation}. By using (\ref{der-c}) and $\mathcal{P}'(c) = 0$, we confirm that the solvability constraints are satisfied and there exists a periodic solution of the linear inhomogeneous system in the form	
	\begin{align}
	\left( \begin{array}{c} \hat{v}_2 \\ \hat{w}_2 \end{array} \right) = \lambda_1^2  \left( \begin{array}{c} \tilde{v}_2 \\ \tilde{w}_2 \end{array} \right) + a \left( \begin{array}{c} 0 \\ 1 \end{array} \right), \label{third-level-two}
	\end{align}
where $a \in \mathbb{R}$ is arbitrary at this order, the projection to $(\eta',0)^T$ is not included since the eigenfunctions are defined up to the arbitrary scalar multiplication in (\ref{expansion-3}), and 
$(\tilde{v}_2,\tilde{w}_2) \in H^1_{\rm per} \times H^1_{\rm per}$ is uniquely determined from solution of the linear inhomogeneous system
	\begin{equation}
	\left\{ \begin{array}{ll}
	\mathcal{K} \tilde{w}_2 &=  - \mathcal{M} \partial_c \eta, \\
	\mathcal{L} \tilde{v}_2 &=  \mathcal{M}^* \mathcal{H}\eta + 2 c  \mathcal{H}  \partial_c \eta,
	\end{array}
	\right. 
	\label{lambda-squared}
	\end{equation}
under the orthogonality conditions $\langle 1, \tilde{w}_2 \rangle = \langle \eta', \tilde{v}_2 \rangle = 0$. Since the linearized operators in (\ref{lambda-squared}) are parity-preserving, the functions $\tilde{v}_2$ and $\tilde{w}_2$ are spatially odd and even in $u$, respectively. 
	
\underline{\bf Order $\mathcal{O}(\epsilon^3)$:} We get the linear inhomogeneous system 
\begin{equation}
\left\{ \begin{array}{ll}
\mathcal{K} \hat{w}_3 &=  \lambda_1 \mathcal{M} \hat{v}_2, \\
\mathcal{L} \hat{v}_3 &= \lambda_1 \left( \mathcal{M}^* \hat{w}_2 -  2 c \mathcal{H} \hat{v}_2 \right) + i \mu_3 \mathcal{L}_{\pm} \eta'.
\end{array}
\right. 
\label{level-three}
\end{equation}
By using (\ref{kernel-generalized}), (\ref{Upsilon}), (\ref{level-one-solution}), and (\ref{third-level-two}), there exists a periodic solution to (\ref{level-three}) in the form
\begin{align}
\left( \begin{array}{c} \hat{v}_3 \\ \hat{w}_3 \end{array} \right) &= \lambda_1^3 
\left[ \left( \begin{array}{c} \tilde{v}_3 \\ \tilde{w}_3 \end{array} \right) \mp i \langle 1, \tilde{v}_3 \rangle 
\left( \begin{array}{c} \eta' \\ -c \end{array} \right) 
\right]
+ i \mu_3 \left( \begin{array}{c} \Upsilon \\ 0 \end{array} \right) \notag \\
& \qquad \qquad 
+ a \lambda_1 \left[ 
\left( \begin{array}{c} -1 \\ 0 \end{array} \right) \pm i \left( \begin{array}{c} \eta' \\ -c \end{array} \right) \right],
\label{third-level-three}
\end{align}
where projections to $(\eta',0)^T$ and $(0,1)^T$ are not included since the eigenfunctions are defined up to the arbitrary scalar multiplication in (\ref{expansion-3}) and (\ref{third-level-two}), and $(\tilde{v}_3,\tilde{w}_3)  \in H^1_{\rm per} \times H^1_{\rm per}$ is uniquely determined from solution of the linear inhomogeneous system
\begin{equation}
\left\{ \begin{array}{ll}
\mathcal{K} \tilde{w}_3 &=  \mathcal{M} \tilde{v}_2, \\
\mathcal{L} \tilde{v}_3 &=  \mathcal{M}^* \tilde{w}_2 - 2 c  \mathcal{H}  \tilde{v}_2,
\end{array}
\right. 
\label{lambda-cubic}
\end{equation}
under the orthogonality conditions $\langle 1, \tilde{w}_3 \rangle = \langle \eta', \tilde{v}_3 \rangle = 0$. Since the linearized operators in (\ref{lambda-cubic}) are parity-preserving, the functions $\tilde{v}_3$ and $\tilde{w}_3$ are spatially even and odd in $u$, respectively. 
A projection term was added in (\ref{third-level-three}) due to the representation (\ref{transformation}) since $\langle 1, \tilde{v}_3 \rangle \neq  0$ generally.

\underline{\bf Order $\mathcal{O}(\epsilon^4)$:} We get the linear inhomogeneous system 
\begin{equation}
\left\{ \begin{array}{ll}
\mathcal{K} \hat{w}_4 &=  \lambda_1 \left( \mathcal{M} \hat{v}_3 \pm i \langle 1, \hat{v}_3 \rangle \eta' \right) + i \mu_3 \left( \mathcal{H} \hat{w}_1 \pm i \langle 1, \hat{w}_1 \rangle \right), \\
\mathcal{L} \hat{v}_4 &= \lambda_1 \left( \mathcal{M}^* \hat{w}_3 
-  2 c  \mathcal{H} \hat{v}_3 \mp i \langle \eta', \hat{w}_3 \rangle \mp 2 c i \langle 1, \hat{v}_3 \rangle \right) + i \mu_3 \mathcal{L}_{\pm} \hat{v}_1.
\end{array}
\right. 
\label{level-four}
\end{equation}
By Fredholm's theorem, there exists a periodic solution of the linear inhomogeneous system (\ref{level-four}) if and only if the two constraints are satisfied:
\begin{align}
\label{lin-alg-3}
\lambda_1 \langle 1, \mathcal{M} \hat{v}_3 \rangle \mp \mu_3 \langle 1, \hat{w}_1 \rangle &= 0, \\
\lambda_1 \langle \eta', \mathcal{M}^* \hat{w}_3 
-  2 c  \mathcal{H} \hat{v}_3 \rangle + i \mu_3 \langle \eta', \mathcal{L}_{\pm} \hat{v}_1 \rangle &= 0.
\label{lin-alg-4}
\end{align} 
Since $\mathcal{M}^* 1 = 1 + 2 \mathcal{K}\eta$, the first equation (\ref{lin-alg-3}) yields due to (\ref{third-level-one}) and (\ref{third-level-three}):
$$
\lambda_1^4 \langle (1+2\mathcal{K} \eta), \tilde{v}_3 \rangle + 
i \mu_3 \lambda_1 \langle (1 + 2 \mathcal{K} \eta), \Upsilon \rangle - a \lambda_1^2 + i c \mu_3 \lambda_1 \langle 1, \partial_c \eta \rangle = 0.
$$
This equation defines $a$ uniquely as 
\begin{align}
\label{a-third}
a = \lambda_1^2 \langle (1+2\mathcal{K} \eta), \tilde{v}_3 \rangle,
\end{align}	
where the term of $\mu_3 \lambda_1^{-1}$ has zero coefficients due to  (\ref{tech-eq-11}) and (\ref{a-first}) with $\mathcal{P}'(c) = 0$. Similarly, the second equation (\ref{lin-alg-4}) yields due to (\ref{third-level-one}) and (\ref{third-level-three}):
\begin{align}
\label{char-eq-tech}
\lambda_1^4 \left[ \langle \eta', \tilde{w}_3 \rangle - 2 c \langle \mathcal{K} \eta, \tilde{v}_3 \rangle \right] - i \mu_3 \lambda_1 \left[ \langle \eta, \partial_c \eta \rangle + 2 c \langle \mathcal{K}\eta, \Upsilon \rangle \right]  = 0,
\end{align}
We note again that the numerical coefficient $a$ given by (\ref{a-third}) does not contribute to the computation of the characteristic equation (\ref{char-eq-tech}). By using the numerical coefficient $\mathcal{B}$ given by 
\begin{equation}
\label{B-coef}
\mathcal{B} := \langle \eta', \tilde{w}_3 \rangle - 2 c \langle \mathcal{K} \eta, \tilde{v}_3 \rangle,
\end{equation}	
we obtain the quartic characteristic equation from (\ref{char-eq-tech}):
	\begin{equation}
	\label{quartic-eq}
	\lambda_1^4 \mathcal{B} - i \mu_3 \lambda_1 \mathcal{N}'(c)  = 0.
	\end{equation}
Due to the second identity in (\ref{identity1}), if $\mathcal{P}'(c) = 0$, then $\mathcal{N}'(c) = -5 \mathcal{P}(c) < 0$. 
Since $\mathcal{B} \neq 0$ is assumed, the quartic equation admits three nonzero solutions $\lambda_1^{(1,2,3)}(\mu_3)$ given by 
	$$
	\lambda_1^{(j)}(\mu_3) = (\mu_3 \mathcal{B}^{-1} \mathcal{N}'(c))^{1/3} e^{\frac{ij\pi}{6}}, \quad j = 1,2,3, 
	$$
two of which give two straight lines in all four quadrants of the complex $\lambda$-plane and one gives the vertical segment on $i \mathbb{R}$. 
The fourth root in (\ref{quartic-eq}) is zero at this order. According to the analysis in Section \ref{sec-case-3}, the fourth root is of the order $\lambda = \mathcal{O}(\mu)$.

For fixed $\mu_3 \neq 0$, the three eigenvalues $\lambda_1^{(1,2,3)}$ are distinct. The eigenvalue on $i \R$ remains on $i \R$ due to the Hamiltonian symmetry of Lemma \ref{lem-mod-spectrum}. The other two simple eigenvalues are complex-valued and remain in the corresponding quadrants of the complex $\lambda$-plane under the perturbation terms in the asymptotic expansions (\ref{expansion-3-eps}) and (\ref{expansion-3}). The fourth eigenvalue of the order of $\lambda = \mathcal{O}(\mu)$ is simple and belongs to $i \mathbb{R}$ due to Hamiltonian symmetry.

\vspace{0.2cm}

\underline{Subspace $(a_1,a_2) = (0,1)$:} We use the same asymptotic expansions 
(\ref{expansion-1-second-eps}) and (\ref{expansion-1-second}) as in the proof of Theorem \ref{theorem-bif-1}. The same computations give two roots $\lambda_1^{\pm}(\mu_2) = \pm \sqrt{|\mu_2|}$ with two roots on $i \mathbb{R}$, which remain on $i \R$ under the perturbation terms due to the Hamiltonian symmetry of Lemma \ref{lem-mod-spectrum}. 
\end{proof}

\begin{remark}
Let $c = c_0$ be the root of $\mathcal{P}'(c)$ as in Assumption \ref{ass-third-bif} and assume $\mathcal{P}''(c_0) \neq 0$. 
Continuation of the characteristic equation (\ref{quartic-eq}) for $c$ close to $c_0$ was performed in \cite{DP2025}, from which we obtain the normal form 
for the corresponding bifurcation:
	\begin{equation}
	\label{normal-form-three}
	\lambda^4 \mathcal{B} - i \mu \lambda \mathcal{N}'(c_0) + \lambda^2 \mathcal{P}''(c_0) (c - c_0) = \mathcal{O}(\mu^2),	
	\end{equation}
where $c - c_0 = \mathcal{O}(|\mu|^{2/3})$ at the leading-order balance. For any $c \neq c_0$, there exist two roots $\lambda$ of (\ref{normal-form-three}) on $i \mathbb{R}$. Indeed, the transformation $\lambda = i \omega$ yields 
	\begin{equation}
	\label{normal-form-three-real}
	\omega^4 \mathcal{B} + \mu \omega \mathcal{N}'(c_0) - \omega^2 \mathcal{P}''(c_0) (c - c_0) = \mathcal{O}(\mu^2),	
	\end{equation}
where the coefficients in the remainder term are real-valued due to the Hamiltonian symmetry of Lemma \ref{lem-mod-spectrum}. There exists two roots $\omega$ of (\ref{normal-form-three-real}) on $\R$: one root is of the order  $\mathcal{O}(|\mu|^{1/3})$ and the other root is of the order $\mathcal{O}(\mu)$ 
for any $c \neq c_0$. These two real roots for $\omega$ define two roots $\lambda$ of (\ref{normal-form-three}) on $i\R$. The other two roots of (\ref{normal-form-three}) are complex-valued for any $c \neq c_0$.
\end{remark} 

\subsection{Numerical results}
\label{sec-5-2}

The criterion for the bifurcation is a zero of $\mathcal{P}'(c)$, which is equivalent to a zero of $\mathcal{H}'(c)$ due to (\ref{identity1}). The chain of eigenfunctions $(\tilde v_2, \tilde w_2)$ and $(\tilde v_3, \tilde w_3)$ defined by the linear inhomogeneous equations ~\eqref{lambda-squared} and~\eqref{lambda-cubic} only exists at $\mathcal{H}'(c)=0$. It follows from Figure \ref{fig:delta} that there exists exactly one such zero of $\mathcal{H}'(c)$ (red solid line) between each turning point of speed with respect to steepness where $c'(s) = 0$ (vertical lines). The corresponding steepness $s_0$ and speed $c_0$ for the first two zeros of $\mathcal{H}'(c)$ are recorded in Table \ref{tab:inf}, where we also show  
the coefficients $\mathcal{P}''(c_0)$, $\mathcal{B}$, and $\mathcal{N}'(c_0)$ of the normal form~\eqref{normal-form-three}. We note that  $\mathcal{B} > 0$ and the sign of $\mathcal{P}''(c_0)$ alternates between the two bifurcations. In both cases, this leads to new unstable (real) eigenvalues of the spectral problem (\ref{spec-Bab-eq}) in $L^2_{\rm per}$ in the direction of increasing steepness $s$ past the critical steepness $s_0$ \cite{DP2025}.
 
\begin{table}[htb!]
	\centering
	\renewcommand{\arraystretch}{1.3}
	\begin{tabular}{c|c|c|c|c}
        $s_0$ & $c_0$ & $\mathcal{P}''(c_0)$ & $\mathcal{B}$ & $\mathcal{N}'(c_0)$ \\ \hline
        $0.13660355$  & $1.09213793$  &  $-2.0450\times10^{3}$ & $11.018$ & $-0.35594$\\
        $0.14079655$  & $1.09228678$  &   $1.6623\times10^{5}$ & $10.962$ & 
        $-0.35050$\\
	\end{tabular}
	\caption{Parameters of the first two figure-$\infty$ bifurcations.}
	\label{tab:inf}
\end{table}

Figures \ref{fig:jkl1} and \ref{fig:jkl2} show 
the instability bands of the spectral problem (\ref{mod-Bab-eq}) near the origin in the spectral $\lambda$-plane obtained numerically.  
The three panels in each figure correspond to three consequent values of 
steepness before, at, and after the critical steepness, for which $\mathcal{H}'(c_0) = 0$.  The red lines display the complex spectral bands described by the normal form~\eqref{normal-form-three}, and are compared to the spectral bands shown in blue dots. The agreement is perfect for the first two cycles of bifurcations.

\begin{figure}[htb!]
	\centering
	\includegraphics[width=0.95\linewidth]{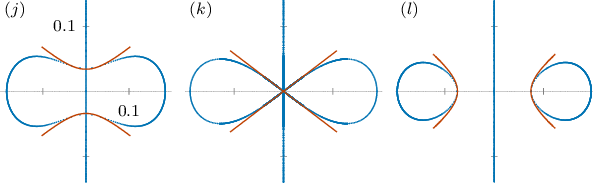}
	\caption{The first instance of the figure-$\infty$ bifurcation, 
		which corresponds to panels (j)--(l) of Figure \ref{fig-bif} zoomed near the origin in the $\lambda$-plane. The spectral band encloses the origin but gets the peanut shape for $c < c_0$ (left panel), reconnects at the origin at the figure-$\infty$ shape for $c = c_0$ (middle panel), 
		and splits into two circular bubbles along the real axis for $c > c_0$ (right panel), where $c_0 \approx 1.09213793$ is found from the root of $\mathcal{H}'(c)$. The dashed lines show the approximations obtained from the normal form~\eqref{normal-form-three}.}
	\label{fig:jkl1}
\end{figure}

\begin{figure}[htb!]
	\centering
	\includegraphics[width=0.95\linewidth]{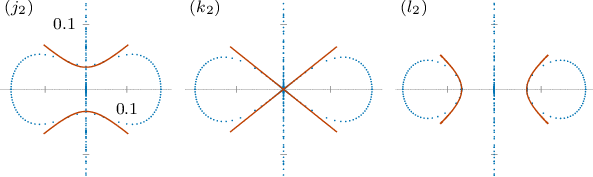}
	\caption{The same as Figure \ref{fig:jkl1} but for the second instance 
		of the figure-$\infty$ bifurcation, which happens when $\mathcal{H}'(c_0) = 0$ for $c_0 \approx 1.09228678$.}
	\label{fig:jkl2}
\end{figure}

\section{New figure-$8$ bands arising from the origin}
\label{sec-6}

The corresponding  bifurcation is seen in panels (m)--(o) of Figure \ref{fig-bif}.
For simplicity, we call it the figure-$8$ bifurcation.
Before the bifurcation [panel (m)], there are no circular bands around the origin. Past the bifurcation [panels (n) and (o)], new figure-$8$ bands appear from the origin and grow in size.

Section \ref{sec-6-1} develops the criterion for this bifurcation. Section \ref{sec-6-2} gives the derivation of the normal form, 
which explains how the bifurcation is unfolded. Section \ref{sec-6-3} describes 
numerical approximations of the coefficients of the normal form, 
from which we are able to reproduce the bifurcation pattern seen in panels (m)--(o) of Figure \ref{fig-bif}.

\subsection{Criterion of the bifurcation} 
\label{sec-6-1}

This bifurcation is due to the double zero eigenvalue of the linear operator 
$$
\mathcal{L} : H^1_{\rm per} \subset L^2_{\rm per} \to L^2_{\rm per},
$$
for which Assumption \ref{ass-kernel} is no longer considered to be true. 
This corresponds to the fold bifurcation in the existence of the traveling waves 
with the profile $\eta \in C^{\infty}_{\rm per}$.
The dependence of wave speed $c$ versus wave steepness $s$ has an extremum (either maximum or minimum). We replace Assumption \ref{ass-kernel} with the opposite non-generic assumption for the fold bifurcation.

\begin{assumption}
\label{ass-fourth-bif}
For the given value of $c \in (1,c_*)$, there exists an eigenfunction $v_0 \in H^1_{\rm per}$ (defined up to the scalar multiplication), such that 
$\mathcal{L} v_0 = 0$, $v_0$ is even in $u$, and $\langle 1, v_0 \rangle \neq 0$.
\end{assumption}

Due to Assumption \ref{ass-fourth-bif}, we have 
${\rm Ker}(\mathcal{L}) = {\rm span}(\eta',v_0)$ so that the decompositions (\ref{kernel}) and (\ref{kernel-generalized}) are no longer true. 
Assumption \ref{ass-fourth-bif} implies that the spectral problem (\ref{mod-Bab-eq-limit}) for $\lambda = 0$ admits a three-dimensional 
kernel spanned as 
\begin{equation}
\label{kernel-extended}
\left( \begin{array}{c} \hat{v}_0 \\ \hat{w}_0 \end{array} \right) = a_1 \left( \begin{array}{c} \eta' \\ 0 \end{array} \right) + 
a_2 \left( \begin{array}{c} v_0 \\ 0 \end{array} \right) + a_3 \left( \begin{array}{c} 0 \\ 1 \end{array} \right).
\end{equation}
where $(a_1,a_2,a_3) \in \mathbb{R}^3$. 

To construct the generalized eigenfunctions and to determine the algebraic multiplicity of the eigenvalue $\lambda = 0$ in the spectral problem (\ref{mod-Bab-eq-limit}), we use the following lemma. 

\begin{lemma}
	\label{lem-tech}
	Let Assumption \ref{ass-fourth-bif} be true. Then, we have 
	$$
	\langle \eta, v_0 \rangle \neq 0 \quad \mbox{\rm and} \quad 
	\langle \mathcal{K} \eta, v_0 \rangle \neq 0.
	$$
\end{lemma}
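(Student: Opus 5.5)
The plan is to test the kernel element $v_0$ against the explicit images of $1$ and $\eta$ under $\mathcal{L}$, using self-adjointness of $\mathcal{L}$ in $L^2_{\rm per}$. Since $\mathcal{L}v_0 = 0$ and both $1$ and $\eta$ belong to $H^1_{\rm per} = {\rm Dom}(\mathcal{L})$, we have
$$
\langle \mathcal{L} 1, v_0 \rangle = \langle 1, \mathcal{L} v_0 \rangle = 0, \qquad \langle \mathcal{L} \eta, v_0 \rangle = \langle \eta, \mathcal{L} v_0 \rangle = 0.
$$
Nothing about parity of $v_0$ is actually needed for these steps; only $\langle 1, v_0\rangle \neq 0$ from Assumption \ref{ass-fourth-bif} enters. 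Without loss of generality $v_0$ is real, since $\mathcal{L}$ has real coefficients.

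First step: insert (\ref{L-on-1}), $\mathcal{L}1 = -(1+2\mathcal{K}\eta)$, into the first identity. This gives
$$
\langle 1, v_0 \rangle + 2 \langle \mathcal{K}\eta, v_0 \rangle = 0,
$$
so that $\langle \mathcal{K}\eta, v_0\rangle = -\tfrac12 \langle 1, v_0\rangle$. This is nonzero by Assumption \ref{ass-fourth-bif}.

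Second step: insert (\ref{L-on-eta}), $\mathcal{L}\eta = \eta - c^2\mathcal{K}\eta$, into the second identity. This gives
$$
\langle \eta, v_0 \rangle = c^2 \langle \mathcal{K}\eta, v_0 \rangle = -\tfrac{c^2}{2}\langle 1, v_0\rangle,
$$
which is nonzero since $c > 1$ and $\langle 1, v_0\rangle \neq 0$.

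There is no serious obstacle. The only point to check is that (\ref{L-on-1}) and (\ref{L-on-eta}) are pure algebraic consequences of the definition (\ref{linBop}) and of Babenko's equation (\ref{trav-Bab-eq}). In particular they do not rely on Assumption \ref{ass-kernel} or on differentiability in $c$ (Lemma \ref{lem-diff}), both of which fail at the fold point. Indeed (\ref{L-on-1}) uses only $\mathcal{K}1=0$, and (\ref{L-on-eta}) uses only (\ref{trav-Bab-eq}), so both remain valid under Assumption \ref{ass-fourth-bif}.
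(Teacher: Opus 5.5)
Your proof is correct and is essentially the paper's argument. The paper obtains the same two constraints, $\langle 1+2\mathcal{K}\eta, v_0\rangle = 0$ and $\langle c^2\mathcal{K}\eta-\eta, v_0\rangle = 0$, by taking the mean of $\mathcal{L}v_0=0$ and pairing it with $\eta$ via Babenko's equation; its remark right after the lemma notes that these also follow from (\ref{L-on-1}), (\ref{L-on-eta}) and self-adjointness of $\mathcal{L}$, which is exactly your route.
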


\begin{proof}
We rewrite $\mathcal{L} v_0 = 0$ for $v_0 \in H^1_{\rm per}$ as 
	\begin{equation}
	\label{eigenfunction-L}
	c^2 \mathcal{K} v_0 - v_0 = (\mathcal{K} \eta) v_0 + \eta \mathcal{K} v_0 + \mathcal{K} (\eta v_0).
	\end{equation}
	Taking the mean value of (\ref{eigenfunction-L}) gives the constraint 
	\begin{equation}
	\label{v_0-constraint-1}
	\langle 1 + 2 \mathcal{K} \eta, v_0 \rangle = 0.
	\end{equation}
Since $\langle 1, v_0 \rangle \neq 0$ by Assumption \ref{ass-fourth-bif}, 
we have $2 \langle \mathcal{K} \eta, v_0 \rangle = -\langle 1, v_0 \rangle \neq 0$. Taking the inner product of (\ref{eigenfunction-L}) with $\eta$ and using (\ref{trav-Bab-eq}) gives another constraint 
	\begin{equation}
	\label{v_0-constraint-2}
	\langle c^2 \mathcal{K} \eta - \eta, v_0 \rangle = 0.
	\end{equation}
Since $\langle \mathcal{K} \eta, v_0 \rangle \neq 0$, we have 
$\langle \eta, v_0 \rangle = c^2 \langle \mathcal{K} \eta, v_0 \rangle \neq 0$.
\end{proof}

\begin{remark}
	Constraints (\ref{v_0-constraint-1}) and (\ref{v_0-constraint-2}) in Lemma \ref{lem-tech} also follow from identities (\ref{L-on-1}) and (\ref{L-on-eta}) 
	and the self-adjointness of $\mathcal{L}$ in $L^2_{\rm per}$. 
\end{remark}

Due to the constraints in Lemma \ref{lem-tech}, there exists only one linearly independent eigenfunction satisfying the linear inhomogeneous system (\ref{gen-Bab-eq}). It corresponds to $a_1 = a_2 = 0$, with the explicit 
solution replacing (\ref{kernel-generalized}) with 
\begin{equation}
\label{kernel-generalized-extended}
\left( \begin{array}{c} \hat{v}_g \\ \hat{w}_g \end{array} \right) =  a_3 \left[ \left( \begin{array}{c} -1 \\ 0 \end{array} \right) 
\pm i \left( \begin{array}{c} \eta' \\ -c \end{array} \right) \right].
\end{equation}
Therefore, the generalized kernel of the spectral problem (\ref{mod-Bab-eq-limit}) is exactly four-dimensional under Assumption \ref{ass-fourth-bif}. This implies particularly that the fold bifurcation does not imply the zero-eigenvalue bifurcation in the spectral problem (\ref{spec-Bab-eq}) for co-periodic perturbations, compared to the bifurcation defined by Assumption \ref{ass-third-bif}, see Remark \ref{rem-stab} and \cite{DP2025}.

We now formulate and prove the criterion for the corresponding bifurcation. 

\begin{theorem}
	\label{theorem-bif-4}
	Let Assumption \ref{ass-fourth-bif} be true. There exists two spectral bands of the modulational stability problem (\ref{mod-Bab-eq}) crossing $\lambda = 0$ according to the roots of the characteristic equation given by 
    \begin{equation}
        \label{char-eq-fourth}
        \left( \lambda - \frac{i \mu c}{2} \right)^2 = |\mu|^3 \left[ \mathcal{N}(c) - \frac{5c}{4} \mathcal{P}(c) \right] + \mathcal{O}(|\mu|^{7/2}),
    \end{equation}
    where $\mathcal{P}(c)$ and $\mathcal{N}(c)$ are defined by (\ref{wave-momentum}) and (\ref{wave-norm}) and $\mathcal{N}(c) - \frac{5c}{4} \mathcal{P}(c) \neq 0$ is assumed. In addition, 
    similar to Theorems \ref{theorem-bif-1} and \ref{theorem-bif-3}, there exist 
    two other spectral bands, which are subsets of $i \R$ and non-degenerate.
\end{theorem}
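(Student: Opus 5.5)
We will follow the template of Section \ref{sec-case-4} (two simple and one double Jordan blocks), with $\varepsilon$ identified with $\mu$. At $\mu=0$ the generalized kernel of (\ref{mod-Bab-eq-limit}) is four-dimensional, spanned by $(\eta',0)$, $(v_0,0)$, the eigenvector $(0,1)$ and its generalized eigenvector (\ref{kernel-generalized-extended}). So the normalized matrix is (\ref{eq:a0mat211}), where $(\eta',0)$ and $(v_0,0)$ play the role of $e_1,e_2$ and the pair $(0,1)$, (\ref{kernel-generalized-extended}) plays the role of $e_3,e_4$.

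The first step is to compute the projection matrix of the $\mathcal{O}(\mu)$ perturbation, exactly as in (\ref{system-tech})--(\ref{matrix-projections}). We use $\mathcal{L}_\pm\eta'=-\eta$, $\mathcal{H}_\pm 1=\pm i$, and the identities of Lemma \ref{lem-tech}: $\langle\eta,v_0\rangle=c^2\langle\mathcal{K}\eta,v_0\rangle$ and $2\langle\mathcal{K}\eta,v_0\rangle=-\langle1,v_0\rangle\neq0$. These give $b_{43}\neq0$, coming from $\langle1,\mathcal{H}_\pm 1\rangle$. The subspace $(0,1)$ is then treated verbatim as in the proof of Theorem \ref{theorem-bif-1}: the expansion (\ref{expansion-1-second-eps}) again yields $\lambda_1^2=-|\mu_2|$, i.e. two non-degenerate bands on $i\R$ of size $\mathcal{O}(|\mu|^{1/2})$. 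This settles the last assertion of the theorem.

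For the two-dimensional subspace spanned by $(\eta',0)$ and $(v_0,0)$ we use the Puiseux expansion $\mu=\epsilon\mu_1$, $\lambda=\epsilon\lambda_1+\epsilon^{3/2}\lambda_{3/2}+\dots$, with leading eigenfunction $a_1(\eta',0)+a_2(v_0,0)$. The fold means that $\partial_c\eta$ is unavailable, since Lemma \ref{lem-diff} fails. At order $\epsilon$ we therefore solve $\mathcal{L}\hat v_1=-2c\lambda_1\mathcal{H}\hat v_0+i\mu_1\mathcal{L}_\pm\hat v_0$ directly, and the Fredholm condition against $v_0$ now enters. By parity, $v_0$ is even while $\mathcal{H}\eta'$ and $\mathcal{H}v_0$ have opposite parity, so the $\lambda_1$ terms project to zero. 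The $\mu_1$ term gives $i\mu_1 a_1\langle v_0,-\eta\rangle$ plus a contribution from $a_2\langle v_0,\mathcal{L}_\pm v_0\rangle$, which carries $\pm i$ factors from the mean of $v_0$. Together with the condition against $1$ in the $\mathcal{K}$-equation, this should force $a_2$ to be slaved to $a_1$, with the $(0,1)$ component fixed as in (\ref{a-first}). The order-$\epsilon^2$ conditions should then give a quadratic $a_1\lambda_1^2+a_2'\lambda_1+a_3=0$ that turns out to be a perfect square with double root $\lambda_1=i\mu_1c/2$. The value $c/2$ would come from the replacement of $\mathcal{P}'$ and $\mathcal{N}'$ by $\langle1,v_0\rangle$-weighted quantities, where the ratio $\langle\eta,v_0\rangle/\langle\mathcal{K}\eta,v_0\rangle=c^2$ collapses the discriminant. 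We expect to verify the perfect square by direct computation with the Lemma \ref{lem-tech} identities.

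The degeneracy is then unfolded as in (\ref{eigenvalue-split-7}). We continue to orders $\epsilon^{5/2}$ and $\epsilon^3$; the $\epsilon^{5/2}$ condition is automatically satisfied because the root is double, as in the proof of Theorem \ref{theorem-bif-1-normal-form}. The $\epsilon^3$ solvability condition yields $\lambda_{3/2}^2=|\mu_1|^3 R$. The absolute value appears because the $\pm$ branches of $\mathcal{H}_\pm$ enter squared with sign ${\rm sgn}(\mu)$, as in (\ref{num-coeff-D}). To identify $R=\mathcal{N}-\frac{5c}{4}\mathcal{P}$ we use the relations $\mathcal{L}\eta=\eta-c^2\mathcal{K}\eta$ and $\mathcal{L}1=-(1+2\mathcal{K}\eta)$. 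We also use $c\mathcal{P}-\mathcal{N}=\frac32\langle\eta^2,\mathcal{K}\eta\rangle$ from the proof of Lemma \ref{lem-identity}. The inverses of $\mathcal{L}$ are taken on the orthogonal complement of $\{\eta',v_0\}$, where $\eta$ and $\mathcal{K}\eta$ are not in the range, so combinations such as $\eta-c^2\mathcal{K}\eta$ must be used.

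Justification follows the Newton polytope of Section \ref{sec-case-4}, giving the remainder $\mathcal{O}(|\mu|^{7/2})$. Once $R\neq0$, the two roots are distinct and persist by Lemma \ref{lem-mod-spectrum}. The main obstacle is this last step: showing that the $\epsilon^3$ coefficient reduces to the explicit combination $\mathcal{N}-\frac{5c}{4}\mathcal{P}$ without residual inner products involving $\mathcal{L}^{-1}$. We will attempt this by expressing every needed preimage through $\eta$, $1$ and $\mathcal{H}\eta'$-type identities such as (\ref{L-on-eta-prime}).
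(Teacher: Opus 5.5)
Your treatment of the $(0,1)$ subspace and the identification with the matrix (\ref{eq:a0mat211}) are fine. The $(\eta',v_0)$ part has a genuine gap, and it starts with the parity claim at order $\epsilon$. That claim is false: $\mathcal{H}\eta'=-\mathcal{K}\eta$ is even, like $v_0$, and $\langle v_0,\mathcal{H}\eta'\rangle=-\langle\mathcal{K}\eta,v_0\rangle\neq0$ by Lemma \ref{lem-tech}. Also, $\mathcal{H}_\pm v_0$ contains the constant $\pm i\langle1,v_0\rangle$.

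Set $D:=i\mu_1\langle\eta,v_0\rangle-2c\lambda_1\langle\mathcal{K}\eta,v_0\rangle$ and $E:=\mp(2ic\lambda_1+c^2\mu_1)|\langle1,v_0\rangle|^2\pm2\mu_1\langle\eta,v_0\rangle\langle1,v_0\rangle$. The order-$\epsilon$ Fredholm conditions against $\eta'$ and $v_0$ then read $a_2D=0$ and $-a_1D+a_2E=0$; this is the $(\eta',v_0)$ block of (\ref{matrix-projections-extended}). Since $(\eta',0)$ and $(v_0,0)$ carry no generalized eigenvectors, the $\lambda_1$-terms do not drop out. The determinant is $D^2$, so the double root $\lambda_1=\frac{ic\mu_1}{2}$ is already forced at this order. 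There is no quadratic at order $\epsilon^2$ as in Theorem \ref{theorem-bif-1}. Moreover, at this root $E=\pm2\mu_1\langle\eta,v_0\rangle\langle1,v_0\rangle\neq0$, so $a_2=0$. The leading eigenvector is therefore $(\eta',0)$ alone; $a_2$ is not slaved to a nonzero $a_1$.

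The missing idea is that the $v_0$-component must enter the eigenvector at order $\epsilon^{1/2}$, as $\epsilon^{1/2}a(v_0,0)$. Without it, the $v_0$-condition at order $\epsilon^{3/2}$ reduces to $2c\lambda_{3/2}\langle\mathcal{K}\eta,v_0\rangle=0$, which forces $\lambda_{3/2}=0$. With it, the same condition, using $2ic\lambda_1+c^2\mu_1=0$, slaves $a=\mp\lambda_{3/2}/(c\mu_1\langle1,v_0\rangle)$.

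The order-$\epsilon^2$ condition against $\eta'$ then closes the computation. It uses the explicit corrections $\hat v_1=-i\mu_1\eta+a_1v_0$ and $\hat w_1=\lambda_1\mathcal{H}\eta+b_1$, available from $\mathcal{L}\eta=\eta-c^2\mathcal{K}\eta$. It reads $\mu_1^2[\mathcal{N}(c)-\frac{5c}{4}\mathcal{P}(c)]-2ca\lambda_{3/2}\langle\mathcal{K}\eta,v_0\rangle=0$. Hence $\lambda_{3/2}^2=|\mu_1|^3[\mathcal{N}(c)-\frac{5c}{4}\mathcal{P}(c)]$, where the single sign in $a$ produces the $|\mu_1|$.

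So the computation closes at order $\epsilon^2$. Your orders $\epsilon^{5/2}$ and $\epsilon^3$ are one order too high; they are borrowed from Theorem \ref{theorem-bif-1-normal-form}, where $(\eta',0)$ does have a generalized eigenvector. The residual $\mathcal{L}^{-1}$ inner products you flag as the main obstacle never arise. Only projections onto $\eta'$ are needed. Skew-adjointness of $\mathcal{H}$ and $\mathcal{L}_\pm$, together with $\mathcal{M}_\pm\eta'=\eta'$, reduces them to $\langle\mathcal{K}\eta,\eta\rangle$ and $\langle\eta,\eta\rangle$.
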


\begin{proof}
		We revisit the proof of Theorem \ref{theorem-bif-1}, when Assumption \ref{ass-fourth-bif} replaces Assumption \ref{ass-kernel}. Substituting (\ref{kernel-extended}) into (\ref{mod-Bab-eq}) yields the residual terms 
$$
\left\{ \begin{array}{l} 
i \mu \mathcal{H}_{\pm} \hat{w}_0 + \lambda \mathcal{M}_{\pm} \hat{v}_0, \\
i \mu \mathcal{L}_{\pm} \hat{v}_0 + \lambda (\mathcal{M}_{\pm}^* \hat{w}_0 - 2 c \mathcal{H}_{\pm} \hat{v}_0).
\end{array} \right.
$$
Projecting the first equation to ${\rm Ker}(\mathcal{K}) = {\rm span}(1)$ in the first equation  and the second equation to ${\rm Ker}(\mathcal{L}) = {\rm span}(\eta',v_0)$  yields 
\begin{align}
\label{system-tech-extended}
\left\{ \begin{array}{l}
i \mu \langle 1, \mathcal{H}_{\pm} \hat{w}_0 \rangle + \lambda \langle 1, \mathcal{M}_{\pm} \hat{v}_0 \rangle,  \\
i\mu \langle \eta', \mathcal{L}_{\pm} \hat{v}_0 \rangle 
+ \lambda \langle \eta', (\mathcal{M}_{\pm}^* \hat{w}_0 - 2 c \mathcal{H}_{\pm} \hat{v}_0) \rangle, \\
i\mu \langle v_0, \mathcal{L}_{\pm} \hat{v}_0 \rangle 
+ \lambda \langle v_0, (\mathcal{M}_{\pm}^* \hat{w}_0 - 2 c \mathcal{H}_{\pm} \hat{v}_0) \rangle.
\end{array} \right.
\end{align}		
We substitute (\ref{kernel-extended}) into (\ref{system-tech-extended}) and obtain 
\begin{equation}
\label{matrix-projections-extended}
\left( \begin{matrix} 
0 & 0 & \mp \mu \\
0 & i \mu \langle \eta, v_0 \rangle - 2 c \lambda \langle \mathcal{K} \eta, v_0 \rangle & 0 \\
- i \mu \langle \eta, v_0 \rangle + 2 c \lambda \langle \mathcal{K} \eta, v_0 \rangle & 
\mp (2 i c \lambda + c^2 \mu) |\langle 1, v_0 \rangle |^2 \pm 2 \mu \langle \eta, v_0 \rangle \langle 1, v_0 \rangle &  0 
\end{matrix} \right) \left( \begin{matrix} a_1 \\ a_2 \\ a_3 \end{matrix} \right), 
\end{equation}
where we have used (\ref{H-mu}), (\ref{M-mu}), (\ref{L-plus-minus}), (\ref{L-on-eta-prime}), $\mathcal{L}_{\pm}^* = -\mathcal{L}_{\pm}$, and $\mathcal{M}_{\pm} \eta' = \eta'$, and $\langle v_0, \mathcal{M}^* 1 \rangle = \langle v_0, 1 + 2 \mathcal{K} \eta \rangle = 0$ due to (\ref{v_0-constraint-1}). 

Similarly to the system (\ref{matrix-projections}) in the proof of Theorem \ref{theorem-bif-1}, 
the system (\ref{matrix-projections-extended}) suggests two different scalings for the characteristic equations, 
one is related to $(a_1,a_2) = 0$, $a_3 \neq 0$ and the other one is related to $(a_1,a_2) \neq (0,0)$ and $a_3 = 0$. 
The subspace $(a_1,a_2) = 0$, $a_3 \neq 0$ is analyzed in exactly the same way as in the proof of Theorem \ref{theorem-bif-1} 
and it yields two spectral bands with the scaling $\lambda = \mathcal{O}(|\mu|^{1/2})$, which are subsets of $i \R$ and non-degenerate.
On the other hand,  the subspace $(a_1,a_2) \neq (0,0)$ and $a_3 = 0$ yields two spectral bands, which are degenerate at the leading order. It follows from (\ref{matrix-projections-extended}) that the two degenerate spectral bands are given by the leading-order approximation 
\begin{equation}
\label{double-eigenvalue}
\lambda \approx  \frac{i \mu \langle \eta, v_0 \rangle}{2c \langle \mathcal{K} \eta, v_0 \rangle} = \frac{i c \mu}{2}, 
\end{equation}
where we have used  (\ref{v_0-constraint-2}). If $\mu \neq 0$, the third equation in (\ref{matrix-projections-extended}) determines uniquely $a_2$ and the first equation in (\ref{matrix-projections-extended}) determines uniquely $a_3$ since $\pm 2 \mu \langle \eta, v_0 \rangle \langle 1, v_0 \rangle \neq 0$.
Hence, the characteristic equation for the two degenerate spectral bands is obtained from the second equation, where we can set $a_1 = 1$ without loss of generality. 
Based on the justification analysis in Section \ref{sec-case-4}, 
leading to the eigenvalue splitting formulas (\ref{eigenvalue-split-6}) and 
(\ref{eigenvalue-split-7}),  
we introduce the asymptotic expansions:
\begin{equation}
    \label{expansion-mu-4}
	\mu = \epsilon \mu_1, \quad \lambda = \epsilon \lambda_1 + \epsilon^{3/2} \lambda_{3/2} + \mathcal{O}(\epsilon^2)
\end{equation}
and 
	\begin{equation}
	\label{expansion-4}
	\left( \begin{array}{c} \hat{v} \\ \hat{w} \end{array} \right) = \left( \begin{array}{c} \eta' \\ 0 \end{array} \right) + \epsilon^{1/2} a \left( \begin{array}{c} v_0 \\ 0 \end{array} \right) + \epsilon \left( \begin{array}{c} \hat{v}_1 \\ \hat{w}_1 \end{array} \right) + \epsilon^{3/2} \left( \begin{array}{c} \hat{v}_{3/2} \\ \hat{w}_{3/2} \end{array} \right) + \epsilon^2 \left( \begin{array}{c} \hat{v}_2 \\ \hat{w}_2 \end{array} \right) + \mathcal{O}(\epsilon^{5/2}),
	\end{equation}
where $\epsilon > 0$ is a small parameter, $\mu_1 \neq 0$ is fixed, $a \in \mathbb{R}$ is arbitrary at the leading order, and the correction terms up to the order of $\mathcal{O}(\epsilon^2)$ are to be computed recursively in function space $H^1_{\rm per} \times H^1_{\rm per}$. 
		
\underline{\bf Order $\mathcal{O}(\epsilon)$:} We get the linear inhomogeneous system 
\begin{equation}
\left\{ \begin{array}{ll}
\mathcal{K} \hat{w}_1 &=  \lambda_1 \mathcal{M} \eta', \\
\mathcal{L} \hat{v}_1 &= -  2 c \lambda_1 \mathcal{H} \eta' + i \mu_1 \mathcal{L}_{\pm} \eta'.
\end{array}
\right. 
\label{level-two-last}
\end{equation}
Since $\mathcal{M} \eta' = \eta'$, $-\mathcal{H} \eta' = \mathcal{K} \eta$, and $\mathcal{L}_{\pm} \eta' = -\eta$, Fredholm conditions are satisfied for this system if and only if 
\begin{equation}
\label{double-eigenvalue-again}
\lambda_1 = \frac{i c \mu_1}{2},
\end{equation}
in agreement with (\ref{double-eigenvalue}). There exists a periodic solution of the system (\ref{level-two-last}) in the form	
	\begin{align}
	\left( \begin{array}{c} \hat{v}_1 \\ \hat{w}_1 \end{array} \right) =   \left( \begin{array}{c} -i \mu_1 \eta \\ \lambda_1 \mathcal{H} \eta \end{array} \right) + a_1 \left( \begin{array}{c} v_0 \\ 0 \end{array} \right) + b_1 \left( \begin{array}{c} 0 \\ 1 \end{array} \right), \label{fourthlevel-two}
	\end{align}
where we have introduced projections to the second and third elements of the kernel (\ref{kernel-extended}) 
with coordinates $a_1, b_1 \in \mathbb{R}$ and a solution $\eta = -\mathcal{L}^{-1} (c^2 \mathcal{K} \eta - \eta) \in H^1_{\rm per}$ from (\ref{L-on-eta}).
	
\underline{\bf Order $\mathcal{O}(\epsilon^{3/2})$:} We get the linear inhomogeneous system 
\begin{equation}
\left\{ \begin{array}{ll}
\mathcal{K} \hat{w}_{3/2} &=  \lambda_1 a \mathcal{M}_{\pm} v_0 + \lambda_{3/2} \mathcal{M} \eta', \\
\mathcal{L} \hat{v}_{3/2} &= - 2 c \lambda_1 a \mathcal{H}_{\pm} v_0 + i \mu_1 a \mathcal{L}_{\pm} v_0 - 2 c \lambda_{3/2} \mathcal{H} \eta'.
\end{array}
\right. 
\label{level-three-last}
\end{equation}
Since $\mathcal{M}_{\pm} v_0, \mathcal{M} \eta' \perp {\rm Ker}(\mathcal{K})$ due to the constraint (\ref{v_0-constraint-1}), there exists always a solution $\hat{w}_{3/2} \in H^1_{\rm per}$ to the first equation of the system (\ref{level-three-last}). Projection of the right-hand side of the second equation of the system  (\ref{level-three-last}) to $\eta'$ yields 
$$
\langle \eta', - 2 c \lambda_1 a \mathcal{H}_{\pm} v_0 + i \mu_1 a \mathcal{L}_{\pm} v_0 - 2 c \lambda_{3/2} \mathcal{H} \eta'\rangle 
= -2c \lambda_1 a \langle \mathcal{K} \eta, v_0 \rangle + i \mu_1 a \langle \eta, v_0 \rangle = 0, 
$$
where the last equality is due to the constraint (\ref{v_0-constraint-2}) 
and the definition of $\lambda_1$ in (\ref{double-eigenvalue-again}). Finally, projection of the right-hand side of the second equation 
of the system  (\ref{level-three-last}) to $v_0$ yields 
\begin{align*}
& \quad   
\langle v_0, - 2 c \lambda_1 a \mathcal{H}_{\pm} v_0 + i \mu_1 a \mathcal{L}_{\pm} v_0 - 2 c \lambda_{3/2} \mathcal{H} \eta'\rangle \\
&= \mp 2 i c \lambda_1 a |\langle 1, v_0 \rangle |^2 \mp \mu_1 a \left[ c^2 |\langle 1, v_0 \rangle |^2  - 2 \langle \eta, v_0 \rangle \langle 1, v_0 \rangle \right] + 2 c \lambda_{3/2} \langle \mathcal{K} \eta, v_0 \rangle = 0.
\end{align*}
In view of (\ref{double-eigenvalue-again}), we obtain a unique expression for the coefficient $a$ given by 
\begin{equation}
    \label{a-last}
a = \mp \frac{c \lambda_{3/2} \langle \mathcal{K} \eta, v_0 \rangle}{\mu_1 \langle \eta, v_0 \rangle \langle 1, v_0 \rangle } = \mp \frac{\lambda_{3/2}}{c \mu_1 \langle 1, v_0 \rangle },
\end{equation}
where we have used (\ref{v_0-constraint-2}) again. Since the three constraints are satisfied, there exists a periodic solution of the system (\ref{level-three-last}) in the form	
	\begin{align}
	\left( \begin{array}{c} \hat{v}_{3/2} \\ \hat{w}_{3/2} \end{array} \right) =  \lambda_{3/2} \left( \begin{array}{c} \tilde{v}_{3/2} \\ \tilde{w}_{3/2} \end{array} \right) + a_{3/2} \left( \begin{array}{c} v_0 \\ 0 \end{array} \right) + b_{3/2} \left( \begin{array}{c} 0 \\ 1 \end{array} \right), \label{fourthlevel-three}
	\end{align}
where $(\tilde{v}_{3/2},\tilde{w}_{3/2})$ are uniquely defined from (\ref{level-three-last}) with $a$ being defined by  (\ref{a-last}), whereas $a_{3/2}, b_{3/2} \in \R$ are renormalizations of the projections to the second and third element of the kernel (\ref{kernel-extended}). 

\underline{\bf Order $\mathcal{O}(\epsilon^2)$:} We get the linear inhomogeneous system 
\begin{equation}
\left\{ \begin{array}{ll}
\mathcal{K} \hat{w}_2 &=  \lambda_1 \mathcal{M}_{\pm} \hat{v}_1 + \lambda_{3/2} a \mathcal{M}_{\pm} v_0 + i \mu_1 \mathcal{H}_{\pm} \hat{w}_1, \\
\mathcal{L} \hat{v}_2 &= \lambda_1 \left( \mathcal{M}_{\pm}^* \hat{w}_1 - 2 c \mathcal{H}_{\pm} \hat{v}_1 \right) 
+ i \mu_1 \mathcal{L}_{\pm} \hat{v}_1 - 2 c \lambda_{3/2} a \mathcal{H}_{\pm} v_0.
\end{array}
\right. 
\label{level-four-last}
\end{equation}
Projection of the right-hand side of the first equation 
of the system  (\ref{level-four-last}) to $1$ yields 
\begin{align*}
 \langle 1, \lambda_1\mathcal{M}_{\pm} \hat{v}_1 + \lambda_{3/2} a \mathcal{M}_{\pm} v_0 + i \mu_1 \mathcal{H}_{\pm} \hat{w}_1 \rangle = -i \lambda_1 \mu_1 \langle 1, \mathcal{M}_{\pm} \eta \rangle    \mp \mu_1 b_1 = 0.
\end{align*}
This defines the coefficient $b_1$ uniquely by 
\begin{equation}
    \label{b-last}
    b_1 = \mp i \lambda_1 \langle 1 + 2 \mathcal{K} \eta, \eta \rangle
\end{equation}
since $\mu_1 \neq 0$ is assumed. Projection of the right-hand side of the second equation 
of the system  (\ref{level-four-last}) to $\eta'$ yields 
\begin{align*}
& \quad 
\langle \eta', \lambda_1 \left( \mathcal{M}_{\pm}^* \hat{w}_1 - 2 c \mathcal{H}_{\pm} \hat{v}_1 \right) 
+ i \mu_1 \mathcal{L}_{\pm} \hat{v}_1 - 2 c \lambda_{3/2} a \mathcal{H}_{\pm} v_0 \rangle  \\
&= \lambda_1^2 \langle \mathcal{K} \eta, \eta \rangle + 2 i c \mu_1 \lambda_1 \langle \mathcal{K} \eta, \eta \rangle + \mu_1^2 
\langle \eta, \eta \rangle + a_1 (-2 c \lambda_1 \langle \mathcal{K} \eta, v_0 \rangle + i \mu_1 \langle \eta, v_0 \rangle) 
- 2 c a \lambda_{3/2} \langle \mathcal{K} \eta, v_0 \rangle \\
&= - \mu_1^2  \left[  \langle (c^2 \mathcal{K} \eta - \eta, \eta \rangle + \frac{c^2}{4} \langle \mathcal{K} \eta, \eta \rangle \right]
- 2 c a \lambda_{3/2} \langle \mathcal{K} \eta, v_0 \rangle = 0,
\end{align*}
where the last equality is due to the constraint (\ref{v_0-constraint-2}) and the definition of $\lambda_1$ in (\ref{double-eigenvalue-again}). Since $a$ is defined by 
(\ref{a-last}) and the relation (\ref{v_0-constraint-1}) holds, this yields 
\begin{equation}
    \label{lambda-3}
\lambda_{3/2}^2 = \pm \mu_1^3 \left[ \mathcal{N}(c) - \frac{5c}{4} \mathcal{P}(c) \right], \quad \pm \mu_1 > 0,
\end{equation}
which is equivalent to the characteristic equation (\ref{char-eq-fourth}). 
Finally, projection of the right-hand side of the second equation of the system  (\ref{level-four-last}) to $v_0$ yields 
\begin{align*}
& \quad 
\langle v_0, \lambda_1 \left( \mathcal{M}_{\pm}^* \hat{w}_1 - 2 c \mathcal{H}_{\pm} \hat{v}_1 \right) 
+ i \mu_1 \mathcal{L}_{\pm} \hat{v}_1 - 2 c \lambda_{3/2} a \mathcal{H}_{\pm} v_0 \rangle  \\
&= \lambda_1^2 \langle v_0, \mathcal{M}_{\pm}^* \mathcal{H} \eta \rangle + 2 i c \mu_1 \lambda_1 \langle v_0, \mathcal{H}_{\pm} \eta \rangle + \mu_1^2 
\langle v_0, \mathcal{L}_{\pm} \eta \rangle \\
& \qquad 
\mp 2 i c \lambda_1 a_1 |\langle 1, v_0 \rangle |^2  \mp \mu_1 a_1 \left[ c^2 |\langle 1, v_0 \rangle |^2  - 2 \langle \eta, v_0 \rangle \langle 1, v_0 \rangle \right] \mp 2 i c \lambda_{3/2} a |\langle 1, v_0 \rangle |^2 = 0.
\end{align*}
In view of (\ref{double-eigenvalue-again}), (\ref{a-last}), and (\ref{lambda-3}), 
this equation yields a unique expression of $a_1$ in terms of $\mu_1$, which we do not write. 
Since the three constraints are satisfied, there exists a periodic solution of the system (\ref{level-four-last}). The asymptotic expansion (\ref{expansion-4}) 
is defined by (\ref{fourthlevel-two}) and (\ref{fourthlevel-three}). The two roots 
of the characteristic equation (\ref{char-eq-fourth}) are non-degenerate if 
$\mathcal{N}(c) - \frac{5c}{4} \mathcal{P}(c) \neq 0$.
\end{proof}

\begin{remark}
If $\mathcal{N}(c) - \frac{5c}{4} \mathcal{P}(c) < 0$, then the two non-degenerate spectral bands given by roots $\lambda^{\pm}(\mu)$ of the characteristic equation (\ref{char-eq-fourth}) satisfy $\lambda^{\pm}(\mu) \in i \mathbb{R}$ 
and $\lambda^+(\mu) \neq \lambda^-(\mu)$ for $\mu \neq 0$. In this case, there exists no figure-$8$ bands at the fold bifurcation point.
    \label{rem-comp}
\end{remark}

\begin{remark}
We can confirm $\mathcal{N}(c) - \frac{5c}{4} \mathcal{P}(c) < 0$ for the small-amplitude expansions (\ref{eta-small}) and (\ref{c-small}), for which 
    \begin{align*}
        \mathcal{N}(c) &= \frac{1}{2} a^2 + \mathcal{O}(a^4), \\
                \mathcal{P}(c) &= \frac{1}{2} a^2 + \mathcal{O}(a^4),
    \end{align*}
    which yields 
    $$
   \mathcal{N}(c) - \frac{5c}{4} \mathcal{P}(c) = -\frac{1}{8}  a^2 + \mathcal{O}(a^2). 
    $$
However, Assumption \ref{ass-fourth-bif} is not true in the small-amplitude limit because $\eta'(u) \sim \sin(u)$, $v_0(u) \sim \cos(u)$, and 
$\langle 1, v_0 \rangle = 0$. New figure-$8$ bands bifurcating in the small-amplitude limit are studied in Appendix \ref{app-A}, where it is shown 
that the normal form for the bifurcation is different from (\ref{char-eq-fourth}), see equation (\ref{fig-8-small}) also derived in \cite{berti2022full,creedon2022ahigh}.
    \label{rem-small-comp}
\end{remark}

\subsection{Normal form of the bifurcation}
\label{sec-6-2}

In order to obtain the figure-$8$ bands as a result of the bifurcation of Theorem \ref{theorem-bif-4}, 
we need to understand how the bifurcation in the family of traveling waves is unfolded for $c \neq c_0$. 
The following lemma gives the bifurcation result based on the standard method of Lyapunov--Schmidt reductions.

\begin{lemma}
\label{lem-fold}
    Let $c_0 \in (1,c_*)$ be the bifurcation point, under which Assumption \ref{ass-fourth-bif} holds with the profile $\eta_0 \in C^{\infty}_{\rm per}$. If $\langle \mathcal{K} v_0, v_0^2 \rangle \neq 0$, then 
    there exist two solutions of Babenko' equation (\ref{trav-Bab-eq}) for $c \neq c_0$ close to $c_0$ with 
\begin{equation}
    \label{LS-sign}
    {\rm sgn}(c-c_0) = {\rm sgn}\left( \langle \mathcal{K} v_0, v_0^2 \rangle \langle \mathcal{K} \eta_0, v_0 \rangle \right)
\end{equation}
    such that their profiles satisfy 
\begin{equation}
    \label{LS-bound}
\| \eta - \eta_0 \|_{H^1_{\rm per}} \leq C \sqrt{|c-c_0|},
\end{equation}
for a positive constant $C$ in a small neighborhood of $c = c_0$.
\end{lemma}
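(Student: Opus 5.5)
Our plan is a Lyapunov--Schmidt reduction of Babenko's equation (\ref{trav-Bab-eq}) restricted to the even subspace $H^1_{\rm per,e}$, near $(\eta_0,c_0)$. On even functions the translational mode $\eta'$ (odd) drops out. By Assumption \ref{ass-fourth-bif}, the kernel of $\mathcal{L}$ at $c_0$ in $H^1_{\rm per,e}$ is then exactly ${\rm span}(v_0)$. We write $F(\eta,c) := (c^2\mathcal{K}-1)\eta - \frac12\mathcal{K}\eta^2 - \eta\mathcal{K}\eta$, so that $D_\eta F(\eta_0,c_0) = \mathcal{L}$. We decompose
$$
\eta = \eta_0 + s v_0 + \phi, \qquad \langle v_0,\phi\rangle = 0,
$$
and let $Q$ denote the $L^2$-orthogonal projection onto $v_0^{\perp}$. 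Since $\mathcal{L}$ is self-adjoint and Fredholm with isolated eigenvalues (see \cite{BDT2000a,BDT2000b}), $Q\mathcal{L}$ is invertible on $v_0^\perp\cap H^1_{\rm per,e}$.

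The implicit function theorem then solves $QF(\eta_0+sv_0+\phi,c)=0$ for a smooth map $\phi=\phi(s,c)$ with $\phi(0,c_0)=0$ and $\partial_s\phi(0,c_0)=0$. This leaves the scalar bifurcation equation
$$
g(s,c):=\langle v_0, F(\eta_0+sv_0+\phi(s,c),c)\rangle = 0 .
$$
We compute its Taylor coefficients at $(0,c_0)$. First, $g=0$ and $\partial_s g=\langle v_0,\mathcal{L}v_0\rangle=0$. Next, $\partial_c g = \langle v_0, 2c_0\mathcal{K}\eta_0\rangle$, where the $\phi_c$ contribution vanishes because $\mathcal{L}v_0=0$; this is nonzero by Lemma \ref{lem-tech}. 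Finally, $\partial_s^2 g = \langle v_0, D^2_\eta F[v_0,v_0]\rangle = -\langle v_0, \mathcal{K}v_0^2 + 2v_0\mathcal{K}v_0\rangle$, again with no $\partial_s^2\phi$ contribution since it is paired against $\mathcal{L}v_0=0$. By self-adjointness of $\mathcal{K}$ this equals $-3\langle \mathcal{K}v_0, v_0^2\rangle$, which is nonzero by hypothesis.

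Because $\partial_c g\neq0$, the implicit function theorem gives $c=c(s)$ with $c(0)=c_0$ and $c'(0)=0$, and
$$
c - c_0 = \frac{3\langle \mathcal{K}v_0,v_0^2\rangle}{4c_0\langle\mathcal{K}\eta_0,v_0\rangle}\, s^2 + \mathcal{O}(s^3).
$$
This is a fold. For each $c$ on the appropriate side, there are two values $s=\pm\sqrt{|c-c_0|}\,(\kappa+\mathcal{O}(\sqrt{|c-c_0|}))$, which give two distinct even solutions. The relation ${\rm sgn}(c-c_0)={\rm sgn}(\langle\mathcal{K}v_0,v_0^2\rangle\langle\mathcal{K}\eta_0,v_0\rangle)$ follows since $c_0>0$, which is (\ref{LS-sign}). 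The bound $\|\phi\|_{H^1}\le C(s^2+|c-c_0|)$ holds because $\partial_s\phi(0,c_0)=0$. Together with $|s|\lesssim\sqrt{|c-c_0|}$, this yields (\ref{LS-bound}).

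The main point needing care is the functional setup. Smoothness of $F$ as a map $H^1_{\rm per,e}\times\mathbb{R}\to L^2_{\rm per,e}$ requires that $\mathcal{K}\eta^2$ and $\eta\mathcal{K}\eta$ be well behaved. We plan to work in $H^s$ with $s>3/2$, where $H^s$ is an algebra and $F:H^{s}\to H^{s-1}$ is polynomial, hence smooth; the $H^1$ estimate then follows a fortiori. Using bootstrapping as in Lemma \ref{lem-period-doubling}, the solutions are $C^\infty$. One further point is that the zero-mean constraint (\ref{zero-mean}) should not be imposed separately, since it follows from taking the mean of (\ref{trav-Bab-eq}), as in Lemma \ref{lem-tech}.
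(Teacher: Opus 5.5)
Your argument is correct and is essentially the paper's proof: a Lyapunov--Schmidt reduction in $H^1_{\rm per,e}$ onto ${\rm span}(v_0)$, using the same two nondegeneracy conditions, $\langle \mathcal{K}\eta_0, v_0\rangle \neq 0$ (Lemma \ref{lem-tech}) and $\langle \mathcal{K} v_0, v_0^2\rangle \neq 0$. The only difference is that the paper imposes the fold scaling from the start, writing $c^2 = c_0^2 + \delta\varepsilon^2$ and $\eta = \eta_0 + \varepsilon v_0 + \varepsilon^2 \mathfrak{v}$ and solving for $(\mathfrak{v},\delta)$, whereas you derive the scaling from $\partial_c g \neq 0$; your coefficient $\frac{3\langle \mathcal{K} v_0, v_0^2\rangle}{4 c_0 \langle \mathcal{K}\eta_0, v_0\rangle}$ agrees with the paper's $\delta_0/(2c_0)$.
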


\begin{proof}
    Let $c^2 = c_0^2 + \delta \varepsilon^2$, where $\delta \in \R$ is to be determined and $\varepsilon \in \mathbb{R}$ is a small parameter unrelated to $\epsilon$. 
    By using the Lyapunov--Schmidt decomposition, we represent an even smooth solution of Babenko's equation (\ref{trav-Bab-eq}) in the form:
\begin{equation}
    \label{LS-decomposition}
    \eta = \eta_0 + \varepsilon v_0 + \varepsilon^2 \mathfrak{v}, \quad \langle v_0, \mathfrak{v} \rangle = 0,
\end{equation}
    where $\mathfrak{v} \in H^1_{\rm per,e}(\mathbb{T}) = \{ f \in H^1_{\rm per} : \; f(-x) = f(x), \quad x \in \mathbb{T} \}$. Babenko's equation (\ref{trav-Bab-eq}) is rewritten in the perturbed form 
\begin{equation}
    \label{LS-1}
\mathcal{L}_0 \mathfrak{v} + \delta \mathcal{K} (\eta_0 + \varepsilon v_0 + \varepsilon^2 \mathfrak{v}) = \frac{1}{2} \mathcal{K} (v_0 + \varepsilon \mathfrak{v})^2 + (v_0 + \varepsilon \mathfrak{v}) \mathcal{K} (v_0 + \varepsilon \mathfrak{v}),
\end{equation}
where $\mathcal{L}_0$ is the linearized Babenko's operator for $c = c_0$ and $\eta = \eta_0$. Since ${\rm Ker}(\mathcal{L}_0) = {\rm span}(v_0)$ 
in $H^1_{\rm per,e}(\mathbb{T})$, we define $\delta$ from the bifurcation equation after projecting (\ref{LS-1}) to $v_0$:
\begin{equation}
    \label{LS-2}
\delta \langle \mathcal{K} (\eta_0 + \varepsilon v_0 + \varepsilon^2 \mathfrak{v}), v_0 \rangle = \frac{1}{2} 
\langle \mathcal{K} (v_0 + \varepsilon \mathfrak{v})^2, v_0 \rangle + \langle \mathcal{K} (v_0 + \varepsilon \mathfrak{v}), v_0 (v_0 + \varepsilon \mathfrak{v})  \rangle.
\end{equation}
By the implicit function theorem in $H^1_{\rm per,e}(\mathbb{T}) |_{\{ v_0 \}^{\perp}}$, there exists a unique 
solution of (\ref{LS-1}) under the constraint (\ref{LS-2}) for every small $\varepsilon \in \R$ such that $\mathfrak{v} \in H^1_{\rm per,e}(\mathbb{T}) |_{\{ v_0 \}^{\perp}}$ and 
$$
\| \mathfrak{v} \|_{H^1_{\rm per}} \leq C \varepsilon^2
$$ 
for some $C > 0$ in a small neighborhood of $\varepsilon = 0$. 
Substituting this solution into (\ref{LS-2}), we obtain 
\begin{equation}
    \label{delta}
\delta = \delta_0 + \mathcal{O}(\varepsilon), \qquad \delta_0 = \frac{3}{2} \frac{\langle \mathcal{K} v_0, v_0^2 \rangle}{\langle \mathcal{K} \eta_0, v_0 \rangle}.
\end{equation}
Since $\langle \mathcal{K} \eta_0, v_0 \rangle \neq 0$ by Assumption \ref{ass-fourth-bif} and 
$\langle \mathcal{K} v_0, v_0^2 \rangle \neq 0$ by the assumption of the lemma, 
we have $\delta_0 \neq 0$, which implies that $\delta \neq 0$ for every small $\varepsilon$. The bound (\ref{LS-bound}) follows from the decomposition (\ref{LS-decomposition}) 
and the bound on $\mathfrak{v} \in H^1_{\rm per,e}(\mathbb{T}) |_{\{ v_0 \}^{\perp}}$. 
Two possible solutions for $\varepsilon \neq 0$ exist for 
${\rm sgn}(c-c_0) = {\rm sgn}(\delta_0)$, which justifies the condition 
(\ref{LS-sign}).
\end{proof}

\begin{remark}
    The bifurcation in Lemma \ref{lem-fold} is referred to as the fold bifurcation. No solutions of Babenko' equation (\ref{trav-Bab-eq}) 
    for $c \neq c_0$ close to $c_0$ exist in a local neighborhood of $\eta_0 \in C^{\infty}_{\rm per}$ if the sign of $(c-c_0)$ is opposite to (\ref{LS-sign}).
\end{remark}

\begin{remark}
\label{rem-splitting}
    Expanding $\mathcal{L}$ at the profile $\eta$ in Lemma \ref{lem-fold} near $\varepsilon = 0$ yields 
\begin{equation}
    \label{expansion-L}
\mathcal{L} = \mathcal{L}_0 + \varepsilon \mathcal{L}_1 + \mathcal{O}(\varepsilon^2),
\end{equation}
where 
\begin{align*}
    \mathcal{L}_0 &= c_0^2 \mathcal{K} - 1 - \mathcal{K} \eta_0 - \eta_0 \mathcal{K} - \mathcal{K}(\eta_0 \cdot),  \\
    \mathcal{L}_1 &= -(\mathcal{K} v_0) - v_0 \mathcal{K} - \mathcal{K}(v_0 \cdot).
\end{align*}
The perturbation theory for the small eigenvalue $\lambda(\varepsilon)$ of $\mathcal{L}$ bifurcating from $\lambda(0) = 0$ at $\varepsilon = 0$ yields 
$$
\lambda(\varepsilon) = -3 \varepsilon \frac{\langle \mathcal{K} v_0, v_0^2 \rangle}{\| v_0 \|^2_{L^2}} + \mathcal{O}(\varepsilon^2).
$$
Since $\langle \mathcal{K} v_0, v_0^2 \rangle \neq 0$ and there are two roots of $\varepsilon$ with opposite signs for every $c \neq c_0$ near $c = c_0$ satisfying (\ref{LS-sign}), 
it is clear that one root corresponds to the small positive eigenvalue $\lambda(\varepsilon) > 0$ and the other root 
corresponds to the small negative eigenvalue $\lambda(\varepsilon) < 0$.
\end{remark}

\begin{remark}
	\label{rem-stab}
    Although $\mathcal{L}$ admits a new small eigenvalue for small $\varepsilon$, see Remark \ref{rem-splitting}, no new small eigenvalues near $\lambda = 0$ bifurcate in the spectral stability problem (\ref{spec-Bab-eq}) for small $\varepsilon$. This is because 
    $\langle \eta', \mathcal{H} v_0 \rangle = \langle \mathcal{K} \eta, v_0 \rangle \neq 0$ due to Assumption \ref{ass-fourth-bif} and Lemma \ref{lem-tech}. 
    The Fredholm condition for the generalized kernel is not satisfied for the subspace $(a_1,a_2)$ in (\ref{kernel-extended}) and no generalized eigenvectors 
    exist in this subspace. There is exactly one generalized eigenvector associated with the subspace $a_3$ as in (\ref{kernel-generalized-extended}). In other words, 
    the quadruple algebraic multiplicity of $\lambda = 0$ in the spectral problem (\ref{spec-Bab-eq}) persists  both for $c = c_0$ and $c \neq c_0$. The only difference between the two cases is the geometric multiplicity of $\lambda = 0$ in the spectral problem (\ref{spec-Bab-eq}): it is equal to three for $c = c_0$ and to two for $c \neq c_0$.
\end{remark}

Finally, we combine the results of Theorem \ref{theorem-bif-4} and Lemma \ref{lem-fold} to get the normal form for the figure-$8$ bifurcation.

\begin{theorem}
	\label{theorem-bif-4-normal-form}
	Let $c = c_0$ be the fold bifurcation point of Lemma \ref{lem-fold}, where Assumption \ref{ass-fourth-bif} holds. The characteristic equation for the two spectral bands crossing $\lambda = 0$ as in Theorem \ref{theorem-bif-4} is given by
    \begin{equation}
        \label{char-eq-fourth-normal-form}
        \left( \lambda - \frac{i \mu c_0}{2} \right)^2 = \mu^2 \left[ \mathcal{N}(c_0) - \frac{5c_0}{4} \mathcal{P}(c_0) \right] 
        \left[ |\mu| + \frac{\delta_0 \varepsilon}{c_0^2 \langle 1, v_0 \rangle} \right] + \mathcal{O}(|\mu|^{7/2}),
    \end{equation}
    where $c^2 = c_0^2 + \delta_0 \varepsilon^2 + \mathcal{O}(\varepsilon^3)$ with $\delta_0$ given by (\ref{delta}) and with $\varepsilon = \mathcal{O}(\mu)$ at the leading-order balance.
\end{theorem}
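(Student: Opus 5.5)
We will redo the Puiseux expansion from the proof of Theorem \ref{theorem-bif-4}, now at the perturbed profile of Lemma \ref{lem-fold}. For $c^2 = c_0^2 + \delta_0\varepsilon^2 + \mathcal{O}(\varepsilon^3)$ the profile is $\eta = \eta_0 + \varepsilon v_0 + \mathcal{O}(\varepsilon^2)$. The operator $\mathcal{L}$ splits as in (\ref{expansion-L}); the operators $\mathcal{M}_\pm$, $\mathcal{M}^*_\pm$, $\mathcal{L}_\pm$ pick up $\mathcal{O}(\varepsilon)$ corrections from the $\varepsilon v_0$ term. We impose the balance $\varepsilon = \epsilon\varepsilon_1$ with $\mu=\epsilon\mu_1$, so the $\varepsilon$-perturbation enters at the same order as the $\mu$-perturbation. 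We keep the ansatz (\ref{expansion-mu-4})--(\ref{expansion-4}): leading mode $(\eta',0)^T$, the $\epsilon^{1/2}a(v_0,0)^T$ term, and $\lambda=\epsilon\lambda_1+\epsilon^{3/2}\lambda_{3/2}+\dots$. Since $\eta'$ is still exactly in ${\rm Ker}(\mathcal{L})$ for every $\varepsilon$, the translational mode is unaffected. The only new mechanism is that $v_0$ is no longer in the kernel of $\mathcal{L}$.

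At order $\mathcal{O}(\epsilon)$ the Fredholm conditions are unchanged from (\ref{level-two-last}), and they give again $\lambda_1 = ic_0\mu_1/2$. At order $\mathcal{O}(\epsilon^{3/2})$ a new term $\varepsilon_1 a\,\mathcal{L}_1 v_0$ appears in the second equation of (\ref{level-three-last}). Its projection onto $\eta'$ needs checking, but we expect it to vanish: differentiating $\mathcal{L}\eta'=0$ along the branch gives $\langle \eta',\mathcal{L}_1 v_0\rangle = -\langle \eta_0'',\mathcal{L}_0 v_0\rangle$-type terms, and these are zero. Its projection onto $v_0$ equals $\varepsilon_1 a\langle v_0,\mathcal{L}_1 v_0\rangle$, which is $-3\varepsilon_1 a\langle \mathcal{K}v_0,v_0^2\rangle$ by Remark \ref{rem-splitting}. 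By (\ref{delta}) this is $-2\varepsilon_1 a\,\delta_0\langle \mathcal{K}\eta_0,v_0\rangle$. Adding it to the $v_0$-projection computed in the proof of Theorem \ref{theorem-bif-4} modifies (\ref{a-last}) to
$$
a\Bigl[\pm c_0\mu_1 \langle \eta_0,v_0\rangle\langle 1,v_0\rangle\cdot\tfrac{1}{\cdot} + \dots - 2\varepsilon_1\delta_0\langle \mathcal{K}\eta_0,v_0\rangle\Bigr] = -2c_0\lambda_{3/2}\langle\mathcal{K}\eta_0,v_0\rangle .
$$
Using (\ref{v_0-constraint-1})--(\ref{v_0-constraint-2}) to express everything through $\langle \mathcal{K}\eta_0,v_0\rangle = -\tfrac12\langle 1,v_0\rangle$, we get $a = \mp\lambda_{3/2}\big/\bigl(c_0\langle 1,v_0\rangle\,[\mu_1 \pm \delta_0\varepsilon_1/(c_0^2\langle 1,v_0\rangle)\cdot(\ldots)]\bigr)$. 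Concretely, the factor $\mu_1$ in the denominator of (\ref{a-last}) is replaced by $\mu_1 + {\rm sgn}(\mu_1)\,\delta_0\varepsilon_1/(c_0^2\langle 1,v_0\rangle)$, up to a sign convention we will track carefully.

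At order $\mathcal{O}(\epsilon^2)$ we project the second equation onto $\eta'$, exactly as in the derivation of (\ref{lambda-3}). Besides the $\mu_1^2$ block, which produces $-\mu_1^2[\mathcal{N}-\tfrac{5c_0}{4}\mathcal{P}]$ after the identities of Lemma \ref{lem-identity}, we must also collect $\varepsilon_1$-dependent cross terms. These come from the $\mathcal{O}(\varepsilon)$ corrections of $\mathcal{M}^*_\pm$, $\mathcal{H}$-terms and $\mathcal{L}_\pm$ acting on $\eta'$ and on $\hat v_1,\hat w_1$. We expect all of them to vanish by parity: $v_0$ is even, $\eta'$ is odd, and $\langle\eta',\cdot\rangle$ kills even functions. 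The surviving relation then reads $2c_0 a\lambda_{3/2}\langle\mathcal{K}\eta_0,v_0\rangle = -\mu_1^2[\mathcal{N}(c_0)-\tfrac{5c_0}{4}\mathcal{P}(c_0)]$. Substituting the modified $a$ yields
$$
\lambda_{3/2}^2 = \mu_1^2\Bigl[\mathcal{N}(c_0)-\tfrac{5c_0}{4}\mathcal{P}(c_0)\Bigr]\Bigl[|\mu_1| + \tfrac{\delta_0\varepsilon_1}{c_0^2\langle 1,v_0\rangle}\Bigr].
$$
Multiplying by $\epsilon^3$ and reverting to $\mu$, $\varepsilon$, $\lambda$ gives (\ref{char-eq-fourth-normal-form}). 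The remainder is $\mathcal{O}(|\mu|^{7/2})$, and the Newton polytope argument of Section \ref{sec-case-4} justifies the balance, with the $R(\lambda_2^*)$ term now depending linearly on $\varepsilon_1$.

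The main obstacle is the bookkeeping of the $\mathcal{O}(\varepsilon)$ corrections at order $\epsilon^{3/2}$ and $\epsilon^2$. We must confirm that the only non-vanishing new contribution is $\langle v_0,\mathcal{L}_1v_0\rangle$, and fix its sign and normalisation relative to $\delta_0$ and $\langle 1,v_0\rangle$. To handle this, we will first verify the vanishing of the $\eta'$-projections by parity, and then check the resulting coefficient against the requirement that $\varepsilon\to0$ recovers (\ref{char-eq-fourth}).
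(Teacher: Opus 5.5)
Your proposal follows the paper's proof essentially step by step. It uses the same ansatz (\ref{expansion-mu-4})--(\ref{expansion-4}) at the perturbed profile with $\varepsilon$ of order $\epsilon$ (the paper sets $\varepsilon=\sigma\epsilon$, $\sigma=\pm1$). The only new contribution at order $\epsilon^{3/2}$ is $\langle v_0,\mathcal L_1 v_0\rangle=-3\langle\mathcal K v_0,v_0^2\rangle$, which replaces $|\mu_1|$ by $|\mu_1|+\delta_0\varepsilon_1/(c_0^2\langle 1,v_0\rangle)$ in $a$. The $\eta'$-projection at order $\epsilon^2$ is unchanged from Theorem \ref{theorem-bif-4}.

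Fix three bookkeeping points when you write it out:
\begin{enumerate}
\item The order-$\epsilon^2$ projection gives $2c_0a\lambda_{3/2}\langle\mathcal K\eta_0,v_0\rangle=+\mu_1^2\bigl[\mathcal N(c_0)-\tfrac{5c_0}{4}\mathcal P(c_0)\bigr]$. Your minus sign, combined with your (correct) modified $a$, would flip the sign of $\lambda_{3/2}^2$ and contradict your final line.
\item That $\mu_1^2$ block needs only the definitions (\ref{wave-momentum})--(\ref{wave-norm}) and $\lambda_1=ic_0\mu_1/2$, not Lemma \ref{lem-identity}; in particular (\ref{identity1}) is unavailable at the fold.
\item If you expand around $\eta_0'$ as the paper does, the odd $\mathcal O(\varepsilon_1^2)$ terms $\langle\eta_0',\mathcal L_1v_0'+\mathcal L_2\eta_0'\rangle$ are not killed by parity. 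They vanish by the second-order expansion of $\mathcal L\eta'=0$, and your remark that the translational mode is unaffected should be made to mean exactly this.
\end{enumerate}
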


\begin{proof}
We use (\ref{expansion-mu-4}) and (\ref{expansion-4}) for $(\mu,\lambda) \in \R \times \mathbb{C}$ near $(0,0)$, 
and $(\hat{v},\hat{w}) \in H^1_{\rm per} \times H^1_{\rm per}$. 
Since the profile $\eta$ is also expanded for $c \neq c_0$, we use (\ref{LS-decomposition}) with 
$\varepsilon = \sigma \epsilon$, where $\epsilon > 0$ and $\sigma = \pm 1$ incorporate the different signs 
of $\varepsilon$ at the two branches. By Lemma \ref{lem-fold}, we also have $c^2 = c_0^2 + \delta \epsilon^2$ 
with $\delta = \delta_0 + \mathcal{O}(\varepsilon)$ given by (\ref{delta}). Due to the perturbation expansion (\ref{LS-decomposition}), we 
should use the expansion (\ref{expansion-L}) for $\mathcal{L}$ and similar expansions for $\mathcal{M}_{\pm}$, 
$\mathcal{M}^*_{\pm}$, and $\mathcal{L}_{\pm}$ (which do not affect the outcomes of computations).

\underline{\bf Order $\mathcal{O}(\epsilon)$:} We get the following modification of the linear inhomogeneous system (\ref{level-two-last}):
\begin{equation}
\left\{ \begin{array}{ll}
\mathcal{K} \hat{w}_1 &=  \lambda_1 \mathcal{M} \eta'_0, \\
\mathcal{L}_0 \hat{v}_1 + \sigma \mathcal{L}_1 \eta'_0 &= -  2 c_0 \lambda_1 \mathcal{H} \eta'_0 + i \mu_1 \mathcal{L}_{\pm} \eta'_0,
\end{array}
\right. 
\label{level-two-last-P}
\end{equation}
where $\sigma = \pm 1$ is due to $\varepsilon = \pm \epsilon$. Since 
$$
\mathcal{L}_0 v_0' + \mathcal{L}_1 \eta_0' = 0
$$
follows from the first order in $\mathcal{O}(\varepsilon)$ of $\mathcal{L} \eta' = 0$, 
the solution to the system (\ref{level-two-last-P}) is modified compared to (\ref{fourthlevel-two}) as follows:
	\begin{align*}
	\left( \begin{array}{c} \hat{v}_1 \\ \hat{w}_1 \end{array} \right) =   \left( \begin{array}{c} -i \mu_1 \eta_0 \\ \lambda_1 \mathcal{H} \eta_0 \end{array} \right) + a_1 \left( \begin{array}{c} v_0 \\ 0 \end{array} \right) + b_1 \left( \begin{array}{c} 0 \\ 1 \end{array} \right) + \sigma 
    \left( \begin{array}{c} v_0' \\ 0 \end{array} \right).
	\end{align*}
In addition, we recall that $\lambda_1 = \frac{i c_0 \mu_1}{2}$ from (\ref{double-eigenvalue-again}).
	
\underline{\bf Order  $\mathcal{O}(\epsilon^{3/2})$:} We get  the following modification of the linear inhomogeneous system (\ref{level-three-last}):
\begin{equation}
\left\{ \begin{array}{ll}
\mathcal{K} \hat{w}_{3/2} &=  \lambda_1 a \mathcal{M}_{\pm} v_0 + \lambda_{3/2} \mathcal{M} \eta'_0, \\
\mathcal{L} \hat{v}_{3/2} + a \sigma \mathcal{L}_1 v_0 &= - 2 c_0 \lambda_1 a \mathcal{H}_{\pm} v_0 + i \mu_1 a \mathcal{L}_{\pm} v_0 - 2 c_0 \lambda_{3/2} \mathcal{H} \eta'_0.
\end{array}
\right. 
\label{level-three-last-P}
\end{equation}
The first equation of the system (\ref{level-three-last-P}) is the same as in (\ref{level-three-last}) with the same solution. 
Since $\mathcal{L}_1 v_0$ is even, there is no contribution of the new term of the second equation of the system  (\ref{level-three-last}) 
to the orthogonality condition with respect to $\eta'$. However, we get a modification of the 
orthogonality condition with respect to $v_0$ as follows: 
\begin{align*}
& \quad   
\langle v_0, - 2 c_0 \lambda_1 a \mathcal{H}_{\pm} v_0 + i \mu_1 a \mathcal{L}_{\pm} v_0 - 2 c_0 \lambda_{3/2} \mathcal{H} \eta_0' - a \sigma  \mathcal{L}_1 v_0 \rangle \\
&= \mp 2 i c_0 \lambda_1 a |\langle 1, v_0 \rangle |^2 \mp \mu_1 a \left[ c_0^2 |\langle 1, v_0 \rangle |^2  - 2 \langle \eta_0, v_0 \rangle \langle 1, v_0 \rangle \right] + 2 c_0 \lambda_{3/2} \langle \mathcal{K} \eta_0, v_0 \rangle + 3 a \sigma \langle \mathcal{K} v_0, v_0^2 \rangle 
= 0.
\end{align*}
Similarly to (\ref{a-last}), we use $\lambda_1 = \frac{i c_0 \mu_1}{2}$ and obtain a unique expression for the coefficient $a$ given by 
\begin{equation}
    \label{a-last-P}
a = -\frac{c_0 \lambda_{3/2} \langle \mathcal{K} \eta_0, v_0 \rangle}{|\mu_1| \langle \eta_0, v_0 \rangle \langle 1, v_0 \rangle + \frac{3}{2} \sigma \langle \mathcal{K} v_0, v_0^2 \rangle}
= -\frac{\lambda_{3/2}}{c_0 |\mu_1| \langle 1, v_0 \rangle + \sigma \delta_0},
\end{equation}
where we have used (\ref{delta}). Since the three constraints are satisfied, 
there exists a periodic solution of the system (\ref{level-three-last-P}), which does not affect 
computations at the next order. 

\underline{\bf Order  $\mathcal{O}(\epsilon^2)$:} We get the following modification of the linear inhomogeneous system (\ref{level-four-last}):
\begin{equation*}
\left\{ \begin{array}{ll}
\mathcal{K} \hat{w}_2 &=  \lambda_1 \mathcal{M}_{\pm} \hat{v}_1 + \lambda_{3/2} a \mathcal{M}_{\pm} v_0 + i \mu_1 \mathcal{H}_{\pm} \hat{w}_1 + \lambda_1 \sigma \mathcal{M}_{\pm}^{(1)} \eta_0', \\
\mathcal{L} \hat{v}_2 + \mathcal{L}_1 \hat{v}_1 + \mathcal{L}_2 \eta_0' &= \lambda_1 \left( \mathcal{M}_{\pm}^* \hat{w}_1 - 2 c_0 \mathcal{H}_{\pm} \hat{v}_1 \right) 
+ i \mu_1 \mathcal{L}_{\pm} \hat{v}_1 - 2 c_0 \lambda_{3/2} a \mathcal{H}_{\pm} v_0 + i \mu_1 \sigma \mathcal{L}_{\pm}^{(1)} \eta_0',
\end{array}
\right. 
\end{equation*}
where $\mathcal{L}_{\pm}^{(1)}$ and $\mathcal{M}_{\pm}^{(1)}$ appear in the expansion of $\mathcal{L}_{\pm}$ and $\mathcal{M}_{\pm}$ at the first order of  $\mathcal{O}(\varepsilon)$. Since $\mathcal{M}_{\pm}^{(1)} \eta_0'$ is odd, it does not contribute to the solvability condition for the first 
equation of the system so that the coefficient $b_1$ is still uniquely defined by the expression (\ref{b-last}). 

The second equation of the system is solved under two orthogonality conditions with respect to $\eta_0'$ and $v_0$. 
Recall that the orthogonality condition to $v_0$ defines uniquely the parameter $a_1$, hence it will not be written. 
Finally, the orthogonality condition to $\eta_0'$ is not affected by the term $\mathcal{L}_{\pm}^{(1)} \eta_0'$ since it is even, 
whereas $\eta_0'$ is odd. Similarly, it follows from the second order in $\mathcal{O}(\varepsilon^2)$ of $\mathcal{L} \eta' = 0$ 
that 
$$
\mathcal{L}_0 \mathfrak{v}_0' + \mathcal{L}_1 v_0' + \mathcal{L}_2 \eta_0' = 0,
$$
where $\mathfrak{v}_0$ is the leading order in the expansion (\ref{LS-decomposition}). This implies that 
the odd part of $\mathcal{L}_1 \hat{v}_1 + \mathcal{L}_2 \eta_0'$ does not contribute to the ortogonality to $\eta_0'$. 
The even part of $\mathcal{L}_1 \hat{v}_1 + \mathcal{L}_2 \eta_0'$ does not contribute either, due to 
the different parity with the parity of $\eta_0'$. As a result, we obtain the same orthogonality condition 
as in the proof of Theorem \ref{theorem-bif-4}:
\begin{align*}
- \mu_1^2  \left[  \frac{5 c_0}{4} \mathcal{P}(c_0) - \mathcal{N}(c_0) \right]
- 2 c_0 a \lambda_{3/2} \langle \mathcal{K} \eta_0, v_0 \rangle = 0.
\end{align*}
Substituting (\ref{a-last-P}) yields 
\begin{equation*}
\lambda_{3/2}^2 = \mu_1^2 \left[ \mathcal{N}(c_0) - \frac{5c_0}{4} \mathcal{P}(c_0) \right] \left[  |\mu_1| + \frac{\sigma \delta_0}{c_0 \langle 1, v_0 \rangle}
\right],
\end{equation*}
which is equivalent to the characteristic equation (\ref{char-eq-fourth}) due to the expansion (\ref{expansion-mu-4}) and $c^2 = c_0^2 + \varepsilon^2 \delta_0 + \mathcal{O}(\varepsilon^3)$ with $\varepsilon = \sigma \epsilon$.
\end{proof}

\begin{remark}
	\label{remark-split}
New figure-$8$ bands appear past the bifurcation point $c = c_0$ if
 $$
 \mathcal{N}(c_0) - \frac{5c_0}{4} \mathcal{P}(c_0) < 0
 $$   
see Remark \ref{rem-comp}. If $c = c_0$, both spectral bands in the characteristic equation (\ref{char-eq-fourth-normal-form}) satisfy
$\lambda^{\pm}(\mu) \in i \mathbb{R}$ and $\lambda^+(\mu) \neq \lambda^-(\mu)$ for $\mu \neq 0$. Since the sign of $\varepsilon$ in Lemma \ref{lem-fold} is opposite between the two branches near the fold bifurcation, one branch for $c \neq c_0$ still has $\lambda^{\pm}(\mu) \in i \mathbb{R}$ and $\lambda^+(\mu) \neq \lambda^-(\mu)$ for $\mu \neq 0$, whereas the other branch has the new figure-$8$ bands with the maximum corresponding to 
$$
\mu = \frac{|\delta_0 \varepsilon|}{c_0^2 |\langle 1, v_0 \rangle|} + 
 \mathcal{O}(|\varepsilon|^{3/2}).
$$
\end{remark}

\subsection{Numerical results}
\label{sec-6-3}

The criterion for the bifurcation is a double zero eigenvalue of $\mathcal{L}$ in $L^2_{\rm per}$ which coincides with the extremal point of the speed $c$ versus steepness $s$ shown in vertical line of Figure \ref{fig:delta}. We compute 
eigenvalues of $\mathcal{L}$ in $L^2_{\rm per}$ and show them in Figure \ref{fig:beigs} by green lines. At the green circles, there exists an even 
eigenfunction $v_0 \in H^1_{\rm per}$ in addition to the odd eigenfunction 
$\eta' \in H^1_{\rm per}$.  The numerical solution of eigenvalue problem is obtained via shift-and-invert method, see ~\cite{dyachenko2023quasiperiodic}. Once $v_0$ has been determined, we compute the parameter $\varepsilon$ and the average value $\langle 1, v_0\rangle$ to verify that 
it is nonzero according to the Assumption \ref{ass-fourth-bif}. 

The corresponding steepness $s_0$ and speed $c_0$ for the figure-8
bifurcation are recorded in Table \ref{tab:fold}, where we also show
the coefficients $\langle 1, v_0 \rangle$, $\delta_0$, and $\mathcal{N}(c_0) - \frac{5 c_0}{4} \mathcal{P}(c_0)$ of the normal form~\eqref{char-eq-fourth-normal-form}. We note that  $\langle 1, v_0\rangle \neq 0$ and $\mathcal{N}(c_0) - \frac{5 c_0}{4} \mathcal{P}(c_0) < 0$, so that the new figure-8 appears past the fold bifurcation, according to Remark \ref{remark-split}. 

\begin{table}[htb!]
	\centering
	\renewcommand{\arraystretch}{1.3}
	\begin{tabular}{c|c|c|c|c}
        $s_0$ & $c_0$ & $\langle1,v_0\rangle$ & $\delta_0$ & $\mathcal{N}(c_0) - \frac{5c_0}{4}\mathcal{P}(c_0)$ \\ \hline
        $0.13875329$  & $1.09295138$  & $0.34537$ & $-5.130256\times10^2$ & $-0.26372$\\
	\end{tabular}
	\caption{Parameters of the first nontrivial figure-$8$ bifurcation.}
	\label{tab:fold}
\end{table}

Figure \ref{fig:mno1} shows the instability bands of the spectral problem (\ref{mod-Bab-eq}) near the origin in the spectral $\lambda$-plane obtained numerically. The three panels correspond to three consequent values of steepness before, at, and after the critical steepness. The 
red lines display the figure-$8$ spectral bands obtained from the normal form (\ref{char-eq-fourth-normal-form}) with the excellent agreement with the spectral bands shown in blue dots.

\begin{figure}[htb!]
    \centering
    \includegraphics[width=0.95\linewidth]{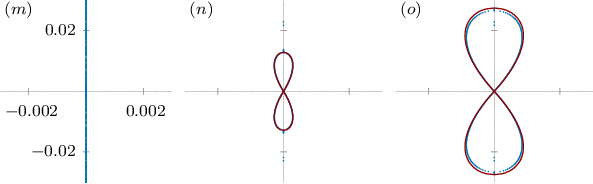}
    \caption{The first instance of the figure-$8$ bifurcation, which 
    	correspond to panels (m)--(o) of Figure \ref{fig-bif} zoomed near the origin. The wave steepness is chosen as $s=0.13875329$ (left panel),  $s = 0.13877279$ (middle panel), and $s = 0.13879503$ (right panel). The red lines show the approximatioins obtained from the normal form ~\eqref{char-eq-fourth-normal-form}. 
 }
    \label{fig:mno1}
\end{figure}

\section*{Acknowledgements} RM and DP acknowledge the support of the Australian Research Council under grant DP210101102. RM also acknowledges J. Bronski for helpful and insightful discussions.

\appendix 
\section{Figure-$8$ in the small-amplitude limit}
\label{app-A}

Here, we recover the figure-$8$ in the small-amplitude limit near $c = 1$, 
which was determined in \cite{creedon2022ahigh} and justified in \cite{berti2022full}. The corresponding result is given by the following theorem. 

\begin{theorem}
	\label{th-appendix}
	Let the Stokes wave be given by the asymptotic expansions  (\ref{eta-small}) and (\ref{c-small}), where $a$ is the small amplitude parameter. There exist  spectral bands of the modulational stability problem (\ref{mod-Bab-eq}) crossing $\lambda = 0$ according to roots of the characteristic equation 
	\begin{equation}
	\left( \lambda - \frac{i \mu}{2} \right)^2 = \frac{1}{64} \mu^2 (8 a^2 - \mu^2) + \mathcal{O}(a^6 + \mu^6).
	\label{fig-8-small}
	\end{equation}
\end{theorem}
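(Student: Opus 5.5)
We will treat (\ref{mod-Bab-eq}) as a two-parameter perturbation problem in $(a,\mu)$ near $(0,0)$, with $\lambda$ small. At $a=0$ we have $\eta=0$, $c=1$, $\mathcal{M}_\pm=\mathcal{M}^*_\pm=1$ and $\mathcal{L}=\mathcal{K}-1$. The problem is then diagonal in Fourier modes, and the unperturbed dispersion relation comes from the system $\mathcal{K}_\mu w=\lambda v$, $(\mathcal{K}_\mu-1)v=\lambda(w-2\mathcal{H}_\mu v)$.

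The zero eigenvalue at $\mu=0$ is carried by the modes $1$ and $e^{\pm iu}$. The constant mode $(0,1)$ and its generalized partner give, as in the subspace $(a_1,a_2)=(0,1)$ of Theorem~\ref{theorem-bif-1}, two bands with $\lambda=\mathcal{O}(|\mu|^{1/2})$ on $i\R$. These decouple, as in that proof, so we discard them.

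The relevant bands come from the modes $e^{\pm iu}$, which carry $\eta'\sim -a\sin u$ and the even partner $\cos u$ (the near-kernel of $\mathcal{L}$ at small $a$). Following Section~\ref{sec-case-4}, we reduce to a $2\times2$ matrix problem on $\mathrm{span}\{e^{iu},e^{-iu}\}$, with both components $(\hat v,\hat w)$ tracked. A cleaner route is to write $\hat v=\alpha e^{iu}+\beta e^{-iu}+\dots$ and eliminate $\hat w$ through the first equation, since $\mathcal{K}_\mu$ is invertible on these modes. This gives an effective scalar operator $\mathcal{L}_\mu-\lambda^2\mathcal{M}^*\mathcal{K}_\mu^{-1}\mathcal{M}+2c\lambda\mathcal{H}_\mu$ acting on $\hat v$, up to commutation corrections.

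Next we project onto $e^{\pm iu}$ and expand all entries in powers of $a$, $\mu$ and $\lambda$, scaling $\mu=\mathcal{O}(a)$ and $\lambda=\mathcal{O}(a)$. We substitute (\ref{eta-small})--(\ref{c-small}) into $\mathcal{L}$ and $\mathcal{M}$ up to $\mathcal{O}(a^2)$. Coupling of $e^{iu}$ to $e^{-iu}$ through $\eta\sim a\cos u$ occurs only via the intermediate modes $1$ and $e^{\pm2iu}$. So, in Lyapunov--Schmidt fashion, we solve for the components on $1$ and $e^{\pm 2iu}$ to first order in $a$ and feed them back, which produces $\mathcal{O}(a^2)$ diagonal and off-diagonal entries.

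The diagonal entries contain the free symbols at $\mu$ small: for $e^{iu}$ with $\mu>0$, $\mathcal{K}_\mu=1+\mu$, and the dispersion relation $(\lambda+i c(1+\mu))^2=-(1+\mu)$ is expanded. This produces a frequency $\lambda\approx i\mu/2$ (group velocity) and a curvature term $\mathcal{O}(\mu^2)$ for both the $+$ and $-$ modes, which are conjugate under Lemma~\ref{lem-mod-spectrum}. The $2\times2$ determinant then has the form
$$
\Bigl(\lambda-\tfrac{i\mu}{2}\Bigr)^2-\tfrac{1}{64}\mu^2\bigl(8a^2-\mu^2\bigr)=0
$$
at leading order, with the cross term $a^2\mu^2$ coming from the off-diagonal coupling.

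The main obstacle is the bookkeeping of the $\mathcal{O}(a^2)$ coupling coefficients. They must produce exactly $8a^2\mu^2/64$, with the $\mathcal{O}(a^2)$ frequency shift absorbed by $c^2=1+a^2$. This cancellation of the $a^2\lambda$-type terms is the Stokes-correction consistency check.

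We would first verify the coefficients by evaluating the determinant symbolically. We would check against the known Benjamin--Feir result, whose growth-rate maximum at $\mu=2a$ is consistent with $8a^2-\mu^2$ in deep-water units. Justification of the remainder $\mathcal{O}(a^6+\mu^6)$ then follows from analyticity in $(a,\mu,\lambda)$ (Remark~\ref{rem-anal}) and the Weierstrass preparation argument of Section~\ref{sec-1-4}. Symmetry of the bands (Lemma~\ref{lem-mod-spectrum}) forces the remainder to respect the same parity, so odd-order corrections vanish.
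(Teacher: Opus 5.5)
Your setup is the paper's method in different coordinates. Both use the scaling $\mu=a\mu_1$, $\lambda=a\lambda_1+a^2\lambda_2$, a Lyapunov--Schmidt reduction onto the first harmonics, and a $2\times 2$ determinant; you use $e^{\pm iu}$ and eliminate $\hat w$, while the paper uses $\sin u,\cos u$ and keeps $\hat w$.

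\textbf{The gap.} You never compute the $\mathcal{O}(a^2)$ entries of that matrix, and they are the whole content of the theorem. The free dispersion relation gives only the $a=0$ part $-\mu^4/64$. The Benjamin--Feir term $a^2\mu^2/8$ is asserted and then ``checked'' against the known answer, which is circular.

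\textbf{The missing computation.} In the paper's notation:
\begin{itemize}
\item At $\mathcal{O}(a)$, the mean of the first equation forces $a_3=0$. The first-harmonic projection of the second equation gives $\lambda_1=i\mu_1/2$, and then $\hat v_1=2a_1\sin(2u)+2a_2\cos(2u)-a_2$.
\item At $\mathcal{O}(a^2)$, $\mathcal{L}_1\hat v_1+\mathcal{L}_2\hat v_0$ has zero $\sin u$-component and $\cos u$-component $-2a_2$. Adding the first-harmonic part $\lambda_1^2\hat v_0$ of $\lambda_1\hat w_1$ and the term $-2\lambda_2\mathcal{H}\hat v_0$ gives the matrix $\begin{pmatrix}\lambda_1^2 & 2\lambda_2\\ -2\lambda_2 & \lambda_1^2+2\end{pmatrix}$, hence $\lambda_2^2=-\frac{1}{4}\lambda_1^2(\lambda_1^2+2)$.
\end{itemize}

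\textbf{This computation contradicts two of your heuristics.}
\begin{itemize}
\item The constant mode cannot be discarded as an intermediate mode. The mean $-a_2$ of $\hat v_1$ returns through $\mathcal{L}_1 1=-2\cos u$ as $+2a_2\cos u$. Without it you would get $\lambda_1^2+4$, i.e.\ $16a^2$ instead of $8a^2$.
\item No frequency shift is absorbed by $c^2=1+a^2$. In your $e^{\pm iu}$ basis the $\mathcal{O}(a^2)$ coupling is $\begin{pmatrix}1&1\\1&1\end{pmatrix}$, and the determinant is $(\lambda_1^2+1)^2+4\lambda_2^2-1$. So the diagonal shift is what produces the $a^2\mu^2$ term (via $2\lambda_1^2$), and the equal off-diagonal entry only cancels the term that would survive at $\lambda=\mu=0$. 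This rank-one structure is the real consistency check: the second-order Stokes data in (\ref{eta-small})--(\ref{c-small}), namely the $\cos(2u)$ term, the mean $-\frac{1}{2}a^2$ and $c^2=1+a^2$, keep $\eta'\propto\sin u$ in the kernel.
\end{itemize}

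\textbf{Two further errors.}
\begin{itemize}
\item The sign in your free dispersion relation is wrong. With $\hat{\mathcal{H}}_n=i\,{\rm sgn}(n)$, the $e^{iu}$-mode at $a=0$ gives $(\lambda-ic(1+\mu))^2=-(1+\mu)$. The relation you wrote has small root $\approx -i\mu/2$, not $i\mu/2$.
\item The symmetries of Lemma~\ref{lem-mod-spectrum} do not remove odd-order corrections. They relate the two analytic branches $\mu>0$ and $\mu<0$, or a root $\lambda$ to $-\bar\lambda$; they do not constrain fifth-order terms on one branch. Already at $a=0$, the exact roots $\lambda=i(1+\mu-\sqrt{1+\mu})$ and $\lambda=i(\mu-1+\sqrt{1-\mu})$, $\mu>0$, satisfy $(\lambda-i\mu/2)^2=-\mu^4/64\pm\mu^5/64+\mathcal{O}(\mu^6)$. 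So no parity argument yields a sixth-order remainder with the centre fixed at $i\mu/2$. The paper does not attempt one: its expansion to $\lambda_2$ fixes the leading-order equation, and persistence of the non-degenerate roots is taken from the perturbation theory it cites.
\end{itemize}
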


\begin{proof}
By using (\ref{eta-small}) and (\ref{c-small}), we write 
\begin{align*}
    \mathcal{L} &= \mathcal{K} - 1 + a \mathcal{L}_1 + a^2 \mathcal{L}_2 + \mathcal{O}(a^3), \\
    \mathcal{L}_{\pm} &= \mathcal{H}_{\pm} + a \mathcal{L}_{\pm}^{(1)} + \mathcal{O}(a^2), \\
    \mathcal{M}_{\pm} &= 1 + a \mathcal{M}_{\pm}^{(1)} + \mathcal{O}(a^2), 
\end{align*}
with 
\begin{align*}
    \mathcal{L}_1 &= -\cos(u) - \cos(u) \mathcal{K}[\cdot] - \mathcal{K} [\cos(u) \cdot ], \\
    \mathcal{L}_2 &= \mathcal{K} - 2 \cos(2u) - \left[\cos(2u) - \frac{1}{2} \right] \mathcal{K}[\cdot] - \mathcal{K}[\cos(2u) \cdot - \frac{1}{2} \cdot],
\end{align*}
and 
\begin{align*}
\mathcal{L}_{\pm}^{(1)} &= -\cos(u) \mathcal{H}_{\pm} - \mathcal{H}_{\pm} [\cos(u) \cdot], \\
\mathcal{M}_{\pm}^{(1)} &= \cos(u) - \sin(u) \mathcal{H}_{\pm},
\end{align*}
where we have used $\mathcal{K} \cos(nu) = n \cos(nu)$, $n \in \mathbb{N}$. For solutions of the modulational stability problem (\ref{mod-Bab-eq}) for 
$(\mu,\lambda)$ near $(0,0)$, we use the scaling 
\begin{equation}
\label{expansion-app-exp}
\mu = a \mu_1, \quad \lambda = a \lambda_1 + a^2 \lambda_2 + \mathcal{O}(a^3), 
\end{equation}
and the expansion
\begin{equation}
\label{expansion-app}
	\left( \begin{array}{c} \hat{v} \\ \hat{w} \end{array} \right) = \left( \begin{array}{c}\hat{v}_0 \\ \hat{w}_0 \end{array} \right) + a \left( \begin{array}{c} \hat{v}_1 \\ \hat{w}_1 \end{array} \right) + a^2 \left( \begin{array}{c} \hat{v}_2 \\ \hat{w}_2 \end{array} \right) + \mathcal{O}(a^3),
\end{equation}
with $\hat{v}_0 = a_1 \sin(u) + a_2 \cos(u)$ and $\hat{w}_0 = a_3$
for arbitrary $(a_1,a_2,a_3) \in \mathbb{R}^3$. 

\underline{\bf Order  $\mathcal{O}(a)$:} We get the linear inhomogeneous system 
\begin{equation}
\left\{ \begin{array}{ll}
\mathcal{K} \hat{w}_1 &=  \lambda_1 \hat{v}_0 + i \mu_1 \mathcal{H}_{\pm} \hat{w}_0, \\
(\mathcal{K}-1) \hat{v}_1 + \mathcal{L}_1 \hat{v}_0 &= \lambda_1 (\hat{w}_0 -  2 \mathcal{H}_{\pm} \hat{v}_0) + i \mu_1 \mathcal{H}_{\pm} \hat{v}_0.
\end{array}
\right. 
\label{level-one-app}
\end{equation}
Projecting the first equation of system (\ref{level-one-app}) to ${\rm Ker}(\mathcal{K}) = {\rm span}(1)$ yields $a_3 = 0$, after 
which we obtain the exact solution 
$$
\hat{w}_1(u) = \lambda_1 (a_1 \sin(u) + a_2 \cos(u)) + \tilde{a}_3,
$$
where $\tilde{a}_3 \in \mathbb{R}$ is a new parameter. The second equation of system (\ref{level-one-app}) 
can be rewritten in the explicit form 
$$
(\mathcal{K}-1) \hat{v}_1 = (i \mu_1 - 2 \lambda_1) [a \cos(u) - a_2 \sin(u)] + 2 a_1 \sin(2u) + 2 a_2 \cos(2u) + a_2,
$$
where we have used $\mathcal{H} \cos(nu) = -\sin(nu)$ and $\mathcal{H} \sin(nu) = \cos(nu)$, $n \in \mathbb{N}$. 
Projecting this equation to ${\rm Ker}(\mathcal{K}-1) = {\rm span}(\sin(u),\cos(u))$ yields 
\begin{equation}
\label{lambda-1-app}
\lambda_1 = \frac{i \mu_1}{2}, 
\end{equation}
after which we obtain the exact solution 
$$
\hat{v}_1(u) = 2 a_1 \sin(2u) + 2 a_2 \cos(2u) - a_2. 
$$

\underline{\bf Order  $\mathcal{O}(a^2)$:} We get the linear inhomogeneous system 
\begin{equation}
\left\{ \begin{array}{ll}
\mathcal{K} \hat{w}_2 &=  \lambda_2 \hat{v}_0 + \lambda_1 (\hat{v}_1 + \mathcal{M}_{\pm}^{(1)} \hat{v}_0) 
+ i \mu_1 \mathcal{H}_{\pm} \hat{w}_1, \\
(\mathcal{K}-1) \hat{v}_2 + \mathcal{L}_1 \hat{v}_1 + \mathcal{L}_2 \hat{v}_0 &= - 2 \lambda_2 \mathcal{H}_{\pm} \hat{v}_0 
+ \lambda_1 (\hat{w}_1 -  2 \mathcal{H}_{\pm} \hat{v}_1) 
+ i \mu_1 (\mathcal{H}_{\pm} \hat{v}_1 + \mathcal{L}_{\pm}^{(1)} \hat{v}_0),
\end{array}
\right. 
\label{level-two-app}
\end{equation}
where we have used that $\hat{w}_0 = 0$. Since 
$$
\hat{v}_1 + \mathcal{M}_{\pm}^{(1)} \hat{v}_0 = 2 a_1 \sin(2u) + 2 a_2 \cos(2u),
$$
projecting the first equation of system (\ref{level-two-app}) to ${\rm Ker}(\mathcal{K}) = {\rm span}(1)$ yields $\tilde{a}_3 = 0$, after which 
we obtain the exact solution 
$$
\hat{w}_2(u) = (\lambda_2 a_1 - i \mu_1 \lambda_1 a_2) \sin(u) + (\lambda_2 a_2 + i \mu_1 \lambda_1 a_1) \cos(u) 
+ \lambda_1 [a_1 \sin(2u) + a_2 \cos(2u)] + \tilde{\tilde{a}}_3,
$$
where $\tilde{\tilde{a}}_3 \in \mathbb{R}$ is a new parameter. For the second equation of system (\ref{level-two-app}), we compute 
\begin{align*}
    \mathcal{L}_1 \hat{v}_1 + \mathcal{L}_2 \hat{v}_0 &= -4 a_1 [\sin(3u) + \sin(u)] - 4 a_2 [\cos(3u) + \cos(u)] + 2 a_2 \cos(u) \\
    & \quad + 2 [a_1 \sin(u) + a_2 \cos(u)] - 2a_1 [\sin(3u)-\sin(u)] - 2 a_2 [\cos(3u) + \cos(u)]
\end{align*}
whereas $\mathcal{H}_{\pm} \hat{v}_1$ and $\mathcal{L}_{\pm}^{(1)} \hat{v}_0$ only return projections to ${\rm span}(1,\sin(2u),\cos(2u))$. 
Projecting the second equation of system (\ref{level-two-app}) to ${\rm Ker}(\mathcal{K}-1) = {\rm span}(\sin(u),\cos(u))$ yields the linear system 
$$
\left[ \begin{matrix} \lambda_1^2 & 2 \lambda_2 \\ -2 \lambda_2 & \lambda_1^2 + 2 \end{matrix} \right] 
\left[ \begin{matrix} a_1 \\ a_2 \end{matrix} \right] = \left[ \begin{matrix} 0 \\ 0 \end{matrix} \right], 
$$
which admits a nontrivial solution for $(a_1,a_2) \in \mathbb{R}^2$ if and only if 
\begin{equation}
\label{lambda-2-app}
\lambda_2^2 = -\frac{1}{4} \lambda_1^2 (2 + \lambda_1^2) = \frac{1}{64} \mu_1^2 (8 - \mu_1^2). 
\end{equation}
Combining (\ref{lambda-1-app}) and (\ref{lambda-2-app}) into (\ref{expansion-app-exp}) yields the characteristic equation (\ref{fig-8-small}). 
Since the spectral bands given by (\ref{fig-8-small}) are non-degenerate, 
persistence of the asymptotic expansion (\ref{expansion-app}) follows from \cite{Hin91,Hor73,Kir92}.
\end{proof}

\begin{remark}
    There exist two additional spectral bands of the modulational stability problem (\ref{mod-Bab-eq}) near the origin associated with the subspace  ${\rm Ker}(\mathcal{K}) = {\rm span}(1)$. Similar to the proof of the second part of Theorem \ref{theorem-bif-1}, the two spectral bands are subsets of $i \mathbb{R}$ associated with the scaling $\mu = a^2 \mu_2$ and $\lambda = a \lambda_1 + \mathcal{O}(a^2)$.
\end{remark}

\section{Numerical methods}
\label{app-B}

For numerical approximations, we compute the profile of the smooth Stokes waves 
from the pseudo--differential equation (\ref{trav-Bab-eq}) and solve the modulational stability problem (\ref{mod-Bab-eq}). 

\subsection{Finding the Stokes wave}
\label{app-B-1}

Finding Stokes waves to high accuracy is a delicate problem that has been studied in detail~\cite{dyachenko2022almost,dyachenko2016branch,korotkevich2022superharmonic}. 
The goal is to determine a nontrivial pair $(c,\eta(u))$, that satisfies the pseudo-differential equation,
\begin{align}
    \left(c^2 \mathcal{K} - 1 \right) \eta - \left( \eta \mathcal{K}\eta + \frac{1}{2}\mathcal{K} \eta^2 \right) = 0.
    \label{trav-Bab-eq2}
\end{align}
Due to the $2\pi$-periodicity and evenness of $\eta = \eta(u)$, we use the truncated Fourier series,
\begin{align}
\label{truncated-Fourier}
    \eta(u) = \sum\limits_{|k| \leq N/2} \hat{\eta}_k e^{iku},
\end{align}
with $\hat{\eta}_k \in \mathbb{R}$, $\forall k$, and $N$ is the number of Fourier modes in the truncated approximation (\ref{truncated-Fourier}). 
The problem (\ref{trav-Bab-eq2}) is solved by a Newton's method coupled with a minimal residual (MINRES) method. The key idea is to take an approximate solution from the linear limit, or a previously computed solution pair, given by $(c^{(j)},\eta^{(j)}(u))$, and apply the continuation method in the $c$ parameter. The wave speed is adjusted to $c^{(j+1)} = c^{(j)} + \delta c^{(j)}$, where $\delta c^{(j)}$ is variable, and the solution, $\eta^{(j)}(u)$, from the preceding step, $j$, is used to initialize Newton iterations on equation~\eqref{trav-Bab-eq2}. After convergence we obtain $(c^{(j+1)},\eta^{(j+1)})$ with a prescribed tolerance, and the next step of continuation method is applied. Once the Stokes wave is found, we define its steepness, $s = (\eta(0) - \eta(\pi))/2\pi$, or equivalently the ratio of crest to trough height to one wavelength, as well as its integral characteristics, like $\mathcal{H}(c)$, $\mathcal{P}(c)$, and $\mathcal{N}(c)$. It becomes convenient to present speed and integral characteristics as functions of $s$ due to their oscillations in $s$.



The method of continuation generally fails at a fold point $c_0 = c(s_0)$, where the linearized operator $\mathcal{L} : H^1_{\rm per} \subset L^2_{\rm per} \to L^2_{\rm per}$ is not invertible on the space of even functions, associated with the bifurcation in Section \ref{sec-6}. In any small neighborhood of $|c-c_0|\leq \delta $, 
there may exist two distinct solutions of equation~\eqref{trav-Bab-eq2} with the same value of speed. This technical difficulty is mitigated by perturbing initial guess for Newton's method, and allows to cross fold points by exploiting basins of attraction of solutions on both sides of the fold point. 

\subsection{Finding eigenvalues}
\label{app-B-2}

Once the Stokes wave has been determined, its linear stability is determined by the eigenvalues of the modulational stability problem 
\begin{align}
\begin{cases}    
\mathcal{K} w = \lambda \mathcal{M} v, \\
\mathcal{L} v = \lambda \left( \mathcal{M}^*w - 2c\mathcal{H}v \right),
\end{cases}
\label{mod-stab-app}
\end{align}
where $\mathcal{L}$ is the linearized Babenko operator~\eqref{lin-Bab-eq}, 
$(v,w) \in H^1(\R) \times H^1(\R)$ is an eigenfunction,
and $\lambda$ is the eigenvalue. By Fourier--Floquet theory, the spectral problem  ~\eqref{mod-stab-app} can be solved by using the truncated Fourier series,
\begin{align}
\begin{cases}
    w(u) = e^{i\mu u} \sum\limits_{|k| \leq N/2} \hat{w}_k e^{iku}, \\
    v(u) =  e^{i\mu u} \sum\limits_{|k| \leq N/2} \hat{v}_k e^{iku},
    \end{cases}
    \label{eigen-truncated}
\end{align}
where the Floquet parameter, $\mu\in[-\frac{1}{2},\frac{1}{2})$ is introduced. We discretize the interval $\mu\in[-\frac{1}{2}, \frac{1}{2}]$ on a grid to approximate $\lambda_j =\lambda_j(\mu)$ for each eigenvalue of~\eqref{mod-stab-app} out of the handful of eigenvalues nearest to zero in the $\lambda$-plane. We refer the reader to~\cite{dyachenko_semenova2022,dyachenko2023quasiperiodic} for a complete recipe of the numerical solution of~\eqref{mod-stab-app} with (\ref{eigen-truncated}), but simply note that the eigenvalue problem is solved via matrix-free orthogonal projection method, where the linear system is solved via minimal residual (MINRES) iterations.

\bibliographystyle{plain}

\bibliography{ModStabStokesClean}

\end{document}